\newcommand{\barX}{\overline{X}}
\newcommand{\bartau}{\overline{\tau}}
\newcommand{\bfi}{\mathbf{i}}
\theoremstyle{definition}
\newtheorem{theorem}{Theorem}[section]
\newtheorem{remark}[theorem]{Remark}
\newtheorem{lemma}[theorem]{Lemma}
\newtheorem{proposition}[theorem]{Proposition}
\newtheorem{corollary}[theorem]{Corollary}
\newtheorem{definition}[theorem]{Definition}
\newtheorem{example}[theorem]{Example}
\date{}
\begin{document}

\title{The c-completion of Lorentzian metric spaces}
\author[1]{Saúl Burgos}
\author[2]{José L. Flores}
\author[3]{Jónatan Herrera}
\affil[1]{\small{{\textit{Departamento de Geometría y Topología, Facultad de Ciencias $\&$ IMAG (Centro de Excelencia María de Maeztu)\\
				Universidad de Granada, 18071 Granada, Spain.}}}}
\affil[2]{\small{\textit{Departamento de \'Algebra, Geometr\'{i}a y Topolog\'{i}a\\ Facultad de Ciencias, Universidad de M\'alaga\\ Campus Teatinos, 29071 M\'alaga, Spain.}}}
\affil[3]{\small{\textit{Departamento de Matemáticas, Edificio Albert Einstein, Universidad de Córdoba\\ Campus de Rabanales, 14071 Córdoba, Spain.}}}



\maketitle

\begin{abstract}
   Inspired by some Lorentzian versions of the notion of metric and length
   space introduced by Kunzinger and Sämman \cite{KSlls}, and more recently,
   by Müller \cite{Muller}, and Minguzzi and Sühr \cite{lorentzmetric}, we
   revisit the notion of Lorentzian metric space in order to later construct
   the c-completion of these general objects.  We not only prove that this
   construction is feasible in great generality for these objects, including
   spacetimes of low regularity, but also endow the c-completion with a
   structure of Lorentzian metric space by itself. We also prove that the
   c-completion constitutes a well-suited extension of the original space,
   which really completes it in a precise sense and becomes sensible to
   certain causal properties of that space.
\end{abstract}

\section{Introduction}

   For several decades there has been interest in developing a low-regularity
   approach to General Relativity. It was in 1967 when Kronheimer and Penrose
   \cite{KronPen67} introduced the concept of {\em causal set}, i.e. a 4-tuple
   of the form $(X,\ll,\leq,\rightarrow)$ where $X$ is a set and
   $\ll,\leq,\rightarrow$ are reltions in $X$ playing the role of the
   classical chronological, causal and horismotic relations for spacetimes,
   respectively.

   A similar approach was considered by Bombelli, Lee, Meyer and Sorkin in
   \cite{BomLeeMeySor87}. In that work, motivated by quantum field theory and
   the existence of singularities in spacetimes, the authors introduce a
   notion of causal set with significant differences with respect to the one
   introduced by Kronheimer and Penrose. In fact, this notion of causal set
   only considers an ordering relation $\prec$, that plays the role (in a
   generic way) of the causal relation. In addition, and guided by the idea
   that the models of interest are discrete, they include the finite charater
   of the cardinality of the causal diamonds among their hypotheses.

   Years later, in the context of the future causal boundary of spacetimes,
   Harris \cite{harrisuniversality} introduces the notion of chronological
   set. Similar to the approach by Bombelli et al, Harris' definition only
   requires a set and an ordering relation. However, given that the
   construction of the future causal boundary requires the chronological
   relation, the ordering relation does not coincide with the previous one.
   Harris' work was later extended by one of the authors in
   \cite{flores:revisited}, in the context of the total causal boundary, that
   is, when one considers the future and past boundaries simultaneously, and
   their non-trivial relations.

   Even though these approximations are pretty general (they are applicable to
   discrete models), they also become convenient for situations of greater
   regularity. However, as showed in \cite{piotrc0, futurenotopen, ling2020},
   the low-regular causality theory presents a new range of phenomena: there
   exist causal bubbles, the push-up property may not hold, the chronological
   future and past may not be open or even may depend on the class of curves
   used to define them.

   Recently, the interest in low-regularity approximations to General
   Relativity have received a strong boost thanks in part to the detection of
   gravitational waves \cite{LIGO16} and the existence of more precise
   observations of black holes \cite{EventHor19}. In addition, alternative
   versions of the singularity theorems have been obtained for $C^1$ and
   $C^{1,1}$ spacetimes \cite{kunstein2015, graf2018, graf2020}. This new
   impulse does not seek an approximation of discrete nature (as the ones
   mentioned before), on the contrary it looks for an approximation with a
   continuous structure. The starting point of this new boost is found in the
   works by Kunzinger and S\"amann \cite{KSlls}. Inspired by the classical
   theory of Riemannian length spaces, in these papers the authors define the
   concept of {\em Lorentzian length space} in a precise way. Their approach
   allows them to re-obtain several definitions and results from the theory of
   (smooth) spacetimes, like several steps of the causal ladder and their
   logical relations (the remaning steps were studied in \cite{ACS}).

   Since that work, different authors have made further contributions in the
   field of low-regularity Lorentzian geometry. For instance, M\"uller in
   \cite{Muller,Muller2} have introduced several categories of synthetic
   objects related with Lorentzian length spaces. Concretely, he has
   introduced the notions of \emph{almost pre-length space} and \emph{ordered
   measure space}. The former is defined by a set $X$ endowed with a function
   $\tau$ playing the role of a Lorentzian distance. The function $\tau$ is
   also assumed to be lower-semicontinuous under some appropriate topology.
   The latter is given by a set $X$ endowed with a relation $\leq$, which
   plays the role of the causal relation, joined to a measure ${\rm Vol}$. In
   his works, he has analyzed the relation between both definitions from a
   categorical perspective. He has also defined, following Noldus' ideas in
   \cite{Noldus,Noldus2}, an analog of the Gromov-Hausdorff distance, being
   able to obtain in such categories some pre-compactness results under mild
   hypotheses.

   In parallel, but independently, Minguzzi and Suhr have also introduced a
   notion for Gromov-Hausdorff distance \cite{lorentzmetric}. In their
   approach, they consider the notion of \emph{bounded Lorentzian metric
   space}. This object can be seen as a subcategory of (almost) pre-length
   spaces, since it also includes a set $X$ endowed with a function $\tau$.
   However, these authors impose some further restrictions. For instance, they
   consider a more restricted topology, since the function $\tau$ is required
   to be continuous. They also require a compactness condition for certain
   sets defined in terms of $\tau$. As in M\"uller's works, they also obtain
   (pre-)compactness results for families of bounded-Lorentzian metric spaces
   satisfying mild conditions.

   A promising tool in the context of non-regular spaces is the  causal
   boundary construction. Recall, for instance, the central role played by
   this construction to refine notions like the null infinite or the black
   hole in the regular case \cite{CosFloHer}. These definitions rely on the
   notion of completeness of lightlike geodesics, which can be translated to
   the non-regular context by using the Lorentzian distance (see
   \cite{GKSinext}). Therefore, the extension of the Lorentzian distance to
   the causal boundary (or, more specifically, the causal completion) becomes
   a promising starting point to extend these definitions to non-regular
   spaces.

   A first step in the study of the causal boundary for pre-length spaces has
   been recently given by one of the authors in collaboration with Aké and
   Sol\'is \cite{ABS}. Although their work becomes useful in many situations,
   it is far from being totally general. In fact, they are involved just with
   the partial (future) causal boundary construction, and consequently, they
   are forced to restrict their attention to {\em globally hyperbolic}
   Lorentzian pre-length spaces.

   However, in order to get the most from the causal boundary construction in
   the context of non-regular (metric) spaces, it becomes essential to
   maintain the maximum generality in the process, at the same time that one
   endows the resulting construction with enough (metric) structure. The
   spirit of this paper is based on the pursuit of these competing goals.

   \smallskip

   More precisely, our aim here is threefold. First, we revisit the notion of
   \emph{Lorentzian metric space}. Our approach closely follows the notion of
   (almost) pre-length space, but without considering a concrete topology. In
   this way, the notion of Lorentzian metric space becomes general enough to
   include, not only almost pre-length spaces, but also bounded Lorentzian
   metric spaces and length spaces.

   Next, we show that, under some mild hypotheses, it is possible to define
   the (total) causal completion for Lorentzian metric spaces. We also prove
   that the Lorentzian distance extends naturally to the causal boundary, and
   consequently, the causal completion also inherits a structure of a
   Lorentzian metric space.

   Finally, we show that the causal completion becomes a well-suited extension
   of the original space. In particular, we prove that the causal boundary
   actually ``completes'' the original space and satisfies a sort of ``density
   property'' between points of original space and the boundary of any other
   extension (including those defined by Kunzinger and S\"amman in
   \cite{KSlls}). In addition, we provide some results showing the relations
   between the causal structure of the original Lorentzian metric space and
   its causal completion.

   \smallskip

   The paper is organized as follows. In Section \ref{sec:preliminares} we
   give some preliminaries about Lorentzian length spaces, according to the
   approach developed by Kunzinger and S\"amann \cite{KSlls}. We also recall
   some recent contributions on the subject provided by Müller \cite{Muller},
   and Minguzzi and Suhr \cite{lorentzmetric}.

   Inspired by these recent contributions, in Section
   \ref{sec:lorentzianmetricspaces} we formulate the notion of Lorentzian
   metric space, and deduce its main properties. We also recall some elements
   of the causal ladder needed throughout the manuscript.

   In Section \ref{sec:extensions} we adapt the notion of extension for
   Lorentzian length spaces to Lorentzian metric spaces. We also formalize
   what we understand by {\em completion} in this context.

   In Section \ref{sec:c-completion} we construct the c-completion of a
   Lorentzian metric space, showing that it inherits a structure of Lorentzian
   metric space. Moreover, we prove that this boundary construction verifies
   multiple satisfactory properties, like completeness and density with
   respect to other extensions.

   Finally, in Section \ref{sec:applications} we provide some results showing
   the relations between the causal structure of the original Lorentzian
   metric space and its causal completion. In particular, we obtain
   characterizations of global hyperbolicity and simple causality of a
   Lorentzian metric space in terms of its completion. We also infer
   information about the causal structure of the causal completion.

 \section{Preliminaries}\label{sec:preliminares}


This section is devoted to briefly summarize some central notions and basic definitions from the theory of Lorentzian length/metric spaces \cite{GKSinext,lorentzmetric}.

\smallskip

We begin with some fundamental concepts introduced by Kunzinger and S{\"a}mann in \cite{GKSinext} (see also \cite{ACS,KSlls}). First, we recall the notion of \emph{Lorentzian pre-length space}, where only the Lorentzian distance is considered (in addition to the causal and chronological relations), and then, we provide the full notion of \emph{Lorentzian length space}, which requires a number of mild conditions and includes in particular the notion of length for a suitable class of curves. 


\begin{definition}\label{def:KSlpls}
A {\em Lorenztian pre-length space} is a quintuple $(X,d,\ll, \leq, \tau)$ where
\begin{enumerate}[label=(\arabic*)]
\item $(X,d)$ is a metric space,
\item $\leq$ is a pre-order in $X$ (i.e., it is reflexive and transitive),
\item $\ll$ is a transitive relation contained in $\leq$.
\item $\tau: X \times X \to [0,\infty]$ is a lower-semicontinuous function satisfying
	\begin{itemize}
	\item (Positivity) $\tau (x,y) > 0$ if and only if $x\ll y$,
	\item (Reverse triangle inequality) $\tau (x,z) \geq \tau (x,y) + \tau (x,z)$ 	for all $x \leq y \leq z$.
	\end{itemize}
\end{enumerate}
\end{definition}
\noindent Note that, in contraposition to the theory of spacetimes, the chronological and causal relations, as well as the Lorentzian distance, are introduced axiomatically, preceding the notions of causal curves and their lengths, which are introduced below.

\begin{definition}\label{def:KScurves}
Let $(X,d, \ll, \leq, \tau)$ be a Lorentzian pre-length space. A {\em future directed causal} ({\em timelike}) curve is a non-constant locally Lipschitz map $\gamma : I (\subset \mathbb{R}) \to X(\equiv (X,d))$ with the property that $\gamma (s) \leq \gamma (t)$ ($\gamma (s) \ll \gamma (t)$) for all $s < t$. A future directed causal curve is called {\em null} if there are no points on the curve related with respect to $\ll$. {\em Past directed causal (timelike, null)} curves are defined analogously.
\end{definition}
\noindent This definition, even if very natural, does not recover (in general) the classical notions of causal and timelike curves for spacetimes. In fact, consider for instance the Lorentzian cylinder $(\mathbb{S}^1\times \mathbb{R}, -d\theta^2+dx^2)$. This spacetime is totally vicious, since any pair of points are chronologically related. Consequently, any non-constant locally Lipschitz map $\gamma:I\rightarrow X$ is a timelike curve according to previous definition. However, the curve $\gamma$, even if smooth, does not necessarily satisfy the classical condition of being timelike its velocity $\dot{\gamma}$.
%

This pathology does not constitute, however, an obstacle to assign a Lorentzian length to these curves.
\begin{definition} The $\tau$-{\em length} of a causal curve $\gamma:[a,b]\rightarrow X$ is given by
$$L_{\tau} (\gamma):= {\rm sup} \left\lbrace \sum_{i=0}^{N-1} \tau (\gamma (t_i) , \gamma (t_{i+1})) \right\rbrace,$$
where the supremum runs over all partitions $a = t_0 < t_1 < \cdots < t_{N-1} < t_N = b$ of $[a,b]$.
\end{definition}

There are many other classical notions that can be also defined in the context of pre-length spaces (chronological/causal future/past of a point/set, indecomposable past/future sets, common future/past of a set...). We postpone these definitions to Section \ref{qaz} (see also \cite[Chapter 14]{oneill} for classical definitions of such notions).

The relation $\ll$ allows us to define two natural topologies in terms of the future and past of points of $X$: (i) the \emph{coarsely extended Alexandrov topology}, or \emph{CEAT} for short, defined as the one generated by the subbasis of chronological futures and pasts of points $\left\{ I^{\pm}(x):\; x\in X  \right\}$ (compare with the \emph{chronological topology} in Definition \ref{def:chrtop}); (ii) the \emph{Alexandrov topology}, defined as the one generated by the subbasis of chronological diamonds $\left\{I(x,y):=I^+(x)\cap I^-(y):\; x,y\in X  \right\}$. The CEAT topology is, in general, finer than the Alexandrov one, but they both coincide if $(X,d,\ll,\leq,\tau)$ is strongly causal (see Definition \ref{def:causalityconditions}).

\smallskip

In order to establish the full definition of length space (according to \cite{KSlls}), we still need a number of preliminar notions.
\begin{definition}
\label{def:main:1}
Let $(X,d,\ll,\leq,\tau)$ be a Lorentzian pre-length space, and take $U\subset X$. We say that $U$ is:
\begin{enumerate}[label=(\roman*)]
\item {\em causally (timelike) path-connected} if for any pair of points $x,y\in U$ with $x\leq y$, there exists a causal (timelike) curve $\gamma$ in $U$ joining them. If $U=X$, we just say that the pre-length space is {\em causally path-connected};
\item {\em causally closed} if given two sequences $\left\{ p_n \right\},\left\{ q_n \right\}\subset U$ with $p_n\leq q_n$, $p_n\rightarrow p$ and $q_n\rightarrow q$, it follows that $p\leq q$. If for every point $x\in X$ there exists a neighbourhood $U\subset X$ of $x$ which is causally closed, we say that the pre-length space is {\em locally causally closed}.
\end{enumerate}
\end{definition}
\noindent We say that a Lorentzian pre-length space is \textit{localizable} if every point has a neighbourhood that ``resembles'' a convex neighbourhood. For further details on this definition we refer to \cite[Definition 3.16]{KSlls}.

\begin{definition}\label{def:KSlls}
A {\em Lorentzian length space} is a causally path connected, locally causally closed and localizable Lorentzian pre-length space $(X,d,\ll,\leq,\tau)$ such that the Lorentzian distance function $\tau$ satisfies
$$\tau (x,y) = \sup \left\lbrace L_{\tau} (\gamma): \gamma \text{ is a future causal curve from } x \text{ to } y \right\rbrace.$$
\end{definition}

\medskip

A quite different (and more general) synthetic approach to Lorentzian Geometry has been developed recently by M\"uller \cite{Muller} and Minguzzi and Suhr \cite{lorentzmetric}. They remove the \emph{length} requirement from the definition, making a development closer to metric spaces. More precisely, M\"uller defines an \emph{almost pre-length space} (ALL for short) as a pair $(X,\tau)$ given by a set of points $X$ plus an antireflexive function $\tau:X\times X\rightarrow [0,\infty)$ satisfying the reverse triangle inequality. Minguzzi and Suhr's approach refines the notion of ALL by introducing several additional conditions that better correspond to compact causally convex subsets of globally hyperbolic spacetimes and causets. Both notions are essentially included in the following definition: 


\begin{definition}\label{def:bounded} An \emph{almost pre-length space ALL} is a pair $(X, \tau)$ given by a set $X$ endowed with a map $\tau:X\times X \rightarrow [0, \infty)$ such that:
	\begin{enumerate}[label=(\roman*)]
\item for any $x,y,z\in X$ with $\tau(x,y)>0$, $\tau(y,z)>0$, it follows
  $\tau(x,z)\geq \tau(x,y) + \tau(y,z)$.
\end{enumerate}
An ALL is a \emph{bounded Lorentzian-metric space} if, in addition:
\begin{enumerate}[label=(\roman*)]
  \setcounter{enumi}{1}
\item for every $\epsilon>0$ the sets $\{(p, q): \tau(p,q)\geq \epsilon\}$
are compact with respect to the product topology $\sigma\times \sigma$, where $\sigma$ is the coarsest topology on $X$ for which $\tau$ is continuous;
\item \label{def:bounded:3} \label{def:bounded:iii} $\tau$ distinguishes points, i.e. for every pair $x, y$, $x\neq y$ there exists some $z\in X$ such that either $\tau(x,z)\neq \tau(y,z)$ or $\tau(z,x)\neq  \tau(z,y)$;
\item \label{def:bounded:4} there is a countable subset $S\subset X$ such that the point $z$ in (iii) can be found in $S$.
\end{enumerate}
The authors write $p\ll q$ if $\tau(p,q)>0$, and say that $p$ belongs to the chronological past of $q$ (or $q$ belongs to the chronological future of $p$). By the reverse triangle inequality, $p\ll q$ and $q\ll r$ imply $p\ll r$.
\end{definition}

\begin{remark}
\label{rem:main:3}
Some observations are in order:
\begin{enumerate}
\item The definition of ALL in this paper differs slightly from the original one in \cite{Muller}. In fact, neither we require $\tau$ to be antisymmetric nor its codomain to be $\mathbb{R}$. However, defined in this way, bounded metric spaces are a sub-case of ALL.
\item Condition \ref{def:bounded:4} of previous definition was included as part of the definition in a previous version of \cite{lorentzmetric}, but the authors have shown in later developments that \ref{def:bounded:4} can be deduced from the other conditions (see \cite[Proposition 1.10]{lorentzmetric}).
\item In \cite{harrisuniversality} Harris introduces the notion of separability for pre-chronological sets\footnote{I.e. a pair $(X,\ll)$ formed by a set $X$ and a pre-order $\ll$ on $X$.} in the following way: a pre-chronological set $(X,\ll)$ is \emph{separable} if there exists a countable subset $S\subset X$ such that, for any pair $x,y\in X$ with $x\ll y$, there exists a point $s\in S$ with $x\ll s\ll y$.

  Note however that this separability condition is logically independent from condition \ref{def:bounded:3}. In fact, consider the classical example of a non-causal but chronological spacetime given in \cite[Figure 6]{MinguzziSanchez}  (see Figure \ref{fig:2}). This spacetime is separable, but it does not satisfy \ref{def:bounded:3}: any pair of points $x,y$ in the circle satisfy that $\tau(\cdot,x)=\tau(\cdot,y)$ and $\tau(x,\cdot)=\tau(y,\cdot)$. 

  To disprove that condition \ref{def:bounded:3} implies separability, consider the subset $X$ of the $2$-dimensional Minkowski spacetime given by $X=\cup_{i=1,2}\mathbb{Q}\times \left\{ i \right\}$. Let $\tau$ be the Lorentzian distance for Minkowski spacetime, and consider the restriction $\tau|_X$. According to this definition, $\tau_X$ is a Lorentzian distance satisfying both \ref{def:bounded:3} and \ref{def:bounded:4} 
  However, there are no triples $x,y,z$ with $x\ll y\ll z$, and consequently, the separability condition is not satisfied.
\begin{figure}
 \centering
 \includegraphics[scale=0.4]{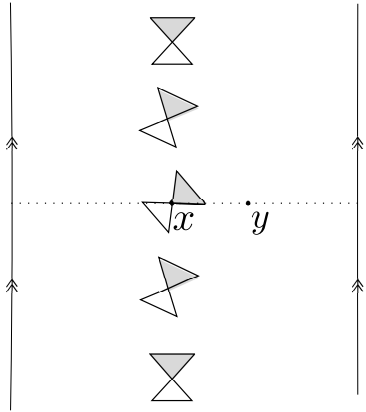}
 \caption{Consider an unbounded rectangle $M=[-1,1]\times \mathbb{R}/\sim$, where  $(-1,y)\sim (1,y)$ for all $y\in \mathbb{R}$, and endow it with a Lorentzian metric whose cones behave as depicted in the figure. The circle $\gamma:[-1,1]\rightarrow M$, $\gamma(t)=(t,0)$, is lightlike. In this spacetime (thus, a separable Lorentzian metric space), we have $I^{\pm}(\gamma(t_1))=I^{\pm}(\gamma(t_2))$ for any pair $t_1,t_2\in [-1,1]$. Therefore, $\tau$ cannot distinguish $\gamma(t_1)$ from $\gamma(t_2)$, and condition \ref{def:bounded:3} in Definition \ref{def:bounded} does not hold.}
 \label{fig:2}
\end{figure}

\end{enumerate}
\end{remark}

\section{Lorentzian Metric Spaces}\label{sec:lorentzianmetricspaces}

%

Our starting point consists of considering the common denominator of the models discussed in previous section in order to formalize a general notion of Lorentzian metric space, which allows us to analyze the minimum requirements necessary to construct the c-compltion of that space. More precesely, we consider a notion of Lorentzian metric space similar to Almost Lorentzian length spaces, but incluing two main differences: (i) we do not make any specific choice about the topology (although in the definition of ALL it does not appear, the author of \cite{Muller} defines a specific topology constructed from $\tau$); (ii) we remove the term \emph{length} from the definition to emphsize that our approach does not depend on the behavior of the curves in the space. Further requirements, like separability, are included later whenever they are strictly necessary.

\begin{definition} \label{def:lorentzianmetricspace}
	A {\em Lorentzian metric space} $(X,\sigma,\tau)$ is a 
        topological set $(X,\sigma)$ endowed with a function $\tau : X \times X \to [0,\infty]$, called the {\em Lorentzian distance function}, which satisfies the following properties:
	\begin{itemize}
		\item[(i)] $\tau$ is lower-semicontinuous with respect to $\sigma$;
		\item [(ii)] the reverse inequality holds, i.e.
			$$\tau (x,z) \geq \tau (x,y) + \tau (y,z)\qquad\hbox{if $\tau (x,y), \tau (y,z)>0$}.$$
		\end{itemize}
	In this setting two points $x,y\in X$ are said {\em chronologically related}, $x\ll y$, iff $\tau(x,y)>0$.
              \end{definition}
          Note that, in contraposition to pre-length spaces, we do not require the topology of the Lorentzian metric space to be metrizable. In fact, this condition is only useful when length of curves are considered, something which is not required for our study. Moreover, we believe that the metrizability of the topology associated to classical metric spaces is a peculiarity of the ``Riemannian character'' of these spaces, which is not justified a priori in the Lorentzian setting. On the other hand, our definition is very close to the notion of almost pre-length space. The only difference is that here a generic topology is axiomatically incorporated from the beginning, not derived from anything. There are two reasons for this: (1) it permits to formulate the lower semicontinuity property, (2) it allows us to choose the topology that best suits to each situation (see for instance the analysis in \cite{costa:addendum20}).

          

\begin{definition}\label{xuxu} Let $(X,\sigma,\tau)$ be a Lorentzian metric space. A {\em causal relation compatible with} $(X,\sigma,\tau)$ is a pre-order relation $\leq$ on $X$ such that 
\begin{itemize}	
\item[(i)]$x\leq y$ if $\tau(x,y)>0$ ($\leq$ {\em contains or extends} $\ll$), and 
\item[(ii)] $\tau (x,z) \geq \tau (x,y) + \tau (y,z)$ if $x\leq y\leq z$ ({\em reverse triangle inequality of} $\tau$ for $\leq$).
 \end{itemize}
\end{definition}


Given a Lorentzian metric space $(X,\sigma,\tau)$, there are many possibilities for $\leq$. The following causal relations are distinguished between them for being the ``minimum'' and ``maximum'' ones, respectively:
\begin{itemize}
	\item $x\leq_{0} y$ iff either $x=y$ or $\tau(x,y)>0$;
	\item $x \leq_{\tau} y$ iff $I^-(x)\subset I^-(y)$ and $I^+(y)\subset I^+(x)$ (firstly introduced in \cite{MinguzziSanchez}).
\end{itemize}  
We believe however that a canonical choice for $\leq$ with satisfactory results in all the cases does not exist. The eventual flaws of a generic choice for $\leq$ are particularly evident if we choose $\leq_{0}$, but they are also present for $\leq_{\tau}$. In fact,
consider the region $M=\mathbb{L}^2\setminus \left\{ (1,1) \right\}$ of Minkowski plane $\mathbb{L}^2$, which contains the points $(0,0)$ and $(2,2)$. These points are causally related in $\mathbb{L}^2$, since they can be connected by a lightlike curve. Moreover, the inclusions $I^+((2,2))\subset I^+((0,0))$ and $I^-((0,0))\subset I^-((2,2))$ clearly hold. However, these points are no longer related in $M$, even though previous inclusions still hold. Consequently, and following the notion of pre-length space of previous section, we renounce to specify a generic choice for $\leq$, and leave to each one the specific choice which better adapts to each case. Finally, in order to emphasize the points of our work where $\leq$ is actually needed, we have opted for getting the causal relation out of Definition \ref{def:lorentzianmetricspace}.

\begin{remark}
	There is a natural way to incorporate $\leq$ into Definition \ref{def:lorentzianmetricspace} via the domain of $\tau$, thus, without including additional objects. Instead of assuming that the domain of $\tau$ is $X\times X$, one can suppose that $\tau$ is defined on some subset $J\subset X\times X$ verifying: 
		\[
		(x,y), (y,z)\in J\;\Rightarrow \; (x,z)\in J.
		\] 
		Then, we write $x\leq y$ iff $(x,y)\in J$ (the relation $\ll$ remains defined as before).
		Note that the Lorentzian distance between pairs $(x,y)\not\in J$ is no longer defined according to this approach. (See \cite{McCann} for an alternative, but similar, way to incorporate $\leq$ to $\tau$).
\end{remark}

%
%

Next, we recall the following result, which analyzes the role of the lower-semicontinuity of $\tau$ and its relation with the topology $\sigma$ and the corresponding chronological relation (see \cite[Lemma 2.12]{KSlls}).


\begin{proposition}
\label{thm:main:1}
Let $(X,\sigma,\tau)$ be a Lorentzian metric space. If $x\ll y$ then there exists open sets $U,V$ with $x\in U,y\in V$ such that $p\ll q$ for any $p\in U$, $q\in V$. In particular, the space topology $\sigma$ is finer than the CEAT topology. 
\end{proposition}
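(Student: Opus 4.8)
The plan is to exploit the lower-semicontinuity of $\tau$ directly at the pair $(x,y)$. Since $x\ll y$ means $\tau(x,y)>0$, I can fix a value $0<\epsilon<\tau(x,y)$ and consider the sublevel condition $\tau>\epsilon$. The key observation is that lower-semicontinuity of $\tau$ with respect to the product topology $\sigma\times\sigma$ tells us that the set $\{(p,q):\tau(p,q)>\epsilon\}$ is open in $X\times X$. Because $(x,y)$ lies in this open set and the product topology has basic open sets of the form $U\times V$ with $U,V\in\sigma$, I can find open neighborhoods $U\ni x$ and $V\ni y$ with $U\times V\subset\{\tau>\epsilon\}$.

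First I would record the precise meaning of lower-semicontinuity here: $\tau:X\times X\to[0,\infty]$ is lower-semicontinuous with respect to $\sigma$, which I read as lower-semicontinuity with respect to the product topology $\sigma\times\sigma$ on the domain $X\times X$. This is equivalent to the statement that for every $c\in\mathbb{R}$ the preimage $\tau^{-1}((c,\infty])=\{(p,q):\tau(p,q)>c\}$ is open in $X\times X$. Applying this with $c=\epsilon$ gives an open set containing $(x,y)$, and then I extract the product neighborhood $U\times V$ from the definition of the product topology. For any $p\in U$ and $q\in V$ we then have $(p,q)\in U\times V\subset\{\tau>\epsilon\}$, so $\tau(p,q)>\epsilon>0$, which by definition means $p\ll q$. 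This establishes the first claim.

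For the second claim, I would recall that the CEAT topology is generated by the subbasis $\{I^\pm(x):x\in X\}$, so it suffices to show that every such chronological future or past is $\sigma$-open. Take $I^+(x)=\{q:x\ll q\}$ and let $y\in I^+(x)$, so $x\ll y$. Applying the first part of the proposition produces open sets $U\ni x$ and $V\ni y$ with $p\ll q$ for all $p\in U$, $q\in V$; in particular, taking $p=x$, every $q\in V$ satisfies $x\ll q$, so $V\subset I^+(x)$. Thus $V$ is a $\sigma$-open neighborhood of $y$ contained in $I^+(x)$, proving $I^+(x)$ is $\sigma$-open. The argument for $I^-(y)$ is symmetric. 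Since all subbasic CEAT-open sets are $\sigma$-open, $\sigma$ refines the CEAT topology.

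The only subtle point, which I would want to state carefully rather than treat as an obstacle, is the interpretation of lower-semicontinuity of a two-variable function: one must confirm that the intended meaning is joint lower-semicontinuity on $X\times X$ equipped with $\sigma\times\sigma$, as opposed to separate semicontinuity in each variable. Joint lower-semicontinuity is exactly what makes $\{\tau>\epsilon\}$ open in the product and hence what allows the extraction of the rectangular neighborhood $U\times V$; with only separate semicontinuity the conclusion would fail. Given the framework of the paper, where $\sigma$ is precisely the topology against which $\tau$ is continuous or lower-semicontinuous as a function on $X\times X$, this is the correct reading, and the rest of the proof is a routine unwinding of the definitions of the product topology and of the CEAT subbasis.
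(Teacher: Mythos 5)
Your proof is correct and follows essentially the same route as the paper: both use joint lower-semicontinuity of $\tau$ on the product $X\times X$ at the point $(x,y)$ to extract a rectangular neighbourhood $U\times V$ on which $\tau$ stays positive (the paper bounds it below by $\tau(x,y)/2$, you by an arbitrary $\epsilon<\tau(x,y)$, which incidentally handles the case $\tau(x,y)=\infty$ slightly more cleanly). Your explicit verification that the sets $I^{\pm}(x)$ are $\sigma$-open is a correct unwinding of the ``in particular'' clause that the paper leaves implicit.
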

\begin{proof}
Since $x\ll y$, we have by definition that $\tau(x,y)>0$. Then, by the lower semi-continuity of $\tau$, we can find open sets $U,V$ of $x$ and $y$ resp., such that  any pair of points $p\in U$ and $q\in V$ satisfies $\tau(p,q)\geq \tau(x,y)/2>0$. Hence, $p\ll q$, as desired.
\end{proof}




A key property which combines the chronological and causal relations is the so-called {\em push-up property}, i.e.,  for any $x,y,z\in X$ satisfying $x\leq y\ll z$ or $x\ll y\leq z$, necessarily $x\ll z$ (compare with \cite[Lemma 2.10]{KSlls}). The following result clarifies the logical relation between this property and the reverse triangle inequality (Definition \ref{xuxu} (ii)). But, first, let us introduce a definition:
\begin{definition}
	\label{def:main:6}
	A Lorentzian metric space $(X,\sigma,\tau)$ is \emph{chronologically dense} if, for every point $x\in X$ with $I^-(x)\neq \emptyset$ (resp. $I^+(x)\neq \emptyset$),   
	there exists a future (resp. past) chain $\varsigma$ with topological limit $x$. 
\end{definition}
\begin{proposition}
\label{thm:main:5}
Let $(X,\sigma,\tau)$ be a 
Lorentzian metric space, and denote by $\leq$ a pre-order relation in $X$ containing $\ll$. If the reverse triangle inequality of $\tau$ is satisfied for $\leq$, then the push-up property is also satisfied. The converse holds if $(X,\sigma,\tau)$ is chronologically dense. In particular, under this last hypothesis, condition (ii) in Definition \ref{xuxu} can be replaced by the push-up property. 
\end{proposition}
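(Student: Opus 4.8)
The plan is to treat the two implications separately, and to reduce the converse to the single genuinely nontrivial configuration by a case analysis on the positivity of $\tau(x,y)$ and $\tau(y,z)$.

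For the forward implication I would argue directly. Assuming the reverse triangle inequality holds for $\leq$, take a push-up configuration, say $x \ll y \leq z$. Since $\ll$ is contained in $\leq$, from $x \ll y$ we get $x \leq y$, hence $x \leq y \leq z$; Definition \ref{xuxu}(ii) then yields $\tau(x,z) \geq \tau(x,y) + \tau(y,z) \geq \tau(x,y) > 0$, so $x \ll z$. The configuration $x \leq y \ll z$ is symmetric: here $y \ll z$ gives $y \leq z$, so again $x \leq y \leq z$ and $\tau(x,z) \geq \tau(y,z) > 0$. This direction needs nothing beyond $\ll\,\subseteq\,\leq$ and the reverse triangle inequality.

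For the converse I would assume push-up together with chronological density, and fix $x \leq y \leq z$, the goal being $\tau(x,z) \geq \tau(x,y) + \tau(y,z)$. If $\tau(x,y) > 0$ and $\tau(y,z) > 0$ the inequality is already guaranteed by Definition \ref{def:lorentzianmetricspace}(ii), and if both vanish the right-hand side is $0$ and there is nothing to prove. So the heart of the matter is the mixed case, say $\tau(x,y) > 0 = \tau(y,z)$, where one only has the causal relation $y \leq z$ and must nonetheless recover the quantitative bound $\tau(x,z) \geq \tau(x,y)$. Here I would invoke chronological density: since $x \ll y$ we have $I^-(y) \neq \emptyset$, so there is a future chain $(y_n)$ approximating $y$ from its past, i.e. $y_n \ll y$ and $y_n \to y$. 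Push-up applied to $y_n \ll y \leq z$ promotes the causal link to a chronological one, giving $y_n \ll z$, while Proposition \ref{thm:main:1} applied to $x \ll y$ gives $x \ll y_n$ for all large $n$. For such $n$ we then have a genuine chronological chain $x \ll y_n \ll z$, so Definition \ref{def:lorentzianmetricspace}(ii) yields $\tau(x,z) \geq \tau(x,y_n) + \tau(y_n,z) \geq \tau(x,y_n)$; passing to the limit with lower semicontinuity along $(x,y_n) \to (x,y)$ gives $\liminf_n \tau(x,y_n) \geq \tau(x,y)$, hence $\tau(x,z) \geq \tau(x,y)$. The symmetric subcase $\tau(y,z) > 0 = \tau(x,y)$ is handled the same way, now approximating $y$ from the future by a past chain $(y_n)$ with $y \ll y_n \to y$ (available since $I^+(y) \neq \emptyset$), using push-up on $x \leq y \ll y_n$ to get $x \ll y_n$ and Proposition \ref{thm:main:1} to get $y_n \ll z$.

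The main obstacle is precisely this mixed case: bridging the gap between the purely qualitative push-up property and the quantitative reverse triangle inequality. The mechanism is to use chronological density to replace the merely causally related endpoint by a chronological approximating chain, apply push-up to convert the remaining causal link into a chronological one, and then pass to the limit. Two points require care: the direction of the lower-semicontinuity estimate, which delivers exactly $\liminf_n \tau(x,y_n) \geq \tau(x,y)$ as needed for a lower bound on $\tau(x,z)$ and not the reverse; and the fact that the approximating chain must lie strictly in the chronological past (resp. future) of $y$ so that push-up applies, which is exactly what chronological density supplies. The final ``in particular'' assertion is then immediate, since under chronological density the two formulations of condition (ii) in Definition \ref{xuxu} are now proved equivalent.
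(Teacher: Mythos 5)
Your proof is correct and follows essentially the same route as the paper's: the forward direction is the same direct computation, and for the converse both arguments reduce to the mixed case and use chronological density to replace $y$ by a chain on its chronological side, push-up to upgrade the remaining causal link, the reverse triangle inequality for the resulting chronological triple, and lower semicontinuity of $\tau$ to pass to the limit. The only cosmetic difference is which mixed subcase you write out in detail.
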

\begin{proof}

Assume that $\tau (x,z) \geq \tau (x,y) + \tau (y,z)$ whenever $x \leq y \leq z$. If $x,y,z \in X$ satisfy $x \ll y \leq z$ (the other case is analogous) then 
  \[ \tau (x,z) \geq \tau (x,y) + \tau (y,z) > 0,
  \]
and consequently $x \ll z$.

  \smallskip
  
Assume now that the push-up property and the chronological density hold. Observe that the reverse triangle inequality is satisfied for $\tau$ whenever the three points are chronologically related. So, we can assume that at least one of the causal relations $x\leq y\leq z$ is not a chronological relation. Let us assume that, for instance, $x\leq y$ but $x\not\ll y$ (the rest of cases are analogous). Then, it is clear that $\tau(x,y)=0$.

If $y\leq z$ but $y\not\ll z$, we also have $\tau(y,z)=0$, and the inequality follows trivially. So, the remaining possibility is $y\ll z$, i.e., $\tau(y,z)>0$. In this case, the push-up property yields $\tau(x,z)>0$. In order to prove that $\tau(y,z)\leq \tau(x,z)$ let us consider $\varsigma=\left\{ y_n \right\}\subset I^+(y)$ a past chronological chain with topological limit $y$. The existence of such a sequence is ensured by the chronological density and the fact that $z\in I^{+}(y)$. Observe that, in the one hand, the push-up property ensures that $x\ll y_n$ for all $n$. On the other hand, by recalling that $y\ll z$ and Proposition \ref{thm:main:1}, it follows that for $n$ big enough $y_n\ll z$. Hence,
\[x\ll y_n\ll z
\]
and by the reverse triangle inequality,
\[
\tau(x,z)\geq \tau(x,y_n)+\tau(y_n,z).
\]
Now recall that $\tau$ is lower semi-continuous, and so, for any $\epsilon>0$ and $n$ big enough we have that

\[\tau(x,y_n)>\tau(x,y)-\epsilon/2=-\epsilon/2,\qquad \tau(y_n,z)> \tau(y_n,z)>\tau(y,z)-\epsilon/2.
\]
Summing up, for any $\epsilon>0$ it follows that:
\[
\tau(x,z)\geq \tau(x,y_n)+\tau(y_n,z)>\tau(y,z)-\epsilon,
\]
thus $\tau(x,z)\geq \tau(y,z)$ and the result follows.


\end{proof}

\begin{remark}
  The chronological density condition for the converse of previous result is necessary. In fact, let $X=\left\{ x,y,z \right\}$ be a set formed by three points and endowed with the discrete topology. Consider a Lorentzian distance $\tau$ in $X$ whose unique non-zero values are $\tau(x,z)=1$ and $\tau(y,z)=2$, and a causal relation $\leq$ just consisting of $x\leq y\leq z$. Clearly, $\tau$ satisfies the reverse triangle inequality (there are no three points chronologically related) and it is continuous. Even more, the relations $x\leq y\ll z$ and $x\ll z$ hold, and consequently, the push-up property also holds. However, $\tau(x,z)<\tau(y,z)$ with $x\leq y\leq z$, and thus, the reverse triangle inequality is not extended to the causal relation. 
\end{remark}

\begin{definition}
\label{def:fullspace}
A Lorentzian metric space is {\emph{full} (see \cite{Muller})} if the past and future of any of its points are non-empty.
\end{definition}

\medskip

Next, we recall the adaptation to (pre-)length spaces of the causal ladder of spacetimes provided in \cite{ACS} (see also \cite[Sections 2.7 and 3.5]{KSlls}). For convenience, we directly formulate it in the context of Lorentzian metric spaces. 






%
%

\begin{definition}\label{def:causalityconditions}
	A Lorentzian metric space $(X, \sigma, \tau)$ endowed with a compatible causal relation $\leq$ is
	\begin{enumerate}[label=(\alph*)]
		\item\label{esc:1} {\em chronological} if $x \not\ll x$ for any $x \in X$.
		\item\label{esc:2} {\em causal} if $x \leq y$ and $y \leq x$ implies $x = y$.
		\item\label{esc:3} {\em distinguishing} if $I^{-} (x) = I^{-} (y)$ implies $x=y$ (\emph{past-distinguishing)} and $I^{+} (x) = I^{+} (y)$ implies $x=y$ (\emph{future-distinguishing)}.
		\item\label{esc:4} {\em strongly causal} if it is distinguishing
		and the space topology $\sigma$ agrees with the Alexandrov topology.
		\item\label{esc:5} {\em stably causal} if $K^{+}$ is antisymmetric, where $K^{+}$ is the smallest closed and transitive relation containing $J^+:=\left\{ (x,y)\in X\times X: x\leq y \right\}$.
		\item\label{esc:6} {\em causally continuous} if it is distinguishing and reflective, i.e. if $I^{+} (x) \subset I^{+} (y)$ (resp. $I^{-} (y) \subset I^{-} (x)$) implies $I^{-} (y) \subset I^{-} (x)$ (resp. $I^{+} (x) \subset I^{+} (y)$). 
		\item\label{esc:7} {\em causally simple} if it is causal and the sets $J^{\pm} (x)$ are closed for any $x \in X$.
		\item\label{esc:8} {\em globally hyperbolic} if it is causal and the \emph{causal diamonds} $J(x,y) = J^{+} (x) \cap J^{-} (y)$ are compact for any $x,y \in X$.
	\end{enumerate}
\end{definition}
\noindent We postpone to a subsequent paper the proof that these properties are logically related between them (in the sense that each level of the ladder implies the previous one under mild conditions), and the redundancy of the distinguishing condition in (d), as it occurs in the context of pre-length spaces (where additional structure is considered). Let us see however Proposition \ref{krut}.

\smallskip

We conclude this section with the following straightforward result:

\begin{proposition}
\label{prop:main:3}
If $(X,\sigma,\tau)$ is a distinguishing Lorentzian metric space then $\tau$ distinguishes points (according to Definition \ref{def:bounded} \ref{def:bounded:iii}).
\end{proposition}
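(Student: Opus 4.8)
The plan is to argue by contraposition, exploiting the fact that the chronological relation $\ll$ --- and hence the sets $I^{\pm}$ appearing in the distinguishing condition --- is determined \emph{only} by whether $\tau$ is strictly positive, not by its actual values. Concretely, I would assume that $\tau$ fails to distinguish some pair $x,y$, meaning that $x\neq y$ yet $\tau(x,z)=\tau(y,z)$ and $\tau(z,x)=\tau(z,y)$ for every $z\in X$, and derive a contradiction with the distinguishing hypothesis.

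The key step is the translation from equality of $\tau$-values to equality of chronological futures and pasts. Since $z\in I^{+}(x)$ means precisely $\tau(x,z)>0$, the hypothesis $\tau(x,z)=\tau(y,z)$ for all $z$ gives $\tau(x,z)>0\iff\tau(y,z)>0$, that is $I^{+}(x)=I^{+}(y)$. Symmetrically, using $z\in I^{-}(x)\iff\tau(z,x)>0$, the hypothesis $\tau(z,x)=\tau(z,y)$ for all $z$ yields $I^{-}(x)=I^{-}(y)$. In fact, either one of these two identities already suffices for what follows.

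With $I^{+}(x)=I^{+}(y)$ in hand, the future-distinguishing half of Definition \ref{def:causalityconditions}\ref{esc:3} forces $x=y$, contradicting $x\neq y$; this completes the contrapositive. Equivalently, one may run the argument directly: given $x\neq y$, future-distinguishing yields $I^{+}(x)\neq I^{+}(y)$, so some $z$ lies in exactly one of the two sets, say $z\in I^{+}(x)\setminus I^{+}(y)$, whence $\tau(x,z)>0=\tau(y,z)$ and $z$ is the required separating point in the sense of Definition \ref{def:bounded}\ref{def:bounded:iii}.

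I do not expect any genuine obstacle here: the whole content is the elementary observation that $I^{\pm}$ sees $\tau$ only through its support, so two points with identical $\tau$-profiles necessarily share the same chronological future and past. The only point requiring a little care is to confirm that $z\notin I^{+}(y)$ really forces $\tau(y,z)=0$ (so that indeed $\tau(x,z)\neq\tau(y,z)$), which is immediate from the definition $z\in I^{+}(y)\iff\tau(y,z)>0$ given in Definition \ref{def:lorentzianmetricspace}.
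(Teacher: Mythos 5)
Your proposal is correct and is essentially identical to the paper's own proof: both argue by contradiction/contraposition, observing that equal $\tau$-profiles force $I^{\pm}(x)=I^{\pm}(y)$ because $\ll$ only sees the support of $\tau$, and then invoke the distinguishing hypothesis. The direct variant you sketch at the end is just a rephrasing of the same argument.
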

\begin{proof}
 Assume by contradiction that $\tau$ does not distinguish two points $x,y\in X$. Then, from the definition of $\ll$, and for any $z\in X$, we have $x\ll z$ iff $y\ll z$ and $z\ll x$ iff $z\ll y$. In conclusion, $I^{\pm}(x)=I^{\pm}(y)$, in contradiction with the hypothesis.
\end{proof}


\section{Extensions of Lorentzian metric spaces}\label{sec:extensions}

In this section we are going to adapt the notion of extension of pre-length spaces introduced in \cite[Definition 3.1]{GKSinext} to our setting.

%


\begin{definition}\label{def:extension}
Let $(X,\sigma,\tau)$ be a Lorentzian metric space. We say that a Lorentzian metric space $(\tilde{X},\tilde{\sigma},\tilde{\tau})$ is an {\em extension} of $(X,\sigma,\tau)$ if
\begin{itemize}[label=(\roman*)]
  
\item[(i)] \label{exten:top} there exists a topological embedding $j : (X,\sigma) \to (\tilde{X},\tilde{\sigma})$ such that the image is a proper subset of $\tilde{X}$,

\item[(ii)] \label{exten:dis} $\tilde{\tau}(j(x),j(y)) \geq \tau(x,y)$ for any pair of points $x,y\in X$.
\end{itemize}
If we also have causal relations $\tilde{\leq}$ and $\leq$ compatible with $(X,\sigma,\tau)$ and $(\tilde{X},\tilde{\sigma},\tilde{\tau})$, resp., we say that $\tilde{\leq}$ extends $\leq$ if 
\begin{itemize}
\item[(iii)] $x\leq y$ implies $j(x)\tilde{\leq}j(y)$ for any $x,y\in X$.
\end{itemize} 
%
%

If $(X,\sigma,\tau)$ admits a (proper) extension then it is called {\em extendible}, and it is called {\em inextendible} otherwise. We define the \emph{boundary of the extension} as $\tilde{\partial} X:=\tilde{X}\setminus j(X)$
\end{definition}

\begin{remark} The requirement (i) in previous definition is very natural, since it corresponds to the natural idea of extending the corresponding topological space. However, the condition (ii) needs some explanation. In fact, in contraposition to what happens with the chronology, the Lorentzian distance is not necessarily preserved when passing to the extension. Note however that this is reasonable if we want to include in our approach the spaces whose $\tau$ is expressable as the supremum of lengths of curves joining the corresponding points. In fact, in this case, the additional curves provided by the extension may increase the supremum mentioned above, and consequently, the Lorentzian distance between points of the original space may be strictly bigger.
\end{remark}


%

The notion of extension presented here is sufficiently general to include many constructions, some of them certainly useless; for instance, one can obtain an extension just by attaching appropriately a single point to $X$. In general, some additional properties for the extensions are required in order to obtain some benefit from them. Here, we are specially interested in extensions which yield a ``sort of completion'' for the initial Lorentzian metric space. More precisely, we introduce the following notions: 
\begin{definition}\label{defcompl}
  Let $(\tilde{X},\tilde{\sigma},\tilde{\tau})$ be an extension of a Lorentzian metric space $(X,\sigma,\tau)$. We say that $(\tilde{X},\tilde{\sigma},\tilde{\tau})$ is a \emph{completion} of $(X,\sigma,\tau)$ if
  \begin{enumerate}[label=(\roman*)]
  \item \label{defcompl1} it \emph{preserves the Lorentzian distance}, i.e. $\tilde{\tau}(j(x),j(y))=\tau(x,y)$,
  \item {it is \emph{chronologically complete}, i.e. any chronological chain $\left\{ {\tilde{x}}_n \right\}\subset \tilde{X}$ (i.e. a sequence satisfying either $\tau(\tilde{x}_n,\tilde{x}_{n+1})>0$ for all $n$ or $\tau(\tilde{x}_{n+1},\tilde{x}_{n})>0$ for all $n$), has some endpoint in $\tilde{X}$, and}
  \item it is \emph{tight}, i.e. $j(X)$ is (topologically) dense in $\tilde{X}$.
  \end{enumerate}
\end{definition}

\section{The c-completion of Lorentzian metric spaces}\label{sec:c-completion}

Our main objective is to develop the c-completion construction (initially conceived for spacetimes, but formally introduced for chronological sets) for Lorentzian metric spaces. We will prove that the resulting space is a {\em completion} of the original Lorentzian metric space in the sense defined in previous section, i.e. an {\em extension} of the original Lorentzian metric space that is {\em chronologically complete}, {\em tight} and {\em preserves the Lorentzian distance}. Moreover, we will prove that it is well-suited with respect to other extensions.

\subsection{A review on chronological sets}\label{qaz}

Let $(X,\sigma,\tau)$ be a Lorentzian metric space, and denote by $\ll$ the chronological relation (pre-order) determined by $\tau$. In order to construct the c-completion $\overline{X}$ just as a {\em set of points}, we only need to consider the pair $(X,\ll)$.

We begin by reviewing some classic concepts associated to a {\em pre-chronological sets} (recall footnote 1). We say that a sequence $\left\{ x_n \right\}$ in $X$ is a \emph{future chain} if $x_n\ll x_{n+1}$ for all $n$. A subset $P \subset X$ is called a \emph{past set} if it coincides with its past, i.e., $P = I^{-}[P] = \{  p\in X : p \ll q \text{ for some } q \in P\}$. The \emph{common past} of a subset $S \subset X$ is defined as $\downarrow S := I^{-}[\{ p \in X : p \ll q, \forall q\in S \}]$. A past set that cannot be written as the union of two proper subsets, both of which are also past sets, is called \emph{indecomposable past} set, IP for short. An IP which coincides with the past of some point, i.e. $P = I^{-} (p)$, $p \in X$, is called \emph{proper indecomposable past set}, PIP. Otherwise, it is called \emph{terminal indecomposable set}, TIP. The dual notions of the concepts introduced here (past chain, future set, $\uparrow S$, IF, TIF, PIF) are defined just by changing the roles of past and future.

In order to closer analyze the notion of indecomposability 
we will focus on past sets (for future sets is analogous). In \cite[Theorem 3]{harrisuniversality}, Harris showed, in the context of \emph{chronological sets} 
(i.e., separable pre-chronological sets), that a past set is an IP if and only if it is the past of a future chain. Hence, any IP $P$ in a chronological set can be \emph{generated} by a future chain $\varsigma = \{ p_n\}$, i.e., $P = I^{-}[\varsigma]$. Note however that the separability condition is not enough to ensure that the pasts of points $I^-(x)$ are indecomposable, as we can see from the following example.

\begin{example}
  Consider the subset $X$ of $\mathbb{L}^2$ given by
  \[
    X=\left\{ (0,1) \right\} \cup L_-\cup L_+\subset \mathbb{L}^2,
  \] where $L_{\pm}:=\left\{(\pm 1/2,y):y<0  \right\}$. Denote by $\ll$ the chronology in $X$ inherited from the spacetime $\mathbb{L}^2\setminus \left\{ (0,y):y\leq 0 \right\}$ (in particular, no point of $L_-$ is chronologically related with some point of $L_+$, and viceversa).

It is straightforward to see that  $(X,\ll)$ is separable, since any pair of points $x\ll y$ admits some point $z\in X$ with $x\ll z\ll y$. However, by construction, $I^-((0,1))=L_-\cup L_+$, and each $L_{\pm}$ is an IP ($L_{\pm}=I^-[\varsigma_{\pm}]$, where $\sigma_{\pm}=\left\{ (\pm 1/2,-1/n) \right\}$).
\end{example}

In order to identify points with PIPs we need to assume that  $(X,\sigma,\tau)$ is {chronologically dense} (recall Definition \ref{def:main:6}). 
\begin{lemma}
\label{thm:main:6}
Assume that a Lorentzian metric space $(X,\sigma,\tau)$ is separable and chronologically dense. Then, for any $x\in X$, $I^{\pm}(x)$ in an IP.
\end{lemma}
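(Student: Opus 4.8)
The plan is to realize $I^-(x)$ as the past of a single future chain and then appeal to Harris's characterisation of indecomposable pasts. I treat $I^-(x)$, the case of $I^+(x)$ being dual, and I assume $I^-(x)\neq\emptyset$ (otherwise there is no IP to produce).

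First I would invoke chronological density (Definition~\ref{def:main:6}). Since $I^-(x)\neq\emptyset$, it furnishes a future chain $\varsigma=\{x_n\}$ with topological limit $x$; exactly as in the proof of Proposition~\ref{thm:main:5}, the feature I want to extract is that this chain can be taken inside $I^-(x)$, so that $x_n\ll x_{n+1}$ and $x_n\ll x$ for all $n$, while $x_n\to x$ in $\sigma$.

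Next I would establish the equality $I^-(x)=I^-[\varsigma]$. The inclusion $I^-[\varsigma]\subseteq I^-(x)$ is purely order-theoretic: if $p\ll x_n$ for some $n$, then transitivity and $x_n\ll x$ give $p\ll x$. The reverse inclusion is where the topology and lower semicontinuity enter through Proposition~\ref{thm:main:1}: given $p\ll x$, that result yields open sets $U\ni p$ and $V\ni x$ with $p'\ll q'$ whenever $p'\in U$ and $q'\in V$; since $x_n\to x$ we have $x_N\in V$ for some $N$, hence $p\ll x_N$ and $p\in I^-[\varsigma]$. I would also note that $I^-[\varsigma]=\bigcup_n I^-(x_n)$ is automatically a past set because the chain is increasing ($x_n\in I^-(x_{n+1})\subseteq I^-[\varsigma]$), so $I^-(x)$ inherits the past-set property for free.

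Finally, because $\varsigma$ is a future chain and $(X,\ll)$ is separable, Harris's Theorem~3 \cite{harrisuniversality} guarantees that the past of a future chain is indecomposable; thus $I^-(x)=I^-[\varsigma]$ is an IP, and the dual argument with a past chain in $I^+(x)$ handles $I^+(x)$. I expect the main obstacle to be precisely the interplay in the two inclusions. In low regularity a future chain converging to $x$ need not lie below $x$, since lower semicontinuity bounds $\tau(x_n,x)$ only from above; without $x_n\ll x$ the inclusion $I^-[\varsigma]\subseteq I^-(x)$ may fail. Hence the crux is to draw from chronological density a generating chain contained in $I^-(x)$, and then to balance transitivity (for $\supseteq$) against Proposition~\ref{thm:main:1} (for $\subseteq$); separability is used only to apply Harris's theorem.
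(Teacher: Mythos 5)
Your argument is correct and follows essentially the same route as the paper's proof: generate $I^-(x)$ by the future chain supplied by chronological density, obtain $I^-(x)\subseteq I^-[\varsigma]$ from Proposition \ref{thm:main:1} together with the convergence $x_n\to x$, get the reverse inclusion by transitivity, and conclude via Harris's theorem using separability. Your explicit insistence that the chain be taken inside $I^-(x)$ (so that transitivity actually yields $I^-[\varsigma]\subseteq I^-(x)$) is a point the paper's proof passes over with the bare phrase ``follows from transitivity,'' and it matches how chronological density is implicitly used elsewhere in the text (e.g.\ in the proof of Proposition \ref{thm:main:5}).
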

\begin{proof}
  Let us focus on $I^-(x)$, as the future will be analogous. There are two possibilities: Either $I^-(x)$ in empty (and so, it is an IP), or there exists a future chronological sequence $\varsigma=\left\{ x_n \right\}$ with topological limit $x$. For the latter, take $y\in I^-(x)$ and recall that, from Proposition \ref{thm:main:1}, there exists an open set $x\in U\subset I^+(y)$. But then $y\ll x_n$ for $n$ big enough, and so, $I^-(x)\subset I^-[\varsigma]$. As the other contention follows from transitivity, we deduce that $I^-(x)=I^-[\varsigma]$, and therefore, that $I^-(x)$ is an IP from separability.
\end{proof}

In particular, any separable and chronologically dense Lorentzian metric space satisfies both, that indecomposable past/future sets are past/future of  future/past chains and that the past/future of points are indecomposable. The final result of this section provides sufficent conditions to chronologial density.

\begin{proposition}
\label{thm:main:4}
Any separable strongly causal Lorentzian metric space $(X,\sigma,\tau)$ is chronologically dense. 
\end{proposition}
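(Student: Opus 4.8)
The plan is to fix a point $x$ with $I^-(x)\neq\emptyset$ and construct a future chain with topological limit $x$ (the past chain in the case $I^+(x)\neq\emptyset$ being completely analogous). First I would dispose of the degenerate case $I^+(x)=\emptyset$: since $\sigma$ coincides with the Alexandrov topology (strong causality, Definition \ref{def:causalityconditions}) and $x$ then lies in no chronological diamond, the only $\sigma$-open set containing $x$ is $X$ itself; hence \emph{any} future chain converges to $x$, and such a chain exists by iterating separability downwards from some $p\ll x$ (each relation $p\ll x$ yields $s\in S$ with $p\ll s\ll x$, and so on). Thus the substantive case is $I^-(x),I^+(x)\neq\emptyset$.

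Next I would pin down what convergence to $x$ demands. By Proposition \ref{thm:main:1}, for every $p\ll x$ there is a $\sigma$-neighborhood of $x$ on which $p\ll(\cdot)$ holds; hence any sequence converging to $x$ eventually lies in $I^+(p)$, while for a future chain contained in $I^-(x)$ the ``upper'' constraints $x_n\ll q$ (for $x\ll q$) hold automatically by transitivity. So a future chain $\{x_n\}\subset I^-(x)$ has topological limit $x$ if and only if for every $p\ll x$ one has $p\ll x_n$ eventually, i.e. the chain is cofinal in $(I^-(x),\ll)$. Separability provides the raw material: writing $S^-:=S\cap I^-(x)$, for every $p\ll x$ there is $s\in S^-$ with $p\ll s\ll x$, so $S^-$ is cofinal in $I^-(x)$; moreover the diamonds $I(s,s')$ with $s,s'\in S$ and $s\ll x\ll s'$ form a countable neighborhood subbasis at $x$ (any diamond around $x$ contains one of them, again by separability). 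In particular $x$ is first countable, so I may fix a decreasing countable neighborhood basis $\{B_n\}$ at $x$.

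The heart of the argument, and the step I expect to be the main obstacle, is to show that $x\in\overline{I^-(x)}$ or, equivalently, that $I^-(x)$ is \emph{directed}: given $a,b\ll x$ there is $c\ll x$ with $a,b\ll c$. These are equivalent under Proposition \ref{thm:main:1}, since intersecting the neighborhoods it provides for $a\ll x$ and $b\ll x$ yields an open $V\ni x$ every point of which lies above both $a$ and $b$, so any point of $V\cap I^-(x)$ is the desired $c$; conversely, directedness lets one merge finitely many lower constraints, forcing every basic neighborhood $\bigcap_i I^+(p_i)$ of $x$ to meet $I^-(x)$. The difficulty is precisely that a basic Alexandrov neighborhood involves a \emph{finite intersection} of cones $I^+(p_i)$, whereas separability only produces intermediate points above a \emph{single} lower bound; merging them is exactly what directedness asserts. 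My plan here is to argue by contradiction: if $x\notin\overline{I^-(x)}$, then, discarding the redundant upper cones (which all contain $I^-(x)$), there are $p_1,\dots,p_k\ll x$ with $\big(\bigcap_i I^+(p_i)\big)\cap I^-(x)=\emptyset$ while $x\in\bigcap_i I^+(p_i)$; I would then exploit the distinguishing hypothesis together with the equality $\sigma=$ Alexandrov topology to locate a point sharing its past (or future) with $x$, contradicting Definition \ref{def:causalityconditions}(c). This is the delicate point where strong causality must be used in full force, since, as the pathological example preceding Lemma \ref{thm:main:6} shows, separability alone does not force $I^-(x)$ to be indecomposable.

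Granting directedness, the chain is assembled by a routine induction reminiscent of the proof of Lemma \ref{thm:main:6}. Choose $x_0\in I^-(x)$; given $x_{n-1}\ll x$, the set $B_n\cap I^+(x_{n-1})$ contains an open neighborhood of $x$ (Proposition \ref{thm:main:1}), so by $x\in\overline{I^-(x)}$ it contains some $x_n\ll x$ with $x_{n-1}\ll x_n$. The resulting future chain $\{x_n\}$ lies in $I^-(x)$, is strictly increasing, and satisfies $x_n\in B_n$; since $\{B_n\}$ is a decreasing neighborhood basis at $x$, it follows that $x_n\to x$. This produces the required future chain, and the dual construction starting from $I^+(x)\neq\emptyset$ yields the past chain, establishing chronological density (Definition \ref{def:main:6}).
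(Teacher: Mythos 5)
Your reduction of the statement to the single claim that $I^-(x)$ is directed (equivalently, that $x\in\overline{I^-(x)}$) is correctly set up, and the surrounding steps --- the cofinality characterization of convergence in the Alexandrov topology, the countable neighbourhood basis at $x$ extracted from $S$, and the inductive assembly of the chain once directedness is granted --- all go through. The problem is that the directedness claim itself, which you rightly call the heart of the argument, is never actually proved: you only announce a plan to ``exploit the distinguishing hypothesis \dots\ to locate a point sharing its past (or future) with $x$'', with no indication of how the failure of directedness would manufacture two points with identical pasts or futures. It is far from clear that it can: the distinguishing condition compares the sets $I^{\pm}(\cdot)$ attached to \emph{different} points, whereas non-directedness is a statement about the internal order structure of the single set $I^-(x)$, and no mechanism is offered for passing from one to the other. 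Since every other step of your argument is downstream of this claim, the proof is incomplete precisely at its central point.

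For comparison, the paper closes this gap by a different route. It first applies Zorn's lemma to pick an IP $P=I^-[\varsigma]$ maximal among the IPs contained in $I^-(x)$, and then shows that the generating future chain $\varsigma=\{x_n\}$ converges to $x$: if some open $U\ni x$ avoided $\varsigma$, strong causality is used to produce a \emph{single} diamond with $x\in I(x_1^-,x_1^+)\subset V_1\cap U$ (where $V_1\subset I^+(x_1)$ comes from Proposition \ref{thm:main:1}), and then a nested sequence $x\in I(x_n^-,x_n^+)\subset V_n\cap I(x_{n-1}^-,x_{n-1}^+)$; the points $x_n^-$ form a future chain in $I^-(x)$ dominating $\varsigma$, whose past is an IP in $I^-(x)$ strictly larger than $P$, contradicting maximality. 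Observe that the extraction of a single diamond $x\in I(p,q)\subset W$ from an arbitrary open $W\ni x$ is exactly the step that merges your finitely many lower constraints $p_1,\dots,p_k$ into one; in the paper this is what strong causality is invoked for, and it immediately yields $x\in\overline{I^-(x)}$. If you want to keep your architecture, you should replace the appeal to the distinguishing property by this diamond-extraction step (making explicit the reading of the Alexandrov topology it relies on); your induction then completes the proof.
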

\begin{proof}
%
First, let us prove that for a given $x\in X$ the set $P=I^-(x)$ is actually a past set. From the transitivity of $\ll$ it is clear that $I^-[P]\subset P$. Hence we need to show the other inclusion. To this aim, observe that $y\ll x$ if $y\in P$. The separability condition ensures the existence of $z\in X$ such that $y\ll z\ll x$. By construction, $z\in P$, and so, $y\in I^-[P]$. Therefore, $I^-(x)$ is a past set. Now, again by using the separability condition, we can construct an IP, denoted by $P$, inside $I^-(x)$ by considering a future chronological chain $\varsigma=\left\{ x_n \right\}$ contained in $I^-(x)$. By Zorn's lemma, we can assume that such an IP is maximal among the IPs contained in $I^{-} (x)$. Our aim now is to prove that $\varsigma$ converges to $x$. 

Let us assume by contradiction that $\varsigma$ does not converge to $x$, and so, that there exists an open set $x\in U$ with $x_n\not\in U$ for all $n$. In such a case we can construct a new future chronological chain in the following way: For $x_1$, let $V_1$ be the open set obtained from Proposition \ref{thm:main:1} so $V_1\subset I^+(x_1)$. As $(X,\sigma,\tau)$ is strongly causal, we can take $x_1^-,x_1^+$ with $x\in I(x_1^-,x_1^-)\subset V_1\cap U$. By construction that $x_1\ll x_1^{-}\ll x$. We can then repeat the process for $x_2$, obtaining $x_2^-,x_2^+$ with $x\in I(x_2^-,x_2^+)\subset V_2\cap I(x_1^-,x_1^+)$. Observe that from construction, $x_2,x_1^-\in I^{-}(x_2^-)$.

Therefore, we can obtain a future chronological chain $\varsigma^-=\left\{ x_n^- \right\}\subset U$ and with $x_n\ll x_n^-$ for all $n$. But then $P \subsetneq I^-[\varsigma^-]$, being the latter an IP, in contradiction with the maximality of $P$. In conclusion, $\varsigma$ converges to $x$.

Finally, we are going to show that $P=I^-(x)$. For this, take $y\ll x$ and let us show that $y\in P$. Again from Proposition \ref{thm:main:1} we can find $x^-,x^+$ so $x\in I(x^-,x^+)\subset I^+(y)$ (here, we are using again strong causality). As $\varsigma$ converges to $x$, $x_n\in I(x^-,x^+)$ for $n$ big enough, and so, $y\ll x^-\ll x_n$. Therefore, $y\in I^-[\varsigma]$, as desired.
\end{proof}



\subsection{On the construction of the causal completion}

From previous discussion, in order to ensure a satisfactory identification between points and indecomposable past/future sets, we can require that $(X,\sigma,\tau)$ satisfies both, separability and chronological density (or separability and strong causality, according to Proposition \ref{thm:main:4}). As a first step towards the construction of the c-completion, we need to introduce the notions of \emph{future and past} completions.

The \emph{future chronological completion}, denoted by $\hat{X}$, is the set of all IPs. 
Since $(X,\sigma,\tau)$ is chronologically dense, we can define a map $x \mapsto I^{-} (x)$ from $X$ to $\hat{X}$, which is injective if $(X,\sigma,\tau)$ is past-distinguishing. Consequently, in this case, we can write $\hat{X} = X \cup \hat{\partial} X$, where $\hat{\partial}X$ denotes the set of all TIPs of $X$, which is called the \emph{future chronological boundary} of $X$. In analogous fashion, we define the notion of \emph{past completion} and \emph{boundary} by considering IFs, PIFs and TIFs. For a deeper analysis on the future chronological completion we refer to \cite{harrisuniversality}. 

 In order to define a \emph{(total)} c-completion, we need to glue together both completions by establishinging some relations between points of both constructions. In fact, it is clear that any PIP and PIF associated to the same point must be identified, but further relations need also to be considered (see \cite{szabados1}). To this aim, we introduce the following definition:


\begin{definition}\label{srelation}
	If $P$ is a maximal IP contained in $\downarrow F$ and $F$ is a maximal IF contained in $\uparrow P$ then we say that $P,F$ are \textit{S-related}, and denote it by $P \sim_{S} F$. Furthermore, we say that $P \sim_{S} \emptyset$ (resp. $ \emptyset \sim_{S} F$) if $P$ (resp. $F$) is a (non-empty) indecomposable past (resp. future) set and it is not S-related to any other indecomposable set.
      \end{definition}

      \noindent Consider the product $\hat{X}_{\emptyset}\times \check{X}_{\emptyset}$, where $\hat{X}_{\emptyset} := \hat{X} \cup \{\emptyset\}$ and $\check{X}_{\emptyset} := \check{X} \cup \{\emptyset\}$, and take the set formed by the pairs $(P,F)\in \hat{X}_{\emptyset}\times \check{X}_{\emptyset}$ which are $S$-related according to previous definition. In order to identify $X$ with a subset of the product $\hat{X}_{\emptyset}\times \check{X}_{\emptyset}$ via the map $x\mapsto (I^-(x),I^+(x))$, there are several difficulties to face. On the one hand, if the space is not past-distinguishing, two different points $x,y\in X$ with $I^-(x)=I^-(y)$ could be identified with a singular point in $\overline{X}$ (the same happens if the space is not future-distinguishing). On the other hand, the relation $I^-(x)\sim_S I^+(x)$ is not guaranteed in general. Of course, these problems are solved for strongly causal spacetimes (see \cite[Prop. 5.1]{szabados1}), but to be finer we need some additional work.

\begin{definition}
\label{def:main:2}
We say that $(X,\ll)$ satisfies the \textit{S-property} if $I^-(x)\sim_SI^+(x)$ for any $x\in X$, and there is no other PIF $F$ (resp. PIP $P$) $S$-related with $I^-(x)$ (resp. $I^+(x)$).
\end{definition}
As we have mentioned before, strong causality is equivalent to the $S$-property in spacetimes (see \cite[Prop. 5.1]{szabados1}). In the context of Lorentzian pre-length spaces it is possible to prove a similar result under the assumptions of strong causality and causal path-connectedness for $X$ (with essentially the same proof). However, for Lorentzian metric spaces, a slightly more general approach is needed. In the following result the notions of curve and causal path-connectedness are weakened into separability and Hausdorffness of the topology.

\begin{proposition}
\label{prop:main:2}
Let $(X,\sigma,\tau)$ be a separable and strongly causal Lorentzian metric space. Then, $I^-(x)\sim_S I^+(x)$ for any $x\in X$. If, in addition, $\sigma$ is Hausdorff, then $X$ satisfies the S-property.
\end{proposition}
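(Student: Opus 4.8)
The plan is to verify directly the two defining conditions of Definition~\ref{srelation} for the pair $(I^-(x),I^+(x))$: that $I^-(x)$ is a maximal IP contained in $\downarrow I^+(x)$ and, dually, that $I^+(x)$ is a maximal IF contained in $\uparrow I^-(x)$. By the past/future symmetry of all the notions involved I would argue the first assertion in detail and obtain the second by duality. First I would record the structural facts that make the statement meaningful: since $(X,\sigma,\tau)$ is separable and strongly causal, Proposition~\ref{thm:main:4} gives chronological density, and then Lemma~\ref{thm:main:6} ensures that $I^-(x)$ is an IP and $I^+(x)$ an IF; the distinguishing part of strong causality (Definition~\ref{def:causalityconditions}\ref{esc:4}) makes $x\mapsto(I^-(x),I^+(x))$ injective, so these are genuine PIP/PIF. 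The easy inclusion $I^-(x)\subseteq\downarrow I^+(x)$ is immediate, since $x$ itself satisfies $x\ll q$ for every $q\in I^+(x)$ and therefore lies in the set whose past defines $\downarrow I^+(x)$, whence $I^-(x)=I^-[\{x\}]\subseteq\downarrow I^+(x)$; the inclusion $I^+(x)\subseteq\uparrow I^-(x)$ is symmetric.

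The heart of the argument is maximality. Let $P$ be an IP with $I^-(x)\subseteq P\subseteq\downarrow I^+(x)$; by separability together with the characterization of IPs as pasts of future chains I may write $P=I^-[\varsigma]$ for a future chain $\varsigma=\{p_n\}$. The clean observation is that $\varsigma$ converges topologically to $x$. Indeed, fix a basic Alexandrov neighbourhood $I(a,b)\ni x$, so $a\ll x\ll b$. On the one hand $b\in I^+(x)$ and, since $p_n\in\downarrow I^+(x)$ forces $I^+(x)\subseteq I^+(p_n)$, we get $p_n\ll b$ for every $n$. On the other hand $a\ll x$ gives $a\in I^-(x)\subseteq P=I^-[\varsigma]$, so $a\ll p_{n_0}$ for some $n_0$ and hence $a\ll p_n$ for all $n\ge n_0$. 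Thus $p_n\in I(a,b)$ for $n\ge n_0$, and as strong causality means $\sigma$ coincides with the Alexandrov topology this shows $p_n\to x$. The same reasoning (or Proposition~\ref{thm:main:4}) shows that the chain generating $I^-(x)$ also converges to $x$.

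It remains to upgrade this to $P=I^-(x)$, where the only nontrivial inclusion is $P\subseteq I^-(x)$: for $q\ll p_n$ I must produce $q\ll x$ (the reverse inclusion is automatic, since $q\ll x$ plus Proposition~\ref{thm:main:1} gives an open $V\ni x$ with $q\ll V$, and $p_n\to x$ then yields $q\ll p_n$ eventually). For this I would exploit the strict witness supplied by $p_n\in\downarrow I^+(x)$, namely a point $w$ with $q\ll p_n\ll w$ and $I^+(x)\subseteq I^+(w)$; applying Proposition~\ref{thm:main:1} to $q\ll w$ produces an open neighbourhood of $w$ on which $q$ lies to the chronological past of every point, and strong causality must then be used to show that this neighbourhood meets $I^-(x)$, yielding $v$ with $q\ll v\ll x$. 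I expect this passage from the topological convergence $p_n\to x$ back to the chronological relation $q\ll x$ to be the main obstacle: lower semicontinuity controls $\tau(q,x)$ only \emph{from above} (it merely gives $\tau(q,x)\le\liminf_n\tau(q,p_n)$) and cannot by itself guarantee $\tau(q,x)>0$, so it is exactly here that the Alexandrov character of $\sigma$, the distinguishing property, and the strict intermediate point $w$ keeping us in the open interior of the cones are indispensable. Granting this, $P=I^-(x)$ is maximal, and dually $I^+(x)$ is the required maximal IF, so $I^-(x)\sim_S I^+(x)$.

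For the final assertion I would add Hausdorffness to obtain uniqueness. Suppose $I^-(x)\sim_S I^+(x')$ for some PIF $I^+(x')$; taking a past chain $\{x'_j\}$ that generates $I^+(x')$ (so $x'_j\to x'$) and using that it lies in $\uparrow I^-(x)$ — whence $I^-(x)\subseteq I^-(x'_j)$ — together with $I^-(x)\subseteq\downarrow I^+(x')$, I would run the neighbourhood computation of the second paragraph, now combining both defining properties of the $S$-relation, to conclude that $\{x'_j\}$ also converges to $x$. Since $\sigma$ is Hausdorff, topological limits are unique, so $x'=x$; the dual argument rules out any further PIP $S$-related to $I^+(x)$. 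Together with the first part this yields the $S$-property of Definition~\ref{def:main:2}. The role of Hausdorffness is thus isolated to this uniqueness step: without it a single chain could accumulate at several points and manufacture spurious $S$-relations.
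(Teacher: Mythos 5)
Your setup, the easy inclusions, and the convergence computation are sound and essentially coincide with the paper's: showing that the generating chain $\{p_n\}$ of a competing IP $P$ with $I^-(x)\subseteq P\subseteq \downarrow I^+(x)$ eventually enters every Alexandrov neighbourhood $I(a,b)$ of $x$ is exactly the paper's key estimate (there carried out on the IF side). The genuine gap is the step you yourself flag and then ``grant'': passing from $p_n\to x$ back to $I^-[\{p_n\}]\subseteq I^-(x)$. Your sketch for it does not work. The witness $w$ you extract from $p_n\in \downarrow I^+(x)$ only satisfies $w\ll z$ for every $z\in I^+(x)$; such a point need not lie in $\overline{I^-(x)}$ (it could be null-related to $x$, or be $x$ itself), so there is no reason the chronological neighbourhood of $w$ supplied by Proposition \ref{thm:main:1} should meet $I^-(x)$, and lower semicontinuity of $\tau$, as you note, bounds $\tau(q,x)$ from above only. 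The paper does not prove this implication directly either; it inverts the logic: assuming a \emph{strictly} larger IF $F\supsetneq I^+(x)$ inside $\uparrow I^-(x)$, it argues that the past chain generating $F$ cannot converge to $x$ (otherwise it would generate exactly $I^+(x)$, by the chronological-density machinery of Lemma \ref{thm:main:6}), fixes a neighbourhood $U$ of $x$ missed infinitely often, and then contradicts this with the convergence computation. Whichever way one organises the argument, one implication of the equivalence ``the chain converges to $x$ iff it generates $I^{\pm}(x)$'' must actually be established, and your proposal proves only the easy half.

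The Hausdorff part has a second, independent problem. To show that a past chain $\{x'_j\}$ generating $I^+(x')$ converges to $x$ you would need, for every $b\in I^+(x)$, that $x'_j\ll b$ eventually, i.e. $I^+(x)\subseteq I^+(x')$; but the hypothesis $I^-(x)\sim_S I^+(x')$ gives no comparison between the two maximal IFs $I^+(x)$ and $I^+(x')$ inside $\uparrow I^-(x)$, so ``running the neighbourhood computation again'' does not go through. The paper instead applies Lemma \ref{thm:main:3} to the future chain generating the \emph{common} first component $I^-(x)$: that chain converges, in the chronological topology (which coincides with $\sigma$ here), to both $(I^-(x),I^+(x))$ and $(I^-(x),I^+(x'))$, and Hausdorffness then forces the two pairs to coincide.
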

\begin{proof}
Take $x\in X$ arbitraty, and consider the proper indecomposable sets $I^-(x)$ and $I^+(x)$. Since $X$ is strongly causal, necessarily $I^{\pm}(x)\neq \emptyset$; in fact, given $x\in X$ and an open set $U\ni x$, there exists $p^{\pm}\in I^{\pm}(x)$, and thus, $x\in I(p^-,p^+)$. From Proposition \ref{thm:main:4}, $(X,\sigma,\tau)$ is chronologically dense. So, there exist future and past chains $\left\{ x_n^{\pm} \right\}_n$ converging to $x$ (and so, with $I^{\pm}(x)=I^{\pm}[\sigma^{\pm}]$).

First, recall that $I^+(x)\subset \uparrow I^-(x)$ and $I^-(x)\subset \downarrow I^+(x)$. Assume by contradiction that they are not S-related, and so, there exists $F\neq \emptyset$ with $I^+(x)\subsetneq F\subset \uparrow I^-(x)$ (the case of $P$ breaking the maximality of $I^-(x)$ in $\downarrow F$ is analogous). Take a past chain $\{y_n\}$ defining $F$. Since $F\neq I^+(x)$, the sequence $\{y_n\}$ cannot converge to $x$. So, there exists an open set $(x\in) U$ such that $y_n\not\in U$ for infinitely many $n$. Then, the strong causality provides $p,q\in X$ such that $x\in I(p,q)\subset U$. Since $x\in I(p,q)$, there exists $x^-_{n_0},x^+_{n_0}$ with $p\ll x_{n_0}^-\ll x_{n_0}^+\ll q$. In particular, $x\in I(x^-_{n},x^+_{n})\subset U$ for all $n\geq n_0$. However, since $I^+(x)\subsetneq F\subset \uparrow I^-(x)$, we can find for any $n$ big enough another $m$ such that $x^-_n\ll y_m\ll x^+_n$, i.e. $y_m\in I(x^-_n,x^+_n)(\subset U)$. This contradicts that $y_m$ does not belong to $U$. In conclusion, $I^-(x)\sim_S I^+(x)$.

\smallskip

For the last assertion, suppose that $(X,\sigma,\tau)$ is Hausdorff, and assume by contradiction the existence of $x\neq y$ with $I^-(x)\sim_S I^+(y)$. As mentioned in Remark \ref{rem:top} \ref{rem:top:2}, if $(X,\sigma,\tau)$ is a strongly causal and chronologically dense Lorentzian metric space, the topology $\sigma$ coincides with the chronological topology. However, from Lemma \ref{thm:main:3}, the sequence $\{x_n^-\}$ converges to both, $(I^-(x),I^+(x))\neq (I^-(x),I^+(y))$. This enters in contradiction to the Hausdorff character of the topology $\sigma$. 
\end{proof}

Now we are ready to define the \emph{c-completion} as a set of points:
\begin{definition}\label{def:ccompLPLS}
The {\em c-completion} of a chronologically dense Lorentzian metric space $(X,\sigma,\tau)$ satisfying the S-property is the set 
$$\barX := \{(P,F) \in \hat{X}_{\emptyset} \times \check{X}_{\emptyset}: P \sim_{S} F\}.$$
The {\em c-boundary} of $X$ is defined as $\partial X := \barX \setminus \bfi (X)$, where $\bfi : X \to \barX$ is the inclusion map given by $\bfi (x) := (I^{-}(x), I^{+} (x))$. 
\end{definition}

\begin{remark}
	The c-completion $\barX$ (as a set of points) coincides with the (firstly introduced) \emph{Marolf-Ross completion} (see \cite{MR}). This construction is characterized as the maximum admissibl completion, since it is formed by all the pairs whose components are S-related between them (see \cite{onthefinal}). Even though this completion sometimes becomes redundant, it is the only admissible one that is canonically determined (see \cite[Remark 3.2]{onthefinal}).
\end{remark}

The c-completion $\overline{X}$ is endowed with the so-called \emph{chronological topology}, that is, the coarsest topology among the \emph{admissible} ones (see \cite{onthefinal}). The chronological topology is defined in terms of the following limit operator (see also \cite{flores:revisited,floresharris,onthefinal}).
\begin{definition} \label{def:chrtop}
	Let $\sigma = \{(P_n,F_n)\}$ be a sequence of pairs in $\overline{X}$. A pair $(P,F) \in \overline{X}$ satisfies
	$$(P,F) \in L(\sigma) \Leftrightarrow P \in \hat{L}(P_n) \text{ if } P\neq \emptyset, \text{ and } F \in \check{L}(F_n) \text{ if } F \neq \emptyset,$$
	where
	\begin{align*}
		P\in \hat{L}(P_n) &\Leftrightarrow \left\lbrace \begin{array}{c}
			P\subset LI (P_n) \\
			P \text{ is a maximal IP into } LS(P_n),
		\end{array} \right. \\
		F\in \check{L}(F_n) &\Leftrightarrow \left\lbrace \begin{array}{c}
			F\subset LI (F_n) \\
			F \text{ is a maximal IF into } LS (F_n),
		\end{array} \right.
	\end{align*}
and being $LI$ and $LS$ the usual inferior and superior limit operators in set theory.
\end{definition}
\noindent The \emph{chronological topology $\sigma_{chr}$} for $\overline{X}$ is defined as the sequential topology associated to the limit operator $L$; i.e., a subset $C$ is {\em closed} for $\sigma_{chr}$ if and only if ${L}(\sigma) \subset C$ for any sequence $\sigma \subset C$.

\begin{remark}\label{rem:top}
Some observations are in order.
\begin{enumerate}[label=(\roman*)]
\item 
If $\sigma$ is a sequential topology defined by a limit operator $L$, then
  \[x\in L(\left\{ x_n \right\})\quad \Longrightarrow\quad \left\{ x_{n}  \right\} \hbox{ converges to $x$ with the topology $\sigma$}.
  \]
However, the implication to the left is only true if $L$ is \emph{of first order} (see \cite{onthefinal}).

\item \label{rem:top:2} A natural question to ask is under which conditions the chronological topology $\sigma_{chr}$ coincides with the space topology. In spacetimes, they coincide if the spacetime is strongly causal (in this case they also coincide with the Alexandrov topology). The same result follows in this setting, since the proof of that result is valid for any chronologically dense chronological set (see for instance \cite[Remark 6]{onthefinal}).
\item \label{rem:top:3} There exists another sequential topology that can be used in the context of c-completions, proposed initially by Beem in \cite{Beem_1977} (see also \cite{costa:hausdorff}). That topology, called \emph{closed limit topology} or CLT for short, is also defined by a limit operator: a point $P$ in the future completion belongs to the limit of a sequence $\left\{ P_n \right\}$ if $P=LI(P_n)=LS(P_n)$ (compare with the definition of $\hat{L}$).

  In contrast with the chronological topology, the CLT topology is always metrizable, and they both coincide iff the chronological topology is Hausdorff (see \cite[Theorem 5.3]{costa:hausdorff}). There are however some limitations for the CLT topology. On the one hand, the spacetime need to be \emph{causally continuous} to ensure that the CLT topology coincides with the manifold one (see \cite[Proposition 4.1]{costa:addendum20}). On the other hand, the CLT topology is only defined for partial completions. To extend the CLT topology to the total completion, the future and past completions need to be either completely disconnected (globally hyperbolic spacetimes) or trivially identified from the perspective of the CLT (e.g. globally hyperbolic spacetimes with timelike boundary); see \cite{costa:hausdorff}.
\end{enumerate}
\end{remark}

Finally, we define a Lorentzian distance on $\overline{X}$ by using the Lorentzian distance $\tau$ of $(X,\sigma,\tau)$. This definition strongly rely on the 
chronological density of $X$:

\begin{definition} Let $(X,\sigma,\tau)$ be a chronologically dense Lorentzian metric space satisfying the S-property. We denote by
$\bartau : \barX \times \barX \to [0,\infty]$ the map given by
$$\overline{\tau} ((P,F), (P',F')) := \lim_{n \to \infty} \tau (q_n , p'_n),$$
for any past-directed chain $\{q_n\}$ generating $F$, and any future-directed chain $\{p'_n\}$ generating $P'$. If either $F=\emptyset$ or $P'=\emptyset$, we take $\overline{\tau}((P,F),(P',F'))=0$.
\end{definition}
In order to show that $\bar\tau$ is a Lorentzian distance on $\overline{X}$ we need to verify a number of preliminary properties.
\begin{lemma}\label{justbartau}
	The function $\bartau : \barX \times \barX \to [0,\infty]$ is well defined, that is, the limit always exists and it does not depend on the choice of chains.
\end{lemma}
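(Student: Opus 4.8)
The plan is to first dispose of the degenerate case, then split the argument into \emph{existence} of the limit and \emph{independence} of the generating chains. If $F=\emptyset$ or $P'=\emptyset$ the value is fixed to $0$ by definition, so there is nothing to check; assume henceforth $F,P'\neq\emptyset$ and fix a past-directed chain $\{q_n\}$ generating $F$ and a future-directed chain $\{p'_n\}$ generating $P'$. Thus $q_{n+1}\ll q_n$, $p'_n\ll p'_{n+1}$, $F=\bigcup_n I^+(q_n)$ and $P'=\bigcup_n I^-(p'_n)$; in particular $q_n\in F$ and $p'_n\in P'$ for every $n$. For existence I would show that $a_n:=\tau(q_n,p'_n)$ is eventually monotone. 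Suppose $a_n>0$, i.e. $q_n\ll p'_n$. Then $q_{n+1}\ll q_n\ll p'_n\ll p'_{n+1}$ is a chain all of whose consecutive $\tau$-values are positive, and two applications of the reverse triangle inequality (Definition~\ref{def:lorentzianmetricspace}(ii)) give
\[
\tau(q_{n+1},p'_{n+1})\ \geq\ \tau(q_{n+1},q_n)+\tau(q_n,p'_n)+\tau(p'_n,p'_{n+1})\ \geq\ \tau(q_n,p'_n).
\]
Hence once $a_n$ is positive it remains positive and non-decreasing, so $\{a_n\}$ is either identically $0$ or zero up to some index and non-decreasing afterwards; in either case it converges in $[0,\infty]$ to $\ell:=\sup_n a_n$.

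For independence, let $\{\tilde q_n\},\{\tilde p'_n\}$ be a second pair of generating chains and $\tilde\ell:=\sup_m\tau(\tilde q_m,\tilde p'_m)$ its (already established) limit. By symmetry it suffices to prove $\ell\leq\tilde\ell$. The key is a cofinality/interlacing step: since $q_n\in F=\bigcup_m I^+(\tilde q_m)$ there is an index $a$ with $\tilde q_a\ll q_n$, and since $p'_n\in P'=\bigcup_m I^-(\tilde p'_m)$ there is an index $b$ with $p'_n\ll\tilde p'_b$. Whenever $a_n>0$, set $c:=\max(a,b)$; using $\tilde q_c\ll\tilde q_a$ and $\tilde p'_b\ll\tilde p'_c$ (monotonicity of the two chains together with transitivity of $\ll$) one interlaces
\[
\tilde q_c\ll\tilde q_a\ll q_n\ll p'_n\ll\tilde p'_b\ll\tilde p'_c,
\]
and the iterated reverse triangle inequality yields $\tau(\tilde q_c,\tilde p'_c)\geq\tau(q_n,p'_n)$. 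Since $\tau(\tilde q_c,\tilde p'_c)\leq\tilde\ell$, we obtain $a_n\leq\tilde\ell$ for every $n$ (trivially when $a_n=0$), and letting $n\to\infty$ gives $\ell\leq\tilde\ell$. Exchanging the roles of the two pairs of chains gives the reverse inequality, hence $\ell=\tilde\ell$.

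I expect the interlacing step to be the main obstacle: the two generating chains supply values $\tau(\tilde q_a,\tilde p'_b)$ with \emph{mismatched} indices $a\neq b$, and the content of the argument is that both can be simultaneously dominated by a single diagonal term $\tau(\tilde q_c,\tilde p'_c)$. This reconciliation rests precisely on the monotonicity established in the existence part, so the two halves of the proof are not independent. A secondary point requiring care is the transition between the ``zero regime'' and the ``positive regime'' of the sequences $a_n$ and $\tau(\tilde q_m,\tilde p'_m)$; this is handled by noting that every term of an eventually non-decreasing sequence converging to its supremum is bounded by that supremum, so $\tilde\ell=\sup_m\tau(\tilde q_m,\tilde p'_m)$ dominates all diagonal terms regardless of where positivity sets in.
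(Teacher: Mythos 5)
Your proposal is correct and takes essentially the same route as the paper's proof: eventual monotonicity of the diagonal sequence $\tau(q_n,p'_n)$ via the reverse triangle inequality for existence, then interlacing of the two pairs of generating chains for independence. The only difference is that you make explicit the index-reconciliation step (passing to $c=\max(a,b)$ to get a single diagonal term) that the paper leaves implicit when it asserts ``for every $n$ there exists $m$ with $r_m\ll q_n\ll p'_n\ll s'_m$''.
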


\begin{proof}
  First, let us show that the limit in the definition of $\overline{\tau}$ exists.  Let $(P,F), (P',F') \in \overline{X}$, $\{q_n\}$ a past-directed chain generating $F$ and $\{p'_n\}$ a future-directed chain generating $P'$.
We distinguish two cases: either $\tau(q_n,p'_n)=0$ for all $n$ (and so, the sequence $\tau(q_n,p'_n)$ is constantly equal to $0$), or there is some $n_0$ with $\tau(q_{n_0},p'_{n_0})>0$. In this second case, and recalling that $\tau(q_{n+1},q_n),\tau(p'_n,p'_{n+1})>0$, the reverse triangle inequality ensures that
\[ \tau (q_{n+1} , p'_{n+1})\geq \tau (q_{n+1} , q_{n})+\tau (q_{n} , p'_{n})+\tau (p'_{n} , p'_{n+1})>\tau (q_n, p'_n). \]
Therefore, the sequence $\tau(q_n,p'_n)$ is increasing, and so, convergent in $(0,\infty]$.

	On the other hand, $\overline{\tau}$ does not depend on the choice of the chains generating $F$ and $P'$. In fact, let $\{q_n\}, \{r_n\}$ be past-directed chains generating $F$ and $\{p'_n\}, \{s'_n\}$ future-directed chains generating $P'$. For every $n$ there exists $m$ with 
	\[
	r_m \ll q_n \ll p'_n \ll s'_m\quad\hbox{and}\quad q_m \ll r_n \ll s'_n \ll p'_m 
	\]
	and so,
	$$ \tau( r_m , s'_m) \geq \tau (q_n , p'_n)\quad\hbox{and}\quad \tau( q_m , p'_m) \geq \tau (r_n , s'_n).$$
	 Therefore, both sequences $\tau(q_n,p'_n)$ and $\tau(r_{n},s'_n)$ converge to the same value, showing that $\overline{\tau}$ is independent of that choice.
       \end{proof}
       
Next, we prove that the extended Lorentzian metric $\overline{\tau}$ satisfies the reverse triangle inequality. To this aim, we begin by showing that $\overline{\tau}$ recovers the chronological relation on c-completions (as defined in \cite[Definition 3.6]{onthefinal}). More precisely:

\begin{lemma}\label{def:extendedrelations}
	The following equivalence holds:
	$$\overline{\tau}((P,F),(P',F'))>0 \quad \Leftrightarrow \quad F \cap P' \neq \emptyset\qquad\forall\; (P,F),(P',F') \in \overline{X}.$$
\end{lemma}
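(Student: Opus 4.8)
The plan is to establish both implications directly from the definition of $\overline{\tau}$ as a limit, exploiting the description of $F$ and $P'$ as generated by chains together with the eventually-increasing behaviour of the defining sequence already secured in Lemma \ref{justbartau}. First I would dispose of the degenerate cases: if $F=\emptyset$ or $P'=\emptyset$, then by definition $\overline{\tau}((P,F),(P',F'))=0$, and trivially $F\cap P'=\emptyset$, so both sides of the claimed equivalence fail simultaneously. Thus from now on I may assume $F\neq\emptyset$ and $P'\neq\emptyset$, fix a past-directed chain $\{q_n\}$ generating $F$ (so $F=I^+[\{q_n\}]$) and a future-directed chain $\{p'_n\}$ generating $P'$ (so $P'=I^-[\{p'_n\}]$).

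For the implication $\Rightarrow$, suppose $\overline{\tau}((P,F),(P',F'))=\lim_n\tau(q_n,p'_n)>0$. Then some term of the sequence is positive, say $\tau(q_{n_0},p'_{n_0})>0$, which by positivity means $q_{n_0}\ll p'_{n_0}$. The key observation is that the generating point $q_{n_0}$ itself lies in both $F$ and $P'$: since $\{q_n\}$ is past-directed we have $q_{n_0+1}\ll q_{n_0}$, whence $q_{n_0}\in I^+[\{q_n\}]=F$; and from $q_{n_0}\ll p'_{n_0}$ we get $q_{n_0}\in I^-[\{p'_n\}]=P'$. Hence $q_{n_0}\in F\cap P'$, so $F\cap P'\neq\emptyset$.

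For the converse $\Leftarrow$, take a point $w\in F\cap P'$. Membership $w\in F$ furnishes an index $n_0$ with $q_{n_0}\ll w$, and membership $w\in P'$ furnishes an index $m_0$ with $w\ll p'_{m_0}$. Using that the two chains are monotone (a past-directed chain satisfies $q_n\ll q_{n_0}$ for $n\geq n_0$, and a future-directed one $p'_{m_0}\ll p'_m$ for $m\geq m_0$), I set $N:=\max\{n_0,m_0\}$, and transitivity of $\ll$ yields $q_N\ll w\ll p'_N$, so $\tau(q_N,p'_N)>0$. Since Lemma \ref{justbartau} shows that $\{\tau(q_n,p'_n)\}$ is increasing once a single term is positive, the limit obeys $\overline{\tau}((P,F),(P',F'))\geq\tau(q_N,p'_N)>0$.

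The argument is essentially bookkeeping, so I do not expect a serious obstacle; the only points that require care are the two membership claims — that a generating point of an IP/IF belongs to the set it generates, which rests precisely on the chain being strictly ordered — and the index alignment in the converse, where the monotonicity of both chains is what allows a single cofinal index $N$ to witness $q_N\ll p'_N$. Well-definedness (independence of the chosen chains) is not a concern here, as it has already been established in Lemma \ref{justbartau}.
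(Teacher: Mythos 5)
Your proof is correct and follows essentially the same route as the paper's: both directions reduce to observing that a positive term $\tau(q_{n},p'_{n})>0$ of the defining sequence produces a point of $F\cap P'$ (namely $q_{n}$ itself), and conversely that a point $w\in F\cap P'$ forces $q_N\ll w\ll p'_N$ for a suitable common index, after which the monotonicity from Lemma \ref{justbartau} gives a positive limit. Your version merely spells out the index alignment and the empty-set cases that the paper leaves implicit.
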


\begin{proof} If $F \cap P' \neq \emptyset$, by transivity there exist $q_n \in F$ and $p'_n \in P'$ with $q_n \ll p'_n$, and thus, $\tau (q_n, p'_n) > 0$. As we have seen before, the sequence $\tau(q_n,p'_{n})$ is increasing, so $\overline{\tau} ((P,F),(P',F')) > 0$.
	
	Reciprocally, if $\overline{\tau} ((P,F),(P',F'))>0$ then there exist $q_n\in F$ and $p'_n \in P'$ with $\tau (q_n , p'_n) > 0$, and again by transivity $F\cap P ' \neq \emptyset$.
\end{proof}

Denote by $\overline{\ll}$ the chronological relation associated to $\overline{\tau}$.  We can now prove the following result:

\begin{lemma}[Reverse triangle inequality]\label{reversetriangle}
Let $\overline{X}$ be the c-completion of a Lorentzian metric space $(X,\sigma,\tau)$. If $(P,F) \overline{\ll} (P',F') \overline{\ll} (P'',F'')$ for $(P,F),(P',F'),(P'',F'') \in \overline{X}$, then
\[ \overline{\tau} ((P,F), (P'',F'')) \geq \overline{\tau} ((P,F), (P',F')) + \overline{\tau} ((P',F'), (P'',F'')).\]
\end{lemma}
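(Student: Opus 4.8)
The plan is to unwind the definition of $\overline{\tau}$ and reduce the statement to an honest reverse triangle inequality for the underlying $\tau$ applied along well-chosen chains in $X$. Recall that $\overline{\tau}((P,F),(P',F'))$ is computed as $\lim_n \tau(q_n,p'_n)$ for a past-directed chain $\{q_n\}$ generating $F$ and a future-directed chain $\{p'_n\}$ generating $P'$; by Lemma \ref{justbartau} this limit is independent of the chosen chains, so I am free to select chains conveniently. The first step is to fix a single past-directed chain $\{q_n\}$ generating $F$, a single future/past adapted pair of chains generating the middle pair $(P',F')$, and a single future-directed chain $\{p''_n\}$ generating $P''$. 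Concretely, let $\{q_n\}$ generate $F$, let $\{p'_n\}$ be a future-directed chain generating $P'$ and $\{q'_n\}$ a past-directed chain generating $F'$, and let $\{p''_n\}$ generate $P''$. Then the three quantities in the inequality are, respectively, $\lim_n\tau(q_n,p''_n)$, $\lim_n\tau(q_n,p'_n)$, and $\lim_n\tau(q'_n,p''_n)$.

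The key structural input is that the hypothesis $(P',F')\,\overline{\ll}\,(P'',F'')$ means, by Lemma \ref{def:extendedrelations}, that $F'\cap P''\neq\emptyset$, and similarly $(P,F)\,\overline{\ll}\,(P',F')$ gives $F\cap P'\neq\emptyset$. The first inclusion lets me locate, for each large index, a point that is simultaneously above an element of the chain $\{q'_n\}$ (generating $F'$) and below an element of $\{p''_n\}$ (generating $P''$); the second does the analogous interlacing for $F$ and $P'$. The plan is to use these nonempty intersections, together with the fact that all four chains are genuine chronological chains in $X$, to produce for each $n$ a chronologically increasing string of points in $X$ of the form $q_m \ll p'_k \ll (\text{point of } F'\cap P'') \ll p''_l$ with indices $m,k,l$ chosen large. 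Once I have three (or more) genuinely $\tau$-related points of $X$ lined up, I apply the reverse triangle inequality of $\tau$ itself (Definition \ref{def:lorentzianmetricspace}(ii), which holds whenever consecutive $\tau$ values are positive) to get, at the level of $X$,
\[
\tau(q_m,p''_l)\;\geq\;\tau(q_m,p'_k)+\tau(p'_k,p''_l),
\]
and then pass to the limit, using monotonicity of the defining sequences and the chain-independence from Lemma \ref{justbartau}, to recover the desired inequality for $\overline{\tau}$.

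I would handle the degenerate cases first to keep the main argument clean: if either $\overline{\tau}((P,F),(P',F'))=0$ or $\overline{\tau}((P',F'),(P'',F''))=0$, the right-hand side is at most the single nonzero term, and the inequality follows from monotonicity of $\tau$ along the chains (a string $q_n\ll\cdots\ll p''_n$ forces $\tau(q_n,p''_n)$ to dominate either surviving term). Also, if $\overline{\tau}((P,F),(P'',F''))=\infty$ there is nothing to prove. So the substantive case is when all three values are finite and the two summands on the right are strictly positive, which is exactly the regime where $F\cap P'\neq\emptyset$ and $F'\cap P''\neq\emptyset$ and I can deploy the interlacing of chains.

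The main obstacle I anticipate is the bookkeeping of indices when synchronizing four different chains so that a single chronological string in $X$ realizes all three $\overline{\tau}$ limits simultaneously. The difficulty is that the intermediate pair $(P',F')$ is generated by two \emph{a priori} unrelated chains — $\{p'_n\}$ for $P'$ and $\{q'_n\}$ for $F'$ — and I must connect an element of $\{p'_n\}$ to an element of $\{q'_n\}$ chronologically in order to splice the two halves of the string together. This is precisely where the S-relation $P'\sim_S F'$ and the indecomposability of $P',F'$ should be invoked: because $P'$ is a maximal IP in $\downarrow F'$ (and $F'$ a maximal IF in $\uparrow P'$), elements generating $P'$ and elements generating $F'$ are cofinally interlaced, so for large $n$ one can find $p'_k\ll q'_m$ or at least route through a common point, furnishing the needed middle link $\tau(p'_k,q'_m)\geq 0$ and closing the reverse triangle inequality after taking limits.
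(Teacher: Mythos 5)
Your proposal is correct and takes essentially the same route as the paper: the paper fixes chains $\{q_n\},\{p'_n\},\{q'_n\},\{p''_n\}$ generating $F,P',F',P''$, uses the S-relation $P'\sim_S F'$ to obtain the middle link $p'_n\leq q'_n$ (exactly the splice you identify as the key obstacle), applies the reverse triangle inequality along $q_n\ll p'_n\leq q'_n\ll p''_n$, discards the nonnegative term $\tau(p'_n,q'_n)$, and takes limits using monotonicity and chain-independence. The only difference is presentational: the paper does this in one displayed inequality, while you spell out the interlacing of indices and the (superfluous, given that the hypothesis $\overline{\ll}$ already forces both right-hand summands to be positive) degenerate cases.
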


\begin{proof}
Let $\{q_n\}$ and $\{q'_n\}$ be past-directed chains generating $F$ and $F'$, respectively. Let $\{p'_n\}$ and $\{p''_n\}$ be future-directed chains generating $P'$ and $P''$, respectively. By the reverse triangle inequality for $\tau$ we have
\begin{equation}\label{gy}
\tau (q_n, p''_n) \geq \tau (q_n, p'_n) + \tau (p'_n,q'_n) + \tau (q'_n, p''_n) \geq \tau (q_n, p'_n) + \tau (q'_n, p''_n), 
\end{equation}
where we have used that $p'_n\leq q'_n$ (recall that $P'\sim_S F'$). We conclude by taking limits in (\ref{gy}), since the corresponding sequences are increasing (thus, convergent).
\end{proof}

Finally, we are going to show that $\bartau$ is lower semi-continuous. To this aim, first we need a technical lemma whose proof requires the following characterization of lower semi-continuity \cite[Section 6.2]{bourbaki}:

\begin{proposition}
A real-valued function $f$ on a topological space $X$ is lower semi-continuous if and only if, for each finite real number $k$, the set $f^{-1}(]k,+\infty])$ is open in $X$ or, equivalently, the set $f^{-1} ([-\infty, k])$ is closed in $X$.
\end{proposition}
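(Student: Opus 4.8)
The plan is to argue directly from the pointwise definition of lower semi-continuity: $f$ is lower semi-continuous at a point $x_0 \in X$ if, for every finite $\lambda < f(x_0)$, there exists a neighbourhood $U$ of $x_0$ such that $f(x) > \lambda$ for all $x \in U$; and $f$ is lower semi-continuous if this holds at every point. The first observation I would record is that the two set conditions in the statement are trivially equivalent, since $f^{-1}([-\infty,k])$ is precisely the complement of $f^{-1}(]k,+\infty])$ in $X$, so one is closed exactly when the other is open. It therefore suffices to prove the equivalence between lower semi-continuity and the openness of $f^{-1}(]k,+\infty])$ for every finite $k$.

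For the forward implication, I would assume $f$ is lower semi-continuous and fix a finite $k$. To see that $f^{-1}(]k,+\infty])$ is open, take an arbitrary point $x_0$ in this set, so that $f(x_0) > k$. Applying the definition at $x_0$ with the choice $\lambda = k$ (admissible since $k < f(x_0)$), one obtains a neighbourhood $U$ of $x_0$ on which $f > k$, that is, $U \subset f^{-1}(]k,+\infty])$. As $x_0$ was arbitrary, every point of the set is interior, and the set is open.

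For the converse, I would assume $f^{-1}(]k,+\infty])$ is open for every finite $k$, and fix $x_0 \in X$ together with a finite $\lambda < f(x_0)$. Then $x_0$ lies in the open set $f^{-1}(]\lambda,+\infty])$, so this very set serves as a neighbourhood $U$ of $x_0$ on which $f > \lambda$. This is exactly lower semi-continuity at $x_0$, and since $x_0$ was arbitrary, $f$ is lower semi-continuous.

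There is no genuine obstacle here; the only point requiring minor care is the codomain. Since $f$ may take values in the extended reals $[-\infty,+\infty]$, I would note that the argument is unaffected: at a point with $f(x_0)=+\infty$ every finite $\lambda$ is still strictly below $f(x_0)$, and at a point with $f(x_0)=-\infty$ the semi-continuity condition is vacuous, both of which are consistent with the level-set formulation. Thus the equivalence holds verbatim in the extended-real setting, and the two displayed set conditions are interchangeable by complementation.
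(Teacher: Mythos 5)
Your proof is correct, and it is the standard argument: the paper itself gives no proof of this proposition, importing it directly from the cited reference (Bourbaki), so your blind write-up is in effect reproducing exactly the classical demonstration the authors are relying on. Your closing remark on the extended-real codomain is a welcome touch, since the statement says ``real-valued'' while the sets $f^{-1}(]k,+\infty])$ and $f^{-1}([-\infty,k])$ presuppose values in $[-\infty,+\infty]$, and you resolve that mismatch correctly.
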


\begin{lemma}\label{lemma:lsc}
Let $(X,\sigma)$ be a topological space with $\sigma$ being the sequential topology associated to a limit operator $L$. A function $f: X \to [0,+\infty]$ is lower semi-continuous at $x$ if, for every $\epsilon >0$ and any sequence $\{x_n\} \subset X$ with $x \in L(\{x_n\})$, there exists a subsequence $\{x_{n_k}\}$ such that
\[ f(x_{n_k}) > f(x) - \epsilon\qquad\hbox{for all $k$}.\]
\end{lemma}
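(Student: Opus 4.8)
The plan is to reduce lower semi-continuity of $f$ to the closedness of its sublevel sets, and then to check that closedness directly against the defining property of the sequential topology $\sigma$. By the characterization of lower semi-continuity recalled in the Proposition just above (Bourbaki), $f$ is lower semi-continuous if and only if, for each finite real number $k$, the sublevel set $C_k := f^{-1}([-\infty,k]) = f^{-1}([0,k])$ (the last equality holds since $f\geq 0$) is closed in $(X,\sigma)$. Hence it suffices to prove that each $C_k$ is closed.

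Since $\sigma$ is the sequential topology associated with the limit operator $L$, a subset $C\subset X$ is closed precisely when $L(\{x_n\})\subset C$ for every sequence $\{x_n\}\subset C$. So I would fix a finite $k$, take an arbitrary sequence $\{x_n\}\subset C_k$ and any point $x\in L(\{x_n\})$, and aim to show $f(x)\leq k$, i.e. $x\in C_k$.

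The core is a short contradiction argument. Suppose $f(x)>k$ and set $\epsilon:=f(x)-k>0$. Applying the hypothesis of the lemma to this $\epsilon$ and to the sequence $\{x_n\}$ (which satisfies $x\in L(\{x_n\})$), I obtain a subsequence $\{x_{n_k}\}$ with $f(x_{n_k})>f(x)-\epsilon=k$ for all $k$. But $\{x_{n_k}\}$ still lies in $C_k$, so $f(x_{n_k})\leq k$ for all $k$, which is the desired contradiction. Therefore $f(x)\leq k$, so $x\in C_k$; this shows $C_k$ is closed and hence that $f$ is lower semi-continuous.

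The argument is essentially mechanical once the two characterizations are in place, so there is no genuinely hard computation. The only point that demands care is conceptual rather than technical: one must invoke the \emph{definition} of closedness for the sequential topology (the test against the operator $L$) rather than any notion of topological convergence, because $L$ need not be of first order (recall Remark \ref{rem:top}). Consequently it is enough to use the closedness test, and one never needs that $L$-limits coincide with $\sigma$-limits.
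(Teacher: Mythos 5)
Your proof is correct and follows essentially the same route as the paper's: both reduce lower semi-continuity to closedness of the sublevel sets $f^{-1}([0,a])$ via the Bourbaki characterization and then verify closedness using the sequential-topology test against the limit operator $L$. The only difference is that you phrase the last step as a contradiction with $\epsilon := f(x)-k$, while the paper argues directly that $f(x) < a+\epsilon$ for every $\epsilon>0$; these are logically identical.
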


\begin{proof}
We are going to prove that $C_a := f^{-1} ([0,a])$ is closed for every $a\in \mathbb{R}$; that is, for any sequence $\{x_n\} \subset C_a$ with $x \in L (\{x_n\})$, necessarily $x \in C_a$, that is, $f(x)\leq a$, or equivalently, $f (x) < a + \epsilon$ for any $\epsilon > 0$. Take $\epsilon >0$ arbitrary, and observe that $x_n \in C_a$ for every $n$. Hence, $0 \leq f(x_n) \leq a$. By assumption, there exists a subsequence $\{x_{n_k}\}$ such that
\[ f(x) - \epsilon < f(x_{n_k}) \leq a,\]
So, taking the limit in previous inequality, we deduce that $f (x) < a + \epsilon$, as required.
\end{proof}

\begin{lemma}[Lower semi-continuity]\label{lowersemicont}
The function $\overline{\tau}:\barX \times \barX \to [0,\infty]$ is lower semi-continuous with respect to the chronological topology $\sigma_{chr}$ of $\overline{X}$.
\end{lemma}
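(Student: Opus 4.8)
The plan is to prove lower semi-continuity of $\bartau$ by invoking Lemma \ref{lemma:lsc}. That lemma reduces the task to the following: given a point $((P,F),(P',F'))\in\barX\times\barX$ and any sequence $\{((P_n,F_n),(P'_n,F'_n))\}$ converging to it in the product chronological topology, and given $\epsilon>0$, I must produce a subsequence along which $\bartau$ stays above $\bartau((P,F),(P',F'))-\epsilon$. The whole argument is vacuous unless $\bartau((P,F),(P',F'))>0$, so I would first dispose of the case $\bartau((P,F),(P',F'))=0$, where the inequality $\bartau(\cdots)>-\epsilon$ holds trivially since $\bartau\geq 0$.

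So assume $a:=\bartau((P,F),(P',F'))>0$. By Lemma \ref{def:extendedrelations} this forces $F\cap P'\neq\emptyset$, and in particular $F\neq\emptyset$ and $P'\neq\emptyset$. Fix a past-directed chain $\{q_m\}$ generating $F$ and a future-directed chain $\{p'_m\}$ generating $P'$; by the definition of $\bartau$ and the monotonicity established in Lemma \ref{justbartau}, the increasing sequence $\tau(q_m,p'_m)$ converges to $a$, so I may fix an index $m_0$ with $\tau(q_{m_0},p'_{m_0})>a-\epsilon/2$. The point of this step is to replace the limit object $a$ by a single concrete chronological relation $q_{m_0}\ll p'_{m_0}$ witnessing a value close to $a$. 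Set $q:=q_{m_0}\in F$ and $p':=p'_{m_0}\in P'$, so that $\tau(q,p')>a-\epsilon/2$ and $q\ll p'$.

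**The main step** is to transfer this witness to the tail of the converging sequence. Since $((P_n,F_n),(P'_n,F'_n))\to((P,F),(P',F'))$ in the product topology, Definition \ref{def:chrtop} gives (along the first coordinates) $F\subset LI(F_n)$ as part of $F\in\check L(F_n)$, and (along the second coordinates) $P'\subset LI(P'_n)$ as part of $P'\in\hat L(P'_n)$. From $q\in F\subset LI(F_n)$ there exist, for $n$ large, points $q\in F_n$, i.e. $q$ eventually lies in $F_n$; likewise $p'\in P'\subset LI(P'_n)$ gives $p'\in P'_n$ for $n$ large. Thus for all sufficiently large $n$ we have $q\in F_n$ and $p'\in P'_n$ with $q\ll p'$, whence $F_n\cap P'_n\neq\emptyset$ and, by Lemma \ref{def:extendedrelations} again, $\bartau((P_n,F_n),(P'_n,F'_n))>0$. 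To get the quantitative lower bound I use monotonicity once more: choosing past/future chains generating $F_n$ and $P'_n$, the relations $q\in F_n$ and $p'\in P'_n$ mean these chains eventually dominate $q$ and are eventually dominated from below by $p'$ respectively, so the reverse triangle inequality yields $\bartau((P_n,F_n),(P'_n,F'_n))\geq\tau(q,p')>a-\epsilon/2>a-\epsilon$. Taking the subsequence of indices $n$ for which both memberships hold then furnishes exactly the subsequence required by Lemma \ref{lemma:lsc}.

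**The delicate point** I expect is the precise meaning of ``$q\in F\subset LI(F_n)$ implies $q\in F_n$ eventually,'' and the matching claim that membership of a generator in $F_n$ controls $\bartau$ from below. The inferior-limit inclusion $F\subset LI(F_n)$ literally says each point of $F$ lies in all but finitely many $F_n$, so the eventual membership is immediate; the second claim needs the observation that if $q\in F_n$ then any past-directed generating chain $\{q^{(n)}_k\}$ of $F_n$ satisfies $q\ll q^{(n)}_k$ for large $k$ (since $q\in F_n=I^+[\{q^{(n)}_k\}]$ forces $q^{(n)}_k\in I^+(q)$ cofinally), and symmetrically for $P'_n$, so the reverse triangle inequality applied to $q^{(n)}_k\gg q\ll\cdots$ along the generating chains pushes the value of $\bartau((P_n,F_n),(P'_n,F'_n))$ above $\tau(q,p')$. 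Assembling these and invoking Lemma \ref{lemma:lsc} completes the proof.
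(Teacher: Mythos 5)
Your proposal follows essentially the same route as the paper's proof: fix a near-optimal chronologically related pair $q\ll p'$ witnessing $\bartau((P,F),(P',F'))-\epsilon$, use the inferior-limit inclusions $F\subset LI(F_n)$, $P'\subset LI(P'_n)$ to place $q$ and $p'$ inside $F_n$ and $P'_n$ for large $n$, sandwich the generating chains of $F_n$ and $P'_n$ around $q\ll p'$, apply the reverse triangle inequality, and invoke Lemma \ref{lemma:lsc}. The one point to correct is the direction of the chain relation in your ``delicate point'': since $F_n=I^{+}[\{q^{(n)}_k\}]$ with $\{q^{(n)}_k\}$ past-directed, $q\in F_n$ forces $q^{(n)}_k\ll q$ (i.e. $q^{(n)}_k\in I^{-}(q)$) for all large $k$, not $q\ll q^{(n)}_k$ as you wrote; only with the corrected direction does the configuration $q^{(n)}_k\ll q\ll p'\ll r'^{(n)}_k$ hold, and this is exactly what the reverse triangle inequality needs to give $\tau(q^{(n)}_k,r'^{(n)}_k)\geq\tau(q,p')$ and hence the lower bound on $\bartau((P_n,F_n),(P'_n,F'_n))$. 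With that sign flipped, your argument coincides with the paper's.
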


\begin{proof}
Consider two pairs $(P,F),(P',F') \in \barX$, and two sequences $\left\{ (P_{k},F_{k}) \right\},\left\{ (P'_{k},F'_{k}) \right\}$ with $(P,F)\in L(\left\{(P_k , F_k) \right\})$ and $(P',F')\in L(\left\{(P'_k , F'_k) \right\})$. If either $F$ or $P'$ is empty then, from definition $0=\overline{\tau}((P,F),(P',F'))\leq \overline{\tau}((P_{k},F_{k}),(P_{k}',F_{k}'))$ for all $k$. So, assume that both $F$ and $P'$ are non-empty. Since $F\subset \mathrm{LI}(\left\{ F_k \right\})$ and $P'\subset \mathrm{LI}(\left\{ P'_k \right\})$, necessarily $F_k\neq \emptyset\neq P'_k$ for $k$ big enough. Consider a past-directed chain $\{q_n\}$ generating $F$, a future-directed chain $\{p'_n\}$ generating $P'$ and, for each $k$, a future-directed chain $\{r'_{n,k}\}$ generating $P'_k$ and a past-directed chain $\left\{ s_{n,k} \right\}$ generating $F_k$. {Since the sequence $\tau (q_n , p'_n)$ is strictly increasing whenever $q_n \ll p'_n$, then $$\bartau((P,F),(P',F')) = \lim_n \tau (q_n,p'_n) = \sup \{\tau (q_n, p'_n):n\in \mathbb{N}\}.$$ So, given $\epsilon>0$, there exists $N \in \mathbb{N}$ such that 
\[\tau (q_N, p'_N) > \bartau ((P,F),(P',F')) - \epsilon.\] 
Taking into account that $F\subset \mathrm{LI}(F_k)$ and $P' \subset \mathrm{LI} (P'_k)$, we get, for $n$ large enough and for all $k$ (up to a finite number) that,
\[ s_{n,k} \ll q_N \ll p'_N \ll r'_{n,k}. \]
Then, the reverse triangle inequality for $\tau$ ensures that
\[ \tau (s_{n,k} , r'_{n,k}) > \tau (q_N, p'_N)  > \bartau ((P,F),(P',F')) - \epsilon,\]
and, by taking limits in $n$, implies
\[ \overline{\tau} ((P_{k},F_{k}), (P'_k, F'_k)) \geq \overline{\tau} ((P,F), (P',F')- \epsilon.\]
By Lemma \ref{lemma:lsc} we conclude that $\bartau$ is lower semi-continuous.}
\end{proof}

\noindent Summarizing:

\begin{theorem}
If $(X,\sigma,\tau)$ is a separable and chronologically dense Lorentzian metric space that satisfies the S-property, then $(\overline{X},\overline{\sigma}:=\sigma_{chr},\overline{\tau})$ is also a Lorentzian metric space.
\end{theorem}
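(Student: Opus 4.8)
**The plan is to verify the three defining conditions of a Lorentzian metric space (Definition \ref{def:lorentzianmetricspace}) for the triple $(\overline{X},\sigma_{chr},\overline{\tau})$, reducing each to a result already established in this section.** A Lorentzian metric space requires: a topological set, a function $\overline{\tau}:\overline{X}\times\overline{X}\to[0,\infty]$, lower-semicontinuity of $\overline{\tau}$ with respect to the topology, and the reverse triangle inequality for $\overline{\tau}$. Crucially, the hypotheses (separability, chronological density, and the S-property) are exactly what guarantee that $\overline{X}$ and $\overline{\tau}$ are even well-defined objects, so the first task is to confirm that the construction goes through.

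First I would observe that, under the stated hypotheses, Definition \ref{def:ccompLPLS} produces a genuine set $\overline{X}$ of $S$-related pairs, and that $\sigma_{chr}$ (Definition \ref{def:chrtop}) endows it with a topology; this makes $(\overline{X},\sigma_{chr})$ a topological set. Next, Lemma \ref{justbartau} shows that $\overline{\tau}$ is a well-defined function into $[0,\infty]$, since the defining limit exists and is independent of the choice of generating chains — here chronological density is what ensures every $P$ and $F$ is generated by an appropriate chain, so the formula in the definition of $\overline{\tau}$ makes sense for every pair in $\overline{X}$.

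The two substantive axioms are then already in hand. The reverse triangle inequality (condition (ii) of Definition \ref{def:lorentzianmetricspace}) is precisely the content of Lemma \ref{reversetriangle}: whenever $(P,F)\,\overline{\ll}\,(P',F')\,\overline{\ll}\,(P'',F'')$, we have the desired superadditivity of $\overline{\tau}$. I would note that the chronological relation $\overline{\ll}$ attached to $\overline{\tau}$ (via $\overline{\tau}>0$) is exactly the relation characterized in Lemma \ref{def:extendedrelations} by $F\cap P'\neq\emptyset$, so the hypothesis $\overline{\tau}(\cdots)>0$ in the triangle inequality is the honest chronological relation of the new space and the two lemmas fit together without friction. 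The lower-semicontinuity (condition (i)) is established in Lemma \ref{lowersemicont}, which shows $\overline{\tau}$ is lower-semicontinuous with respect to $\sigma_{chr}$ via the subsequence criterion of Lemma \ref{lemma:lsc}.

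Thus the proof is essentially an assembly: cite Lemma \ref{justbartau} for well-definedness, Lemma \ref{reversetriangle} for the reverse triangle inequality, and Lemma \ref{lowersemicont} for lower-semicontinuity, concluding that all axioms of Definition \ref{def:lorentzianmetricspace} hold. I do not expect a genuine obstacle here, since all the analytic work has been front-loaded into the preceding lemmas; the only point demanding a word of care is confirming that the hypotheses on $(X,\sigma,\tau)$ are exactly those required to invoke each lemma — in particular that separability plus chronological density (which gives the identification of points with PIPs/PIFs via Lemma \ref{thm:main:6}) together with the S-property are what legitimize forming $\overline{X}$ in the first place, after which lower-semicontinuity and the reverse triangle inequality follow for the already-constructed $\overline{\tau}$.
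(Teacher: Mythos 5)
Your proposal is correct and follows exactly the paper's route: the theorem is stated there as a summary (``Summarizing:'') of the preceding lemmas, so its proof is precisely the assembly you describe — Lemma \ref{justbartau} for well-definedness, Lemma \ref{reversetriangle} for the reverse triangle inequality, and Lemma \ref{lowersemicont} for lower semi-continuity with respect to $\sigma_{chr}$. Your additional remark that the hypotheses are what legitimize forming $\overline{X}$ and generating chains in the first place matches the paper's setup of the construction.
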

\noindent As a direct consequence:
\begin{corollary}
  The c-completion of any strongly causal Lorentzian length space is defined and it has a structure of Lorentzian metric space.
\end{corollary}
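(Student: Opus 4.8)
The plan is to combine the results already established for $\overline{\tau}$ with the verification that $(\overline{X}, \sigma_{chr})$ is genuinely a topological space, and then simply invoke Definition \ref{def:lorentzianmetricspace}. Recall that to qualify as a Lorentzian metric space, the triple $(\overline{X}, \sigma_{chr}, \overline{\tau})$ must satisfy three things: that $\sigma_{chr}$ is a topology on $\overline{X}$, that $\overline{\tau}$ is lower-semicontinuous with respect to $\sigma_{chr}$, and that $\overline{\tau}$ obeys the reverse triangle inequality. The strategy is therefore to assemble these three ingredients, most of which are furnished by the preceding lemmas.

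First I would observe that the hypotheses---separability, chronological density, and the S-property---are precisely what guarantees that $\overline{X}$ is well defined as a set (via Definition \ref{def:ccompLPLS}) and that the inclusion $\mathbf{i}:X\to\overline{X}$, $x\mapsto(I^-(x),I^+(x))$, makes sense, since chronological density together with the S-property ensure that each point is identified with an S-related pair of its proper indecomposable sets. Next, the sequential topology $\sigma_{chr}$ is well defined by the limit operator $L$ of Definition \ref{def:chrtop}: one declares a set closed precisely when it is stable under $L$, and this always produces a topology regardless of whether $L$ is of first order. Thus the topological structure $(\overline{X},\sigma_{chr})$ is automatic.

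It then remains only to check the two axioms on $\overline{\tau}$. By Lemma \ref{justbartau}, the map $\overline{\tau}:\overline{X}\times\overline{X}\to[0,\infty]$ is well defined (the defining limit exists and is independent of the generating chains). The reverse triangle inequality, in the form required by Definition \ref{def:lorentzianmetricspace}(ii)---namely whenever the two intermediate values are positive---is exactly the content of Lemma \ref{reversetriangle}, since $\overline{\tau}((P,F),(P',F'))>0$ and $\overline{\tau}((P',F'),(P'',F''))>0$ means precisely $(P,F)\,\overline{\ll}\,(P',F')\,\overline{\ll}\,(P'',F'')$ by the definition of $\overline{\ll}$. Finally, lower-semicontinuity of $\overline{\tau}$ with respect to $\sigma_{chr}$ is established in Lemma \ref{lowersemicont}. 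Putting these together, $(\overline{X},\sigma_{chr},\overline{\tau})$ verifies every clause of Definition \ref{def:lorentzianmetricspace}, and hence is a Lorentzian metric space.

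The subtle point to be careful about---and the step I expect requires the most attention---is that the product space $\overline{X}\times\overline{X}$ must carry a topology compatible with the one in which the lemmas were proved. Lemma \ref{lowersemicont} works directly with the limit operator on each factor (it assumes $(P,F)\in L(\{(P_k,F_k)\})$ and $(P',F')\in L(\{(P'_k,F'_k)\})$ separately), so one should confirm that lower-semicontinuity in this sense corresponds to lower-semicontinuity for the appropriate product topology on $\overline{X}\times\overline{X}$, as invoked by Lemma \ref{lemma:lsc}. Once this compatibility is granted, the theorem follows immediately by collecting Lemmas \ref{justbartau}, \ref{reversetriangle}, and \ref{lowersemicont}. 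The corollary is then a direct specialization: a strongly causal Lorentzian length space is in particular separable and, by Proposition \ref{thm:main:4}, chronologically dense, and by Proposition \ref{prop:main:2} satisfies the S-property, so all hypotheses of the theorem are met.
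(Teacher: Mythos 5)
Your assembly of the theorem itself---collecting Lemmas \ref{justbartau}, \ref{reversetriangle} and \ref{lowersemicont}, noting that the sequential topology $\sigma_{chr}$ is a topology irrespective of whether the limit operator is of first order, and matching the positivity hypothesis of the reverse triangle inequality with the definition of $\overline{\ll}$ via Lemma \ref{def:extendedrelations}---is exactly the route the paper takes: the theorem is presented as a summary of those lemmas and the corollary as a ``direct consequence''. The genuine problem lies in your final specialization. You assert that a strongly causal Lorentzian length space ``is in particular separable'', but separability here means Harris separability (Remark \ref{rem:main:3}, item 3): the existence of a \emph{countable} subset $S$ interpolating every chronologically related pair. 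This is not part of Definition \ref{def:KSlls} and does not follow from strong causality; the underlying metric space $(X,d)$ of a Lorentzian length space need not be separable (think of an uncountable disjoint union of copies of a strongly causal length space), and then no countable $S$ can interpolate all chronological pairs. Since your derivations of chronological density (Proposition \ref{thm:main:4}) and of the S-property (Proposition \ref{prop:main:2}) both take separability as a hypothesis, the entire verification of the theorem's hypotheses hangs on this unjustified claim.

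What the length-space structure actually provides---and what the paper implicitly leans on---is causal and timelike path-connectedness: if $y\ll x$, sampling points along a timelike curve from $y$ to $x$ yields a future chain with topological limit $x$, which gives chronological density directly, and the paper remarks just before Proposition \ref{prop:main:2} that the S-property can be obtained from strong causality plus causal path-connectedness ``with essentially the same proof''. Taking that route avoids Propositions \ref{thm:main:4} and \ref{prop:main:2} altogether. What then still requires an argument is the role separability plays inside the theorem itself, namely Harris's result that every IP/IF is generated by a chain, which is what makes $\overline{\tau}$ well defined on all of $\overline{X}$; for length spaces this should again come from the curve structure rather than from a countable set, but it has to be stated. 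As written, your proof does not go through for a strongly causal Lorentzian length space that fails Harris separability unless these substitute arguments are supplied.
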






\subsection{Remarkable properties of the c-completion}

In this subsection we are going to prove a couple of results showing that, under mild hypothesis, the c-completion of a Lorentzian metric space is actually a completion (in the sense of Definition \ref{defcompl}) that possesses a number of additional satisfactory properties. The first result can be formulted as follows: 

\begin{theorem}
\label{thm:main:2}
Let $(X,\sigma,\tau)$ be a separable and strongly causal Lorentzian metric space, with $\sigma$ a Hausdorff topology. Then, the c-completion $(\overline{X},\overline{\sigma}:=\sigma_{chr}, \overline{\tau})$ is a $T_1$ completion with closed boundary $\partial X=\overline{X}\setminus X$. Moreover, if $X$ is connected then $\overline{X}$ is also connected.
\end{theorem}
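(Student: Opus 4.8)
The plan is to first record that the construction even makes sense: since $(X,\sigma,\tau)$ is separable and strongly causal, Proposition~\ref{thm:main:4} makes it chronologically dense, and Proposition~\ref{prop:main:2} (this is where the Hausdorff hypothesis on $\sigma$ enters) gives the S-property; hence $\overline{X}$ is defined and, by the preceding theorem, $(\overline{X},\sigma_{chr},\bartau)$ is a Lorentzian metric space. To prove it is a \emph{completion} in the sense of Definition~\ref{defcompl} I would verify, in order, the embedding together with the three defining properties: distance preservation, chronological completeness and tightness, and then the purely topological assertions.

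For the embedding and condition~\ref{defcompl1}, I would invoke Remark~\ref{rem:top}\ref{rem:top:2}: under strong causality and chronological density the space topology $\sigma$ agrees with $\sigma_{chr}$, so $\bfi(x)=(I^-(x),I^+(x))$ is a topological embedding, injectivity being the distinguishing property implied by strong causality. Distance preservation is a squeeze argument: choose a past-directed chain $\{q_n\}$ generating $I^+(x)$ and a future-directed chain $\{p_n\}$ generating $I^-(y)$, so $q_n\to x$ and $p_n\to y$ with $x\ll q_n$ and $p_n\ll y$; whenever $q_n\ll p_n$ the reverse triangle inequality gives $\tau(q_n,p_n)\le\tau(x,y)$, while lower semicontinuity of $\tau$ forces $\liminf_n\tau(q_n,p_n)\ge\tau(x,y)$, whence $\bartau(\bfi(x),\bfi(y))=\tau(x,y)$.

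Chronological completeness I would handle with the standard indecomposability machinery. Given a $\bartau$-future chain $\{(P_n,F_n)\}$, Lemma~\ref{def:extendedrelations} yields $F_n\cap P_{n+1}\neq\emptyset$; combining this with $P_n\subset\downarrow F_n$ (from $P_n\sim_S F_n$) shows $z\ll w$ for $z\in P_n$, $w\in F_n$, so $z\ll w\in P_{n+1}$ gives $P_n\subset P_{n+1}$, i.e. the $P_n$ increase. Then $P:=\bigcup_n P_n$ is an increasing union of IPs, hence itself an IP, and a maximal IF $F$ in $\uparrow P$ (via Zorn) produces $(P,F)\in\overline{X}$; since $LI(P_n)=LS(P_n)=P$, this pair lies in $L(\{(P_n,F_n)\})$ and is the required endpoint. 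Tightness follows in the same spirit: for $(P,F)$ with $P\neq\emptyset$ generated by a future chain $\{x_n\}$, one computes $LI(I^-(x_n))=LS(I^-(x_n))=\bigcup_n I^-(x_n)=P$ and verifies the dual condition for $F$, so $\bfi(x_n)\to(P,F)$ and $\bfi(X)$ is dense.

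Finally, the topological assertions. The $T_1$ property is immediate from the limit operator: for the constant sequence at $(P,F)$ one has $LI=LS=P$, and the unique maximal IP contained in $P$ is $P$ itself (dually for $F$), so $L$ returns exactly $\{(P,F)\}$ and singletons are closed. Connectedness is equally short: $\bfi(X)$ is homeomorphic to the connected $X$ and, by tightness, dense, so $\overline{X}=\mathrm{cl}(\bfi(X))$ is connected. The genuine obstacle is the closed boundary, which I would attack by showing instead that $\bfi(X)$ is open, using the chronological diamonds $\hat{I}(\bfi(p),\bfi(q)):=\{(P,F):\bartau(\bfi(p),(P,F))>0 \text{ and } \bartau((P,F),\bfi(q))>0\}$. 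These are open by Lemma~\ref{lowersemicont}, and by Lemma~\ref{def:extendedrelations} they coincide with $\{(P,F):p\in P,\ q\in F\}$. For each $x$, strong causality furnishes $p\ll x\ll q$ with $\bfi(x)\in\hat{I}(\bfi(p),\bfi(q))$; the crux is to use the neighbourhood basis of shrinking diamonds at $x$ together with the S-property to force that, for $p,q$ close enough, every $(P,F)$ with $p\in P$ and $q\in F$ has $P$ a PIP, hence $(P,F)\in\bfi(X)$. Controlling the generating chains of such a pair inside the shrinking diamonds — ruling out that they escape to yield a TIP while still meeting the diamond — is where I expect the main difficulty, and where Hausdorffness and strong causality must be combined.
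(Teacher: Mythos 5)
Your overall architecture --- chronological density via Proposition \ref{thm:main:4}, the S-property via Proposition \ref{prop:main:2} (where Hausdorffness enters), then embedding, distance preservation, chronological completeness, tightness, $T_1$, closed boundary and connectedness --- is exactly the paper's decomposition, and your arguments for the embedding, distance preservation, $T_1$ and connectedness are essentially the ones given there. The genuine gap is in the completeness and tightness steps: you conclude $(P,F)\in L(\{(P_n,F_n)\})$ from the identity $LI(P_n)=LS(P_n)=P$ alone. By Definition \ref{def:chrtop} this only yields $P\in\hat{L}(P_n)$; whenever $F\neq\emptyset$ you must \emph{also} show $F\subset LI(F_n)$ and that $F$ is a maximal IF in $LS(F_n)$, and nothing in your construction controls the sequence of future components (resp.\ the sets $I^+(x_n)$ in the tightness step). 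This is precisely the content of the paper's technical Lemma \ref{thm:main:3}, whose proof is adapted from \cite{FHSSeparability}; the phrase ``verifies the dual condition for $F$'' conceals the hardest part of the argument rather than carrying it out. A smaller but related slip: choosing $F$ as a Zorn-maximal IF in $\uparrow P$ does not by itself give $P\sim_S F$ (maximality of $P$ in $\downarrow F$ is also required); the correct statement is that every nonempty IP occurs as the first component of \emph{some} pair of $\overline{X}$, by the convention $P\sim_S\emptyset$ in Definition \ref{srelation}.

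On the closed boundary, you correctly reduce to the openness of $\bfi(X)$ in $\overline{X}$ and propose a plausible route via chronological diamonds, but you explicitly leave the decisive step (excluding terminal pairs from sufficiently small diamonds) unresolved, so as written this clause of the theorem is not established. To be fair, the paper itself only asserts that $\bfi(X)$ is open; nevertheless, a complete proof must supply this, e.g.\ by showing that a sequence in $\partial X$ cannot $L$-converge to a pair $(I^-(x),I^+(x))$ under strong causality. The remaining minor deviations (increasing unions of IPs instead of the explicit chain $\{z_n\}$ with $P_n\subset I^-(z_n)\subset P_{n+1}$) are harmless reformulations of the paper's argument.
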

\noindent The proof of this theorem will be broken down into several steps given by different results. Here $(X,\sigma , \tau)$ satisfies the $S$-property, since it is a separable and strongly causal Lorentzian metric space with $\sigma$ Hausdorff (recall Proposition \ref{prop:main:2}). First, we are going to prove that $(\overline{X},\overline{\sigma}:=\sigma_{chr},\overline{\tau})$ is an extension of $(X,\sigma,\tau)$ (Definition \ref{exten:dis}).

\begin{proposition}
The inclusion  $\bfi:(X,\sigma)\rightarrow (\overline{X},\overline{\sigma}:=\sigma_{chr})$ is a topological embedding.
\end{proposition}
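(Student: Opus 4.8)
The goal is to show that the inclusion map $\bfi:(X,\sigma)\to(\overline{X},\sigma_{chr})$, given by $x\mapsto(I^-(x),I^+(x))$, is a topological embedding. This requires three things: that $\bfi$ is injective, that $\bfi$ is continuous, and that $\bfi$ is open onto its image (equivalently, that $\bfi^{-1}$ is continuous on $\bfi(X)$).

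My plan is to reduce everything to the comparison between $\sigma$ and the chronological topology $\sigma_{chr}$ restricted to $\bfi(X)$. First I would dispatch \textbf{injectivity}: if $\bfi(x)=\bfi(y)$ then $I^-(x)=I^-(y)$ and $I^+(x)=I^+(y)$, and since a strongly causal (hence distinguishing) space separates points by their pasts and futures, $x=y$. Next, the key structural fact is that $\sigma$ agrees with $\sigma_{chr}$ on $X$. This is exactly the content flagged in Remark \ref{rem:top} \ref{rem:top:2}: for a chronologically dense chronological set (and $X$ is chronologically dense by Proposition \ref{thm:main:4}, being separable and strongly causal), the chronological topology coincides with the space topology. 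Concretely, I would argue that a sequence $\{x_k\}\subset X$ converges to $x$ in $\sigma$ if and only if $\bfi(x)\in L(\{\bfi(x_k)\})$, i.e.\ $I^-(x)\in\hat L(I^-(x_k))$ and $I^+(x)\in\check L(I^+(x_k))$. The forward direction uses lower semicontinuity of $\tau$ together with Proposition \ref{thm:main:1} (if $y\ll x$ then $y\ll x_k$ eventually, giving $I^-(x)\subset \mathrm{LI}(I^-(x_k))$, and maximality into $\mathrm{LS}$ follows from strong causality and chronological density). The reverse direction uses strong causality: if $\bfi(x)\in L(\{\bfi(x_k)\})$, then for any basic Alexandrov neighbourhood $I(p,q)\ni x$ one shows $x_k\in I(p,q)$ eventually, using $p\in I^-(x)\subset\mathrm{LI}(I^-(x_k))$ and $q\in I^+(x)\subset \mathrm{LI}(I^+(x_k))$, so $x_k$ converges to $x$ in the Alexandrov topology, which equals $\sigma$ by strong causality.

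Having established that $\sigma$ and $\sigma_{chr}|_{\bfi(X)}$ induce the same convergent sequences, I would conclude that $\bfi$ is a homeomorphism onto its image, since both topologies here are sequential (the chronological topology is defined as a sequential topology, and for the comparison on $X$ it suffices to compare convergent sequences). Continuity of $\bfi$ follows from the forward implication above, and continuity of $\bfi^{-1}$ from the reverse implication; openness onto the image is equivalent to the latter. Thus $\bfi$ is a topological embedding.

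The \textbf{main obstacle} I anticipate is the careful verification of the maximality clauses in the definition of the limit operator $L$ (Definition \ref{def:chrtop}): showing not merely $I^-(x)\subset \mathrm{LI}(I^-(x_k))$ but that $I^-(x)$ is a \emph{maximal} IP inside $\mathrm{LS}(I^-(x_k))$, and dually for futures. This is where chronological density and strong causality must be combined delicately — the argument parallels the proof of Proposition \ref{prop:main:2}, where one rules out a strictly larger indecomposable set absorbing the limit by trapping a generating chain inside an Alexandrov neighbourhood $I(p,q)$ that excludes it. A secondary subtlety is that $\sigma_{chr}$ is only a sequential topology, so strictly speaking one must ensure that the embedding property can indeed be tested on sequences; this is legitimate because the relevant comparison reduces to convergence of sequences and $\sigma$ itself agrees with the (sequential) Alexandrov topology under strong causality.
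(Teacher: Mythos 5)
Your proof takes essentially the same route as the paper's: injectivity from the distinguishing property implied by strong causality, plus the identification of $\sigma$ with the chronological topology on $\bfi(X)$, which the paper obtains by citing \cite{flores:revisited} and \cite{onthefinal} through the chain $\sigma = \hbox{Alexandrov} = \hbox{CEAT} = \sigma_{chr}|_{\bfi(X)}$ for full, strongly causal, chronologically dense spaces --- exactly the content you reconstruct with the limit operator $L$. The only caveat is that comparing the two topologies purely through convergent sequences presupposes that $\sigma$ (equivalently, the Alexandrov topology) is itself sequential; this should either be made explicit or circumvented by checking directly that the chronological diamonds of $\overline{X}$, which are open for $\sigma_{chr}$, trace out the Alexandrov basis on $\bfi(X)$.
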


\begin{proof}
  The arguments are essentially the same as the ones in the proof of \cite[Theorems 4.2 and 6.3]{flores:revisited}. Since $(X, \sigma, \tau)$ is strongly causal, it is distinguishing and $\bfi$ is injective. Moreover, $\sigma$ coincides with the Alexandrov topology (and also with CEAT), and CEAT coincides with the chronological topology whenever \emph{full spaces}\footnote{Recall that any strongly causal Lorentzian metric space is \emph{full}, \cite[Proposition 3.16]{onthefinal}} are considered (see also \cite[Remark 3.17]{onthefinal}).
\end{proof}

\smallskip

\noindent We have verified that condition (i) of Definition \ref{exten:dis} holds. Next, we are not just going to prove that condition (ii) holds, but also that the c-completion preserves the Lorentzian distances (in the sense of Definition \ref{defcompl} \ref{defcompl1}):

\begin{proposition}\label{prop:tauextiende}
The function $\overline{\tau}$ {\em extends} $\tau$ to $\overline{X}$ via $\bfi$, that is, if $(P,F) = (I^{-}(q) , I^{+} (q))$ and $(P',F') = (I^{-} (p') , I^{+} (p'))$ then 
$$\overline{\tau} ((P,F), (P',F')) = \tau (q,p').$$
\end{proposition}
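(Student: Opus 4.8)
The plan is to compute $\overline{\tau}((P,F),(P',F'))$ directly from its definition when both pairs come from points of $X$ via the inclusion $\bfi$. By definition, $\overline{\tau}((P,F),(P',F'))=\lim_n \tau(q_n,p'_n)$ for any past-directed chain $\{q_n\}$ generating $F=I^+(q)$ and any future-directed chain $\{p'_n\}$ generating $P'=I^-(p')$. So the task reduces to showing that this limit equals $\tau(q,p')$, regardless of the chains chosen (independence is already guaranteed by Lemma \ref{justbartau}).

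First I would exploit chronological density of $(X,\sigma,\tau)$ (which holds here by Proposition \ref{thm:main:4}, since $X$ is separable and strongly causal). Because $F=I^+(q)$, chronological density lets me choose a concrete past-directed chain $\{q_n\}\subset I^+(q)$ converging topologically to $q$; similarly, a future-directed chain $\{p'_n\}\subset I^-(p')$ converging to $p'$. The key will be to bound $\lim_n \tau(q_n,p'_n)$ from both sides by $\tau(q,p')$. For the upper bound ``$\leq$'': since $q\ll q_n$ and $p'_n\ll p'$ (as the chains lie in $I^+(q)$ and $I^-(p')$ respectively), for each $n$ with $\tau(q_n,p'_n)>0$ the push-up/reverse triangle inequality gives $\tau(q,p')\geq \tau(q,q_n)+\tau(q_n,p'_n)+\tau(p'_n,p')\geq \tau(q_n,p'_n)$, so the whole increasing sequence is bounded above by $\tau(q,p')$, hence the limit is $\leq \tau(q,p')$.

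For the lower bound ``$\geq$'' I would use lower semi-continuity of $\tau$. Since $q_n\to q$ and $p'_n\to p'$ in $\sigma$, the lower semi-continuity of $\tau$ (property (i) of Definition \ref{def:lorentzianmetricspace}) yields $\liminf_n \tau(q_n,p'_n)\geq \tau(q,p')$. Combined with the increasing character of $\tau(q_n,p'_n)$ established in Lemma \ref{justbartau}, this forces $\lim_n \tau(q_n,p'_n)\geq \tau(q,p')$. Together with the upper bound, this gives $\overline{\tau}(\bfi(q),\bfi(p'))=\tau(q,p')$, as desired.

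The main obstacle I anticipate is handling the degenerate case $\tau(q,p')=0$ cleanly and making sure the chains can simultaneously be taken to converge to the right points while still generating $F$ and $P'$. When $\tau(q,p')=0$ (i.e. $q\not\ll p'$), the upper bound argument already pins the increasing sequence $\tau(q_n,p'_n)$ at $0$, since $\tau(q_n,p'_n)\leq \tau(q,p')=0$; thus the limit is $0=\tau(q,p')$ and lower semi-continuity is not even needed. The only subtlety is verifying that the \emph{specific} convergent chains provided by chronological density indeed generate $F=I^+(q)$ and $P'=I^-(p')$ as indecomposable sets; this follows from Lemma \ref{thm:main:6} (under separability and chronological density) together with Proposition \ref{thm:main:4}, so it is routine rather than genuinely difficult. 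Once these points are in place, the equality drops out immediately and also secures condition (ii) of Definition \ref{def:extension} together with the distance-preservation requirement \ref{defcompl1} of Definition \ref{defcompl}.
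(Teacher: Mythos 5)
Your proof is correct and follows essentially the same route as the paper's: both squeeze $\tau(q_n,p'_n)$ between $\tau(q,p')-\delta$ (via lower semi-continuity, using that the chains converge to $q$ and $p'$) and $\tau(q,p')$ (via the reverse triangle inequality applied to $q\ll q_n$ and $p'_n\ll p'$). The only cosmetic difference is that the paper observes that \emph{any} generating chain of a proper indecomposable set converges to the corresponding point, whereas you select particular convergent chains via chronological density and invoke Lemma \ref{justbartau} for chain-independence; your treatment of the degenerate case $\tau(q,p')=0$ is a welcome extra detail.
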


\begin{proof}
Let $\{q_n\}$, $\{p'_n\}$ be past and future-directed chains generating $F$ and $P'$, resp. Since $F$ and $P'$ are proper, necessarily
$$q_n \searrow q \qquad \text{and} \qquad p'_n \nearrow p.$$
Given $\delta > 0$, the lower semi-continuity of $\tau$ implies that
\[ \tau (q,p') \geq \tau ( q_n , p'_n) > \tau (q,p') - \delta\quad\hbox{for $n$ big enough}.\]
So, taking $\delta>0$ arbitrarily small, we deduce that 
$$\overline{\tau} ((P,F),(P',F')) = \tau (q,p').$$
\end{proof}

\smallskip


Now, we are in conditions to give the proof of the first theorem of this section (compare with \cite[Theorem 3.27]{onthefinal}). {To this goal, first we recall the following technical lemma, whose proof can be easily adapted from the one in \cite[Theorem 4.4]{FHSSeparability} (see also \cite[Proposition 3.4]{flores:revisited})}:

\begin{lemma}
\label{thm:main:3}
{If $(P,F)\in \overline{X}$ satisfies $P=I^-[\sigma]$ (resp. $F=I^+[\sigma]$) for some chain  $\sigma=\left\{ (P_n,F_n) \right\}_n$, then $(P,F)\in L(\left\{ (P_n,F_n)\right\})$.}
\end{lemma}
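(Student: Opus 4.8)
The plan is to prove this via a chronological-density argument, showing that the chain $\sigma=\{(P_n,F_n)\}$ converges to $(P,F)$ in the limit operator $L$ of Definition \ref{def:chrtop}. I focus on the past case, where $P=I^-[\sigma]$; the future case is dual. By the definition of the limit operator, I must verify two things: that $P\in\hat L(P_n)$, and that, if $F\neq\emptyset$, then $F\in\check L(F_n)$. For the first, recall that $\hat L$ requires $P\subset \mathrm{LI}(P_n)$ together with $P$ being a maximal IP inside $\mathrm{LS}(P_n)$.

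First I would establish the inclusion $P\subset \mathrm{LI}(P_n)$. Since $\sigma$ is a chain in $\overline X$, the components $\{(P_n,F_n)\}$ are $\overline\ll$-ordered, and by Lemma \ref{def:extendedrelations} this chronological order translates into inclusions among the past sets: $P_n\subset P_{n+1}$ for a future chain (or the reverse for a past chain). Monotonicity of the $P_n$ makes $\mathrm{LI}(P_n)$ and $\mathrm{LS}(P_n)$ coincide with the union $\bigcup_n P_n$. Now the hypothesis $P=I^-[\sigma]$ says precisely that $P$ is the common past generated by the chain, so any $p\in P$ lies chronologically below some element of the chain, hence inside some $P_n$ once $n$ is large; thus $p\in\mathrm{LI}(P_n)$, giving $P\subset\mathrm{LI}(P_n)=\mathrm{LS}(P_n)=\bigcup_n P_n$. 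Since $P$ is an IP (being a component of a pair in $\overline X$) contained in $\mathrm{LS}(P_n)$, and in fact equals $\bigcup_n P_n$ up to the identity $P=I^-[\sigma]$, maximality of $P$ as an IP into $\mathrm{LS}(P_n)$ follows: no strictly larger IP can sit inside a set that $P$ already exhausts.

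The more delicate step is the future component: I must show $F\in\check L(F_n)$ when $F\neq\emptyset$, i.e. $F\subset\mathrm{LI}(F_n)$ and $F$ maximal as an IF into $\mathrm{LS}(F_n)$. Here the relationship between $P$ and $F$ is governed by the $S$-relation $P\sim_S F$, which forces $F$ to be a maximal IF in $\uparrow P$. The inclusion $F\subset\mathrm{LI}(F_n)$ does not come for free from $P=I^-[\sigma]$ alone, since the $F_n$ need not be monotone in a way dictated by the $P_n$; this is where I expect the main obstacle. The resolution, following the cited adaptation of \cite[Theorem 4.4]{FHSSeparability} and \cite[Proposition 3.4]{flores:revisited}, is to use that each $(P_n,F_n)$ is $S$-related, so $F_n$ is maximal among IFs in $\uparrow P_n$; combined with $P\subset\mathrm{LS}(P_n)$ and the $S$-maximality of $F$ in $\uparrow P$, one shows that any $f\in F$ eventually lies in $\uparrow P_n$ and hence, by maximality of $F_n$, in $F_n$, yielding $F\subset\mathrm{LI}(F_n)$. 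The maximality of $F$ as an IF into $\mathrm{LS}(F_n)$ then follows from the $S$-relation $P\sim_S F$ together with the corresponding maximality already established for $P$, since a strictly larger IF into $\mathrm{LS}(F_n)$ would break the $S$-pairing with $P$ in $\overline X$.

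Having verified both conditions, I conclude $(P,F)\in L(\sigma)$, which is exactly the statement. The delicate point throughout is managing the interplay between the explicitly hypothesized generation of $P$ by the chain and the implicitly constrained behaviour of the $F_n$, which must be controlled through the $S$-relation rather than through direct set-inclusions; this is precisely the content imported from the separability arguments of \cite{FHSSeparability} and \cite{flores:revisited}, so the proof of the present lemma reduces to checking that those arguments transfer verbatim to the Lorentzian metric space setting, which they do since they rely only on the chronological structure $(X,\ll)$ and not on any finer metric or length data.
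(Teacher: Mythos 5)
The paper itself does not write out a proof of this lemma; it only points to \cite[Theorem 4.4]{FHSSeparability} and \cite[Proposition 3.4]{flores:revisited}, so your attempt has to be measured against the standard argument contained there. Your skeleton is the right one: verify $P\in \hat{L}(P_n)$ and, when $F\neq\emptyset$, $F\in\check{L}(F_n)$. The past half is essentially fine, although the monotonicity $P_n\subset P_{n+1}$ does not drop straight out of Lemma \ref{def:extendedrelations}: one first picks interpolating points $z_n\in F_n\cap P_{n+1}$ and uses $P_n\subset\,\downarrow F_n\subset I^-(z_n)\subset P_{n+1}$, exactly as in the proof of Theorem \ref{thm:main:2}. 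You also correctly locate the delicate part in the future component. The problem is that the mechanism you propose for it does not work.

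The flawed step is: ``any $f\in F$ eventually lies in $\uparrow P_n$ and hence, by maximality of $F_n$, in $F_n$.'' Maximality of $F_n$ as an IF \emph{inside} $\uparrow P_n$ does not mean $\uparrow P_n\subset F_n$: the common future $\uparrow P_n$ can contain several distinct maximal IFs (this is precisely the situation in which one IP is $S$-related to more than one IF), so membership in $\uparrow P_n$ gives no access to $F_n$. The correct route again goes through the interpolating points: for each $n$ one has $F\subset\,\uparrow P\subset I^+(z_n)\subset F_n$, the middle inclusion because $z_n\in P_{n+1}\subset P$, so any $y\in\,\uparrow P$ lies above some $w$ with $p\ll w$ for all $p\in P$, whence $z_n\ll w\ll y$; the last inclusion because $F_n$ is a future set containing $z_n$. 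This yields $F\subset\mathrm{LI}(F_n)$ with no appeal to the maximality of $F_n$. Similarly, the maximality of $F$ in $\mathrm{LS}(F_n)$ is asserted in your write-up (``would break the $S$-pairing'') but not derived; what must actually be shown is that $\mathrm{LS}(F_n)$ is absorbed into $\uparrow P$. Concretely, if $F'\supsetneq F$ is an IF with $F'\subset\mathrm{LS}(F_n)$, pick for $y\in F'$ a predecessor $y'\in F'$ with $y'\ll y$; then $y'\in F_{n_k}\subset\,\uparrow P_{n_k}$ for infinitely many $k$, so every point of $P_{n_k}$ is $\ll y'$, hence every point of $P=\bigcup_m P_m$ is $\ll y'$, and $y\in I^+(y')\subset\,\uparrow P$. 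Thus $F'\subset\,\uparrow P$, contradicting the maximality of $F$ in $\uparrow P$ provided by $P\sim_S F$. With these two repairs your outline closes; as written, the future half of the lemma is not proved.
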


\medskip

\begin{proof}{\it (of Theorem \ref{thm:main:2}).} We have already proved that $\overline{X}$ is an extension of $X$. So, let us prove that $\overline{X}$ is both, chronologically complete and tight. Consider a future chronological chain $\overline{\varsigma}=\left\{ (P_n,F_n) \right\}\subset \overline{X}$ (the past case will be analogous), and let us show that $\overline{\varsigma}$ has some endpoint in $\overline{X}$. Since $(P_n,F_n)\ll (P_{n+1},F_{n+1})$, necessarily $P_{n+1}\cap F_n\neq \emptyset$ (Lemma \ref{def:extendedrelations}). Then, taking $z_{n}\in P_{n+1}\cap F_n$, it follows that
  \begin{equation}
    \label{eq:2}
    P_n\subset I^-(z_n)\subset P_{n+1}\quad\hbox{for all $n$}.
\end{equation}
Moreover, $\varsigma=\left\{ z_n \right\}$ defines a future chronological sequence with $P=I^-[\varsigma]$ being an IP. Hence, for a given $F$ $S$-related with $P$, it follows that $(P,F)\in \overline{X}$. Even more, from \eqref{eq:2}, necessarily $P=I^-[\overline{\varsigma}]$. From previous lemma, $(P,F)$ is the endpoint of $\overline{\varsigma}$, and so, $\overline{X}$ is chronologically complete.

For the density, recall that, for a given point $(P,F)\in \overline{X}$ with $P\neq \emptyset$ (the case $F\neq \emptyset$ is analogous), there exists a future chain $\sigma\subset X$ with $P=I^-[\sigma]$. Therefore, the same assertion shows that any $(P,F)\in \overline{X}$ is a limit point of a sequence on $X$, and so, $\bfi (X)$ is dense in $\overline{X}$ (hence, $\overline{X}$ is tight).

The closedness of $\partial X$ is a consequence of the fact that $\bfi(X)$ is open in $\overline{X}$. For the $T_1$ property, note that, for any constant sequence $\left\{ (P,F) \right\}_n\subset \overline{X}$, the element $(P,F)$ is the only one in $L(\left\{ (P,F) \right\})$. So, the points are closed on the chronological topology.

Finally, for the last assertion, assume that $\barX$ is not connected. Then there exist non-empty disjoint open subsets $U_1, U_2 \subset \barX$ with $\barX = U_1 \cup U_2$. Since $\bfi (X)$ is dense in $\barX$, the open sets $U_1 \cap \bfi (X)$ and $U_2 \cap \bfi (X)$ are non-empty, violating the connectedness of $X$.
\end{proof}

\begin{remark}
\label{rem:main:1}
Some observations are in order:
\begin{enumerate}[label=(\roman*)]
    \item The c-completion provides a way to construct an extension (as a Lorentzian pre-length space) of a Lorentzian length space using only its causal structure.
    \item {Strongly causal Lorentzian length spaces, that may not be extendible as a chronologically dense Lorentzian length space (see \cite[Theorem 5.3]{GKSinext}), can always be extended in a natural way to a Lorentzian pre-length space.}
\end{enumerate}
\end{remark}

%

Finally, and inspired by \cite[Section 4]{onthefinal}, we conclude this section by exploring in a second theorem the relative placing of the c-boundary $\partial X$ with respect to other extensions $\tilde{X}$ in the context of Lorentzian metric spaces. To this aim, and taking into acount that these constructions are formally independent, we are going to introduce a procedure to relate points of $\partial X$ with points of the boundary $\tilde{\partial} X$ of $\tilde{X}$ (see Figure \ref{def:associated}).    
\begin{figure}
 \centering
 \includegraphics[scale=0.6]{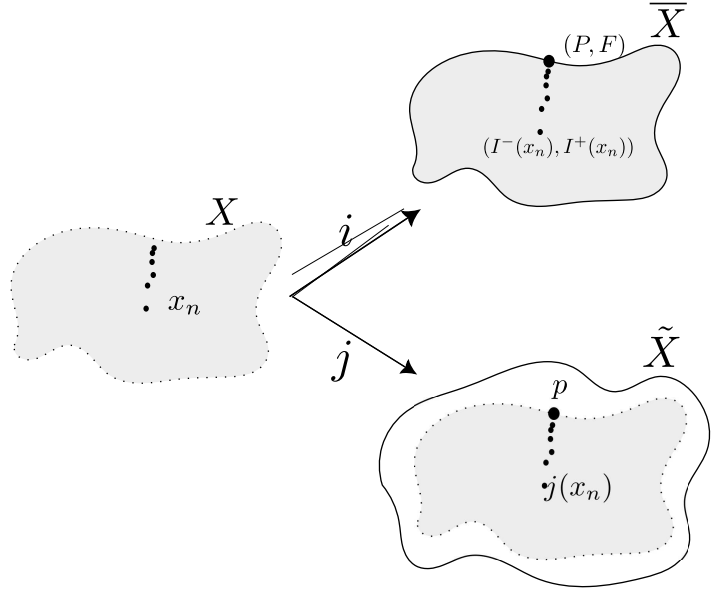}
 \caption{The picture represents two points $p\in \tilde{X}$ and $(P,F)\in \overline{X}$ associated between them according to Definition \ref{def:main:5}. In fact, the sequence $\sigma=\left\{ x_n \right\}\subset X$ satisfies that $\left\{ i(x_n)\equiv (P_n,F_n) \right\}\subset \overline{X}$ converges to $(P,F)$ with the chronological topology and
$\left\{ j(x_n)\right\}\subset \tilde{X}$ converges to $p$ with the topology in $\tilde{X}$.}
 \label{def:associated}
\end{figure}

\begin{definition}
\label{def:main:5}
The elements $p\in \tilde{\partial}X$ and $(P,F)\in \partial X$ are {\em associated between them} if they are both the endpoints of a timelike chain $\left\{ x_n \right\}\subset X$; more precisely, $p$ is the endpoint of $\left\{ j(x_n) \right\}$ and $(P,F)$ is the endpoint of $\left\{ (I^-(x_n),I^+(x_n)) \right\}$, with respect to the corresponding topologies. 
\end{definition}
\noindent The result at hand establishes that the c-boundary $\partial X$ satisfies a sort of ``density property'' between points of X and $\tilde{\partial}X$. 
\begin{theorem}
\label{prop:main:4} Let $(\tilde{X},\tilde{\sigma},\tilde{\tau})$ be an extension of a Lorentzian metric 
space $(X,\sigma,\tau)$. Assume that it is timelike path-connected\footnote{In the sense that any pair of points $x, y\in X$ with $x\ll y$ can be joined by a continuous future directed timelike curve in $X$, i.e. a non-constant continuous map $\gamma : I (\subset \mathbb{R}) \to X$ with the property that $\gamma (s) \ll \gamma (t)$ for all $s < t$.} and preserves the Lorentzian distances. Then, for any pair of points $p\in j(X)$ and $q\in \tilde{X}\setminus j(X)$ with $p\ll q$ (resp. $q\ll p$) there exists some point of the c-boundary associated to some $r\in \tilde{X}$ with $p\ll r\ll q$ (resp. $q\ll r\ll p$).
\end{theorem}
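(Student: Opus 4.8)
The plan is to construct the intermediate point $r$ explicitly from the timelike path-connectedness and then verify that it is associated to some c-boundary point. Without loss of generality I would treat the case $p\ll q$ with $p=j(x_0)$ for some $x_0\in X$ (the case $q\ll p$ being symmetric by time-reversal). First I would invoke lower semi-continuity of $\tilde\tau$: since $\tilde\tau(j(x_0),q)=\tilde\tau(p,q)>0$, Proposition \ref{thm:main:1} applied in $\tilde X$ yields open neighborhoods $U\ni p$ and $V\ni q$ with $\tilde p\ll\tilde q$ for all $\tilde p\in U$, $\tilde q\in V$. Because $j(X)$ is dense in $\tilde X$ (tightness, which I would need to know holds for $\tilde X$ as an extension, or derive from the hypotheses), there is some $x_1\in X$ with $j(x_1)\in U$, hence $j(x_0)\ll j(x_1)\ll q$. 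Since the extension preserves the Lorentzian distance and $\tilde\tau(j(x_0),j(x_1))>0$, we get $x_0\ll x_1$ in $X$.

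Next I would use timelike path-connectedness of $X$: there is a continuous future-directed timelike curve $\gamma:[0,1]\to X$ from $x_0$ to $x_1$ with $\gamma(s)\ll\gamma(t)$ whenever $s<t$. The idea is to choose a point $r=j(\gamma(t^*))$ strictly interior on this curve, so that $p=j(x_0)\ll r\ll j(x_1)\ll q$, giving $p\ll r\ll q$ in $\tilde X$; here I am using $j(\gamma(s))\ll j(\gamma(t))$ for $s<t$, which follows since $\gamma$ is timelike in $X$, the embedding preserves chronology (distance preservation plus positivity), and transitivity pushes $r\ll q$. The genuinely delicate part is not producing \emph{some} such $r$, but producing one that is the endpoint of a timelike chain in $X$ whose images converge simultaneously in $\tilde X$ \emph{and} in $\overline X$, so that Definition \ref{def:main:5} applies and $r$ is associated to a genuine c-boundary point rather than to an interior point of $j(X)$.

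To arrange this I would take the sequence $x_n:=\gamma(t_n)$ along the curve with $t_n\nearrow t^*$ (or, to reach the boundary, one must be more careful about which limit lands \emph{outside} $j(X)$). The subtle point, and what I expect to be the main obstacle, is that the associated c-boundary point exists only if the chain $\{x_n\}$ is \emph{incomplete} in $X$, i.e.\ its image defines a genuine TIP/TIF rather than converging to an interior point; the statement asks for $r\in\tilde X$ associated to a boundary point, which forces us to choose $r$ itself on $\tilde\partial X$. Thus I suspect the correct reading is that we want a point $r$ lying on the boundary with $p\ll r\ll q$, and $r$ is associated to the corresponding $(P,F)\in\partial X$. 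I would therefore build a timelike chain $\{x_n\}\subset X$ with $j(x_n)$ converging to a point $r\in V\cap\tilde\partial X$ (using density of $j(X)$ together with the open set $V$ and the fact that $q\in\tilde\partial X$ has boundary points of $j(X)$ accumulating toward it), ensuring $p\ll x_n$ for all $n$ and $x_n\ll q$. Then $P:=I^-[\{x_n\}]$ is an IP and, choosing $F$ $S$-related to $P$, Lemma \ref{thm:main:3} gives $(P,F)\in L(\{(I^-(x_n),I^+(x_n))\})$, so $(P,F)\in\partial X$ is the c-boundary endpoint of the chain while $r$ is its $\tilde X$-endpoint; by Definition \ref{def:main:5} they are associated, and the chronology $p\ll r\ll q$ holds by the push-up property (Proposition \ref{thm:main:5}) together with the neighborhood $V$.

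The crux, which I would spend the most care on, is guaranteeing that the chain can be chosen so that its two endpoints genuinely exist and are compatible: the $\tilde X$-limit $r$ must be forced into $\tilde\partial X$ (not back into $j(X)$), and simultaneously the induced pair $(P,F)$ must be a well-defined element of $\overline X$ that is the $\sigma_{chr}$-limit of the induced sequence. Reconciling these two convergences through the single underlying chain $\{x_n\}$—one in the topology $\tilde\sigma$ and one in the chronological topology—while keeping the chronological sandwich $p\ll r\ll q$ intact is the heart of the argument, and it is exactly where distance preservation and timelike path-connectedness must be used in tandem rather than separately.
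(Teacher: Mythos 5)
Your proposal correctly identifies the crux of the problem (producing a single timelike chain in $X$ whose $\tilde\sigma$-limit lands in $\tilde\partial X$ while its induced sequence converges in $\overline{X}$), but it does not actually supply the construction that resolves it, and the route you sketch cannot be completed as written. First, you apply timelike path-connectedness to $X$, joining $x_0$ to a point $x_1\in X$; any chain taken along such a curve converges to an interior point of $X$, generates a PIP, and hence is associated to a point of $\bfi(X)$, never of $\partial X$ --- as you yourself observe. Second, your fallback plan leans on the density of $j(X)$ in $\tilde X$, which is \emph{not} available: Definition \ref{def:extension} only asks for a topological embedding with proper image, and tightness is a property of completions (Definition \ref{defcompl}), not of general extensions. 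Even granting density, ``build a timelike chain with $j(x_n)$ converging to a point of $V\cap\tilde\partial X$'' is precisely the step that needs an argument, and none is given: density produces topologically convergent sequences, not chronological chains, and there is no mechanism in your write-up forcing the two requirements to be met by one and the same sequence.

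The missing idea is to read the timelike path-connectedness hypothesis as a property of the \emph{extension} $\tilde X$ (which is how the statement intends it) and to use a first-exit-time argument. Take a future-directed timelike curve $\gamma:[a,b]\to\tilde X$ from $p$ to $q$, and let $t_0$ be the first parameter with $\gamma(t_0)\in\tilde X\setminus j(X)=\tilde\partial X$; set $r:=\gamma(t_0)$. The restriction $\gamma|_{[a,t_0)}$ lies in $j(X)$, so any $t_n\nearrow t_0$ yields a future timelike chain $\varsigma=\{\gamma(t_n)\}$ \emph{in $X$} that cannot converge in $X$ (its $\tilde\sigma$-limit is outside $j(X)$). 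This single chain then has $r$ as its endpoint in $\tilde X$ and, via $P=I^-[\varsigma]$ and Lemma \ref{thm:main:3}, the pair $(P,F)\in\partial X$ as its endpoint in $\overline X$, so the two are associated in the sense of Definition \ref{def:main:5}; the sandwich $p\ll r\ll q$ is automatic because $r$ lies on the timelike curve. This is exactly the reconciliation of the two convergences you flagged as the heart of the matter, and it is obtained for free from the curve in $\tilde X$ rather than from any density or approximation argument. Note also that your opening step (the claim $j(x_0)\ll j(x_1)$ for $j(x_1)\in U$, a neighbourhood of $p=j(x_0)$) does not follow from Proposition \ref{thm:main:1}, which gives $\tilde p\ll\tilde q$ only for $\tilde p\in U$, $\tilde q\in V$, not for two points of $U$.
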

\begin{proof}
  Since $\tilde{X}$ is path-connected, there exists a future timelike curve $\gamma:[a,b]\rightarrow \tilde{X}$ joining $p$ with $q$. Since $\gamma$ is a continuous map and $q\in \tilde{X}\setminus j(X)$, there exists a first value $t_0\in (a,b]$ with $\gamma(t_0)\in \tilde{X}\setminus j(X)=\tilde{\partial} X$. In particular, $\gamma\mid_{[a,t_0)}\rightarrow X$ is an inextensible timelike curve in $X$, and taking any sequence $\left\{ t_n \right\}\subset [a,t_0)$ with $t_n\nearrow t_0$, the sequence $\varsigma=\left\{ \gamma(t_n) \right\}_n$ is an inextensible future timelike chain converging to some $(P,F)\in \partial X$, with $P=I^-[\varsigma]$ (recall Lemma \ref{thm:main:3}). By construction, the points $\gamma(t_0)\in\tilde{\partial}X$ and $(P,F)\in\partial X$ are associated between them, since they are endpoints of the same future timelike chain $\left\{ \gamma(t_n) \right\}_n$. Moreover, $p\ll \gamma(t_0)\ll q$, as required.
\end{proof}

\section{Interplay between the causal structure of $X$ and $\overline{X}$} 
\label{sec:applications}

In this section we establish some interesting relations between the causal structure of the original Lorentzian metric space and its c-completion. The first result is an adaptation to the setting of Lorentzian metric spaces of a well-known result for spacetimes, which characterizes the global hyperbolicity of the original spacetime in terms of the non-existence of c-boundary points whose past and future are both non-empty (\cite[Theorem 3.29]{onthefinal}; see also \cite[Theorem 9.1]{flores:revisited}). 

\begin{proposition} \label{prop:bordeglobalhyp}
	Let $(X,\sigma,\tau)$ be a separable and strongly causal Lorentzian metric space with a compatible causal relation $\leq$. Then,
	 the following properties are equivalent:
	\begin{enumerate}[label=(\roman*)]
		\item $X$ is globally hyperbolic;
		\item if $(P,F) \in \partial X$ then either $P = \emptyset$ or $F = \emptyset$;
		\item if $(P,F) \in \partial X$ then either $\uparrow P = \emptyset$ or $\downarrow F = \emptyset$.
	\end{enumerate}
\end{proposition}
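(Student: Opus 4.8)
The plan is to prove the cycle through (i)$\Rightarrow$(ii), the formal equivalence (ii)$\Leftrightarrow$(iii), and the harder implication (ii)$\Rightarrow$(i); throughout I use that separability and strong causality grant chronological density (Proposition \ref{thm:main:4}) together with the $S$-property, so that $\overline{X}$, $\partial X$ and the identifications of Lemma \ref{thm:main:6} are all available. For (i)$\Rightarrow$(ii) I argue by contraposition. Suppose some $(P,F)\in\partial X$ had $P\neq\emptyset\neq F$, and pick a future chain $\{x_n\}$ generating $P$ and a past chain $\{y_m\}$ generating $F$. The containments $P\subset\downarrow F$ and $F\subset\uparrow P$ built into $P\sim_S F$ force $x_n\ll y_m$ for all $n,m$, so both chains sit inside the causal diamond $J(x_1,y_1)$. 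Global hyperbolicity would make this diamond compact, so after passing to subsequences $x_n\to x$ and $y_m\to y$ in $X$, and the argument in the last paragraph of the proof of Proposition \ref{thm:main:4} gives $P=I^-(x)$ and $F=I^+(y)$. But then $F=I^+(y)$ is a PIF that is $S$-related to the PIP $P=I^-(x)$, so the $S$-property forces $I^+(y)=I^+(x)$, whence $x=y$ by future-distinguishability and $(P,F)=\bfi(x)\in\bfi(X)$, contradicting $(P,F)\in\partial X$.

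The equivalence (ii)$\Leftrightarrow$(iii) is formal. For every $(P,F)\in\overline{X}$ the $S$-relation gives $P\subset\downarrow F$ and $F\subset\uparrow P$, so $\uparrow P=\emptyset\Rightarrow F=\emptyset$ and $\downarrow F=\emptyset\Rightarrow P=\emptyset$, which is (iii)$\Rightarrow$(ii). Conversely, under the convention $\uparrow\emptyset=\downarrow\emptyset=\emptyset$ one has $P=\emptyset\Rightarrow\uparrow P=\emptyset$ and $F=\emptyset\Rightarrow\downarrow F=\emptyset$, which is (ii)$\Rightarrow$(iii). In particular, negating either condition amounts to exhibiting a single boundary point with both components nonempty, which is the object I must produce in the remaining implication.

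For (ii)$\Rightarrow$(i), again by contraposition, causality comes for free: strong causality is distinguishing, and past-distinguishability together with push-up (Proposition \ref{thm:main:5}) excludes $x\leq y\leq x$ with $x\neq y$, since such $x,y$ would share the same past. So assume instead that a diamond $J(p,q)$ is noncompact. Using fullness to choose $p'\ll p$ and $q'\gg q$ and then push-up, I extract a sequence $\{z_n\}$ with $p'\ll z_n\ll q'$ and \emph{no} convergent subsequence in $X$. Since $I^-(p')\subset I^-(z_n)\subset I^-(q')$ and dually $I^+(q')\subset I^+(z_n)\subset I^+(p')$, these pasts and futures are uniformly bounded; the plan is to pass to a subsequence along which $\{\bfi(z_n)\}$ converges in the chronological topology to a pair $(P,F)\in\overline{X}$ with $I^-(p')\subset P$ and $I^+(q')\subset F$, so that $P\neq\emptyset\neq F$. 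Because $\{z_n\}$ has no limit in $X$ and $\bfi$ is a topological embedding, $(P,F)$ lands in $\partial X$, violating (ii).

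The main obstacle is precisely this last construction: manufacturing a limit pair \emph{both} of whose components are nonempty, without any causal curves on which to run a limit-curve argument, so that the noncompactness is converted into a genuinely two-sided indecomposable limit purely from the order structure. Concretely I expect to have to (a) extract a subsequence that simultaneously controls the inferior and superior limits $LI$ and $LS$ of the pasts $I^-(z_n)$ and of the futures $I^+(z_n)$, and (b) verify that a maximal IP inside the resulting superior limit which lies in the corresponding inferior limit is $S$-related to a nonempty IF, i.e. that the ``hole'' detected by the noncompact diamond is visible from both its past and its future. The boundedness by $p'$ and $q'$ is what should guarantee nonemptiness of both components, while strong causality and separability are what should prevent the limit pair from collapsing back onto a point of $X$.
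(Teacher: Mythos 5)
Your implication (i)$\Rightarrow$(ii) is essentially the paper's argument (the paper simply notes that the chain generating $P$ eventually lies in $J(p,q)$ for $p\in P$, $q\in F$ and, by terminality of $P$, has no convergent subsequence; your detour through $P=I^-(x)$, $F=I^+(y)$ and the S-property reaches the same contradiction), and your (iii)$\Rightarrow$(ii) is correct. However, there are two genuine gaps. First, (ii)$\Rightarrow$(iii) is \emph{not} formal: with the paper's definition $\downarrow S=I^{-}[\{p:p\ll q,\ \forall q\in S\}]$, the condition on the empty set is vacuous, so $\downarrow\emptyset=I^{-}[X]$ and $\uparrow\emptyset=I^{+}[X]$, which are all of $X$ in a strongly causal space. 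Your ``convention'' $\uparrow\emptyset=\downarrow\emptyset=\emptyset$ silently replaces statement (iii) by a different one. Under the actual definitions, (ii)$\Rightarrow$(iii) requires real work: given $(P,\emptyset)\in\partial X$ with $\uparrow P\neq\emptyset$, the paper takes a maximal IF $F\subset\uparrow P$, shows $F$ must be terminal (otherwise $F=I^+(p)$ and the chain generating $P$ would converge to $p$, contradicting terminality of $P$), then takes a maximal IP $P'\supset P$ in $\downarrow F$ to produce a boundary pair $(P',F)$ with both components nonempty, contradicting (ii).

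Second, your (ii)$\Rightarrow$(i) is not a proof: you yourself flag as ``the main obstacle'' the extraction of a limit pair $(P,F)\in\partial X$ with \emph{both} components nonempty from a sequence in a noncompact diamond, and you leave it as a plan. This is exactly where the paper's choice to close the cycle via (iii)$\Rightarrow$(i) rather than (ii)$\Rightarrow$(i) pays off: one only needs a \emph{one-sided} limit. From $\varsigma\subset J(p,q)$ with no convergent subsequence, the push-up property gives $I^-(p)\subset I^-(p_n)$ and $I^+(q)\subset I^+(p_n)$, and the sequential compactness result of \cite{floresharris} (Theorem 5.11) applied to the future completion $\hat{X}$ alone yields a subsequence and a TIP $P\supset I^-(p)$ with $P\in\hat{L}(\varsigma^{\infty})$. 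One then checks directly that $\uparrow P\supset I^+(q)\neq\emptyset$ (every $z\gg q$ satisfies $w\ll p_{n_k}\leq q\ll z$ for all $w\in P$) and that any $F$ with $P\sim_S F$ has $\downarrow F\supset P\neq\emptyset$, which already violates (iii) without ever having to show $F\supset I^+(q)$ or control the superior and inferior limits of the futures $I^+(p_n)$. If you insist on targeting (ii) directly you would indeed need the two-sided control you describe in (a)--(b), which is substantially harder and is not supplied; so as written the cycle of implications does not close.
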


\begin{proof}
	(i) $\Rightarrow$ (ii). By contradiction, assume the existence of some pair $(P,F) \in \partial X$ with $P \neq \emptyset \neq F$. Choose points $p \in P, q \in F$, and a chain $\varsigma \subset X$ generating $P$, and so, converging to $(P,F)$. Then, $\varsigma$ is eventually contained in $I^{+} (p) \cap I^{-} (q)$. Moreover, the terminal character of $P$ ensures that no subsequence of $\varsigma$ converges in $X$. Therefore, $J(p,q) \subset M$ is not compact, in contradiction with the global hyperbolicity of $M$.
	
	(ii) $\Rightarrow$ (iii). Assume the existence of $(P, \emptyset) \in \partial X$ with $\uparrow P \neq \emptyset$. Let $\emptyset \neq F$ be a maximal IF into $\uparrow P$. Notice that $F$ must be terminal; in fact, otherwise, $F = I^{+} (p)$, and any chain $\varsigma$ generating $P$ would satisfy that $\varsigma$ converges to $p$ with the chronological topology on $X$, in contradiction with the terminal character of $P$. Let $\emptyset \neq P \subset P'$ be some maximal IP into $\downarrow F$. Then $P' \sim_S F$, in contradiction with (ii).
	
	
	(iii) $\Rightarrow$ (i). Given $p,q \in X$, we wish to prove that $J(p,q)$ is compact. Assume by contradiction that $J(p,q)$ is not. There is a sequence $\varsigma=\left\{ p_n \right\} \subset J (p,q)$ with no subsequence converging in $X$. By the push-up property,
	\begin{equation*}
		I^{-} (p) \subset I^{-} (p_n), \qquad I^{+} (q) \subset I^{+} (p_n)\quad\hbox{for all $n$.}
	\end{equation*}
	 Moreover, $I^{-} (p) \subset LS (I^{-} (p_n))$. Then, by applying \cite[Theorem 5.11]{floresharris} to $\hat{X}$, there exists a subsequence $\varsigma^{\infty} \subset \varsigma$ and a (necessarily) TIP $P$ such that $\emptyset \neq I^{-} (p) \subset P$ with $P \in \hat{L} (\varsigma^{\infty})$. Furthermore, $P$ provides some pair $(P,F)$ with $ \emptyset \neq P \subset \downarrow F$ and $\uparrow P \supset I^{+}(q) \neq \emptyset$, a contradiction.
\end{proof}

The strong causality of a Lorentzian metric space is not preserved when passing to the c-completion. In fact, this can be easily seen in a globally hyperbolic spacetime $M$, which, according to Proposition \ref{prop:bordeglobalhyp}, possesses a c-boundary formed by pairs $(P,F)$ having some of its components empty. If, for instance, $F=\emptyset$, the chronological future of the element $(P,\emptyset)\in \overline{M}$ is empty. So, $(P,\emptyset)$ is not contained in any chronological diamond of $\overline{M}$, and consequently, it is also not contained in any open set of the Alexandrov topology of $\overline{M}$ (violating the strong causality of the c-completion). 

Nevertheless, it is still possible to prove a result in the positive direction if, rather than aim for strong causality (according to Definition \ref{def:causalityconditions} \ref{esc:4}), we content ourselves with a local convexity property for the c-completion (which is actually closer to the classical formulation of strong causality). More preciely:

	\begin{proposition}
		Let $(\barX,\overline{\sigma}:=\sigma_{chr},\overline{\tau})$ be the c-completion of a full, separable globally hyperbolic and chronologically dense Lorentzian metric space $(X,\sigma,\tau)$ satisfying the S-property. Assume that there are no pairs in $\partial X$ causally related between them with $\leq_{\overline{\tau}}$ (recall the definition given in Section \ref{sec:lorentzianmetricspaces}). For any $(P,F) \in \barX$ and any neighbourhood $U$ of $(P,F)$, there is another neighbourhood $V$ of $(P,F)$ such that, if $(R,Q),(R',Q'),(R'',Q'')\in \overline{X}$ satisfy $(R,Q)\ll (R',Q')\ll (R'',Q'')$ and $(R,Q),(R'',Q'')\in V$ then $(R',Q')\in U$.
	\end{proposition}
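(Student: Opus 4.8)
The plan is to argue by contradiction, exploiting that $\sigma_{chr}$ is a sequential topology. Assuming the statement fails for some $(P,F)$ and some open neighbourhood $U$, I would produce sequences $\{(R_k,Q_k)\},\{(R''_k,Q''_k)\}\to (P,F)$ together with intermediate pairs satisfying $(R_k,Q_k)\,\overline{\ll}\,(R'_k,Q'_k)\,\overline{\ll}\,(R''_k,Q''_k)$ and $(R'_k,Q'_k)\notin U$ for every $k$. The whole problem then collapses to the purely sequential claim that \emph{any} such intermediate sequence converges to $(P,F)$: since $U$ is open, this contradicts $(R'_k,Q'_k)\notin U$ by the first item of Remark \ref{rem:top}. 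Extracting genuine sequences from the failure of a neighbourhood condition requires a countable neighbourhood basis at $(P,F)$, which I would secure beforehand by observing that, under global hyperbolicity together with the absence of $\leq_{\overline\tau}$-related boundary pairs, the future and past boundaries are ``disconnected'' (Proposition \ref{prop:bordeglobalhyp}) and $\sigma_{chr}$ is Hausdorff, so it coincides with the metrizable, hence first countable, closed-limit topology near $\partial X$ (Remark \ref{rem:top} \ref{rem:top:3}).

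The crucial structural observation is that the middle pair always lies in $X$. Indeed, $(R_k,Q_k)\,\overline{\ll}\,(R'_k,Q'_k)$ forces $R'_k\neq\emptyset$ and $(R'_k,Q'_k)\,\overline{\ll}\,(R''_k,Q''_k)$ forces $Q'_k\neq\emptyset$ (Lemma \ref{def:extendedrelations}); since $X$ is globally hyperbolic, Proposition \ref{prop:bordeglobalhyp} guarantees that every boundary pair has an empty component, whence $(R'_k,Q'_k)=(I^-(x'_k),I^+(x'_k))=\bfi(x'_k)$ for some $x'_k\in X$. Choosing $z_k\in Q_k\cap R'_k$ and $w_k\in Q'_k\cap R''_k$ yields $z_k\ll x'_k\ll w_k$ with $z_k\in Q_k$ and $w_k\in R''_k$; because $Q_k$ is a future set and $R''_k$ a past set, this already gives $x'_k\in Q_k\cap R''_k$.

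To prove $\bfi(x'_k)\to (P,F)$ I would verify the defining conditions of the limit operator $L$ (Definition \ref{def:chrtop}) componentwise, using the $S$-relations $R_k\sim_S Q_k$ and $R''_k\sim_S Q''_k$. For the past component (when $P\neq\emptyset$): from $x'_k\in Q_k\subset\,\uparrow R_k$ there is $p_k$ with $r\ll p_k$ for every $r\in R_k$ and $p_k\ll x'_k$, so any fixed $a\in P\subset \mathrm{LI}(R_k)$ satisfies $a\ll p_k\ll x'_k$ for large $k$, proving $P\subset \mathrm{LI}(I^-(x'_k))$; and since $x'_k\in R''_k$ gives $I^-(x'_k)\subset R''_k$, one has $\mathrm{LS}(I^-(x'_k))\subset \mathrm{LS}(R''_k)$, where $P$ is already a maximal IP, so $P$ is maximal into $\mathrm{LS}(I^-(x'_k))$. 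The future component (when $F\neq\emptyset$) is dual, via $x'_k\in R''_k\subset\,\downarrow Q''_k$ and $I^+(x'_k)\subset Q_k$. Hence $P\in\hat L(I^-(x'_k))$ and $F\in\check L(I^+(x'_k))$ — the empty component, present exactly when $(P,F)\in\partial X$, imposing no condition — so $(P,F)\in L(\{\bfi(x'_k)\})$, the sought contradiction.

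The main obstacle is not the squeeze, which is forced once the intermediate pairs are known to live in $X$, but rather the reduction of the neighbourhood formulation to the sequential statement. This is precisely where global hyperbolicity (to place $(R'_k,Q'_k)$ in $\bfi(X)$ and to reduce the convergence test to the single nonempty component) and the absence of $\leq_{\overline\tau}$-related boundary pairs (to ensure enough separation/first countability of $\sigma_{chr}$ near $\partial X$, legitimising the extraction, and to prevent the squeezed sequence from accumulating at a boundary pair causally related to $(P,F)$ instead of at $(P,F)$ itself) enter essentially. I would therefore isolate the Hausdorffness and first countability of the c-completion near $\partial X$ as a preliminary lemma, after which the convergence argument above closes the proof.
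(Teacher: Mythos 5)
Your reduction to a purely sequential statement contains two genuine gaps. First, extracting the sequences $\{(R_k,Q_k)\}$ and $\{(R''_k,Q''_k)\}$ from the failure of the neighbourhood condition requires a countable neighbourhood basis at $(P,F)$, and your route to first countability --- Hausdorffness of $\sigma_{chr}$ plus agreement with the metrizable closed-limit topology --- is not supported by the hypotheses: Hausdorffness of the chronological topology is nowhere established (the assumption is only that no two boundary pairs are $\leq_{\overline{\tau}}$-related, which is a weaker separation statement), and Remark \ref{rem:top} \ref{rem:top:3} explicitly warns that the CLT is defined only for partial completions and coincides with $\sigma_{chr}$ only under conditions you have not verified. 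Second, and independently, your verification that $\bfi(x'_k)\to (P,F)$ uses $P\subset LI(R_k)$ and the maximality of $P$ inside $LS(R''_k)$; these are consequences of $(P,F)\in L(\{(R_k,Q_k)\})$ and $(P,F)\in L(\{(R''_k,Q''_k)\})$, not of mere topological convergence of those sequences. By Remark \ref{rem:top} (i), $L$-membership implies topological convergence, but the converse holds only when the limit operator is of first order, which is not assumed here. So even granting the extraction, the squeeze is not forced.

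The paper avoids both problems by constructing $V$ explicitly instead of negating the statement: for $(P,\emptyset)\in\partial X$ and a future chain $\{p_n\}$ generating $P$, it takes $V=I^{+}(\bfi(p_n))$ and shows $I^{+}(\bfi(p_n))\subset U$ for $n$ large. Any middle pair then lies in $V\subset U$ by transitivity of $\overline{\ll}$, with no need to locate it in $\bfi(X)$. The contradiction sequence $(R_n,Q_n)\in I^{+}(\bfi(p_n))\cap U^c$ automatically satisfies $p_n\in R_n$, hence $P\subset LI(R_n)$ by construction, and the maximality of $P$ in $LS(R_n)$ is exactly where the no-causally-related-boundary-pairs hypothesis enters (a strictly larger TIP $P'$ would yield causally related pairs $(P,\emptyset),(P',\emptyset)\in\partial X$). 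Your structural observation that the middle pair must lie in $\bfi(X)$ is correct and appealing, but it does not repair the gaps above; to salvage your route you would need either a proof of first countability of $\sigma_{chr}$ at boundary points under these hypotheses, or --- better --- to replace the abstract neighbourhoods $V$ by chronological futures/pasts of chain elements, as the paper does.
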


\begin{proof}
	Since $X$ is globally hyperbolic, it is also strongly causal. Moreover, $\bfi(X)$ is open in $\overline{X}$. Therefore, the result holds for any point in $\bfi (X)$; in fact, given $x\in X$ and a neighbourhood $U$ of $x$, there are points $p,q\in X$ with $x\in I(p,q)\subset U$, and consequently, we can take $V:= I(p,q)$. So, we can focus on points $(P,F)\in \partial X$.

	By Proposition \ref{prop:bordeglobalhyp}, either $P= \emptyset$ or $F = \emptyset$. Assume that $F = \emptyset$ (the case $P=\emptyset$ is analogous). Let $U$ be a neighbourhood of $(P, \emptyset)$ for the chronological topology and $\left\{ p_n  \right\}$  a future chain generating $P$, i.e. $P = I^{-} [\{ p_n \}]$. Then, $(P,\emptyset) \in I^{+} (\bfi (p_n))$ for all $n$. It suffices to prove that $I^{+} (\bfi (p_n)) \subset U$ for $n$ large enough.
	
	Assume by contradiction the existence of $(R_n,Q_n) \in I^{+} (\bfi (p_n)) \cap U^c$ for infinitely many $n$. Our aim is to prove that $\varsigma=\left\{ (R_n,Q_n) \right\}$ converges to $(P,\emptyset)$, in contradiction with the inclusion of $\varsigma$ into the closed set $U^c$.
	Since $\left\{ p_n \right\}$ is a (future) chain defining $P$, necessarily $P\subset LI(p_n)\subset LI(R_n)\left(\subset LS(R_n) \right)$. If $P$ is not maximal in $LS(R_n)$, there exists $P'$ with $P\subsetneq P'\subset LS(R_n)$. The IP $P'$ must be a TIP, since otherwise $\uparrow P\subset \uparrow P'\neq \emptyset$, violating Proposition \ref{prop:bordeglobalhyp} (iii). The same proposition ensures that $(P',\emptyset)\in \partial X$. Hence, the pairs $(P,\emptyset), (P',\emptyset)\in \partial X$ are causally related, in contradiction with the hypothesis.
	Summarizing, $P$ is maximal in $LS(R_n)$, and consequently, $\varsigma=\left\{ (R_n,Q_n) \right\}$ converges to $(P,\emptyset)$.
\end{proof}

The following result etablishes that, as for the classical regular case of spacetimes, global hyperbolicity implies strong causality under mild hypotheses: 

\begin{proposition}\label{krut}
Any full, separable, chronologically dense and globally hyperbolic Lorentzian metric space $(X,\sigma,\tau)$ satisfying the S-property is strongly causal.
\end{proposition}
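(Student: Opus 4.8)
The plan is to unwind the definition of strong causality (Definition \ref{def:causalityconditions}\ref{esc:4}) into its two ingredients --- that $X$ be distinguishing and that $\sigma$ coincide with the Alexandrov topology --- and to establish each by exploiting the compactness of the causal diamonds granted by global hyperbolicity (Definition \ref{def:causalityconditions}\ref{esc:8}). Since $\sigma$ is always finer than the Alexandrov topology (it refines CEAT by Proposition \ref{thm:main:1}, and CEAT refines the Alexandrov one), the topological half amounts to showing that for each $x$ and each $\sigma$-neighbourhood $U\ni x$ there is a chronological diamond with $x\in I(p,q)\subset U$. Using fullness and chronological density (Definitions \ref{def:fullspace}, \ref{def:main:6}), I would fix a future chain $p_n\nearrow x$ and a past chain $q_n\searrow x$, so that $x\in I(p_n,q_n)$ for every $n$.

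Both halves reduce to a single key fact, call it $(\star)$: the nested chronological diamonds $I(p_n,q_n)$ collapse to their centre, i.e. any $w$ with $p_n\ll w\ll q_n$ for all $n$ satisfies $w=x$. I would first show that such a $w$ must share its past and future with $x$. Since $\{p_n\}$ generates $I^-(x)$ (Lemma \ref{thm:main:6}) and $p_n\ll w$, transitivity gives $I^-(x)\subset I^-(w)$, and symmetrically $I^+(x)\subset I^+(w)$. Because $w\in X$ obeys the S-property, $I^-(w)\sim_S I^+(w)$ (Definitions \ref{srelation}, \ref{def:main:2}), so the indecomposable past $I^-(w)$ lies in $\downarrow I^+(w)\subset \downarrow I^+(x)$. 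As $I^-(x)$ is a \emph{maximal} IP inside $\downarrow I^+(x)$, the chain of inclusions $I^-(x)\subset I^-(w)\subset \downarrow I^+(x)$ forces $I^-(w)=I^-(x)$; the uniqueness clause of the S-property then upgrades this to $I^+(w)=I^+(x)$.

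It remains to turn ``same past and future'' into ``same point'', and this is where global hyperbolicity and the antisymmetry of $\leq$ enter, and where I expect the main difficulty to lie. The obstruction is that $\tau$ is only lower semicontinuous, so chronology is preserved under perturbations (Proposition \ref{thm:main:1}) but \emph{not} under limits: one cannot read off $w\leq x$ from $a_n\ll x$ and $a_n\to w$. The remedy is to trade chronology for the causal relation. Choosing a future chain $a'_n\to w$ generating $I^-(w)=I^-(x)$ together with a point $e\ll a'_1$, the whole chain lies in the compact causal diamond $J(e,x)=J^+(e)\cap J^-(x)$; since the limit of a sequence inside a compact (hence, in a well-behaved ambient topology, closed) set stays in it, $w\in J^-(x)$, i.e. $w\leq x$. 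The symmetric argument gives $x\leq w$, and causality yields $w=x$, proving $(\star)$ and, with it, distinguishing (if $I^\pm(x)=I^\pm(y)$ then $y$ is trapped in all diamonds about $x$, so $y=x$). I regard this limit extraction as the genuine crux: it is the only place where the lower semicontinuity of $\tau$ fights against us, and it is precisely resolved by the closed causal relation that compactness of the diamonds furnishes.

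Finally, the identity $\sigma=$ Alexandrov would follow by contradiction from $(\star)$. If some $U\ni x$ contained no $I(p_n,q_n)$, pick $z_n\in I(p_n,q_n)\setminus U$. As $p_1\ll z_n\ll q_1$ eventually, the $z_n$ lie in the compact diamond $J(p_1,q_1)$, so a convergent subsequence tends to a point $z$; using push-up (Proposition \ref{thm:main:5}) to pass from membership in $J(p_m,q_m)$ to $I(p_m,q_m)$, one gets $z\in\bigcap_n I(p_n,q_n)=\{x\}$ by $(\star)$, i.e. $z=x\in U$, contradicting that $z$ lies in the closed set $U^{c}$. The only caveat to flag is that both the extraction of convergent subsequences and the step ``compact $\Rightarrow$ closed'' require the ambient topology $\sigma$ to be suitably tame (e.g. Hausdorff and sequential); making this hypothesis explicit, or deriving the needed closedness of $J^\pm$ directly from global hyperbolicity, is the point that the full proof must pin down.
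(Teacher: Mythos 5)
Your proposal follows the same skeleton as the paper's proof: fix $x$ and a neighbourhood $U$, use fullness and chronological density to produce chains $p_n\nearrow x$ and $q_n\searrow x$, observe via Proposition \ref{thm:main:1} that it suffices to trap some diamond $I(p_n,q_n)$ inside $U$, argue by contradiction with points $r_n\in I(p_n,q_n)\setminus U$, extract a limit $r\in U^c$ inside the compact diamond $J(p_1,q_1)$, and play the inclusions $I^{\pm}(x)\subset I^{\pm}(r)$ off against the S-property. Where you diverge is the endgame. The paper stops at $I^{+}(x)\subsetneq I^{+}(r)\subset\ \uparrow I^{-}(x)$ and contradicts the maximality clause of the S-property directly, never needing to show that the trapped point equals $x$; you instead push through to the degenerate case $I^{\pm}(w)=I^{\pm}(x)$ and resolve it by a second compactness argument ($w$ lies in a closed causal diamond, then antisymmetry of $\leq$ from causality gives $w=x$). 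This buys you two things the paper does not spell out: a treatment of the case where the inclusion of futures fails to be strict (the paper asserts strictness without justification, and if $I^{\pm}(r)=I^{\pm}(x)$ with $r\neq x$ its contradiction evaporates), and an explicit proof of the distinguishing half of Definition \ref{def:causalityconditions} \ref{esc:4}, which the paper's argument leaves implicit.

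The price is the caveat you yourself flag: the steps ``compact implies closed'' and the extraction of convergent subsequences require $\sigma$ to be Hausdorff and sequentially well behaved, hypotheses not listed in the proposition. Note, however, that the paper's proof silently makes the same commitments --- it extracts a convergent subsequence from the compact set $J(p_1,q_1)$ and asserts $r\in\bigcap_n J(p_n,q_n)$, which likewise requires the causal diamonds to be closed --- so this is a shared debt rather than a defect specific to your route. The one point you should tighten is distinguishing: your key fact $(\star)$ only covers the case where pasts and futures agree simultaneously, whereas Definition \ref{def:causalityconditions} \ref{esc:3} asks for past- and future-distinguishing separately; the uniqueness clause of the S-property (Definition \ref{def:main:2}) bridges the gap, since $I^{-}(x)=I^{-}(y)$ already forces $I^{+}(x)=I^{+}(y)$.
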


\begin{proof}
	Let $x \in X$ and consider a neighborhood $U$ of $x$ in $X$. Let $\{p_n\}$ and  $\{q_n\}$ be a future and a past chain, resp., converging to $x$. For $n$ large enough, we have that $p_n,q_n \in U$. Since $I^{\pm} (\cdot)$ are open (recall Proposition \ref{thm:main:1}), it is enough to prove that $I^{+} (p_n) \cap I^{-} (q_n)$ is inside $U$ for $n$ large enough. By contradiction, assume the existence of $r_n \in (I^{+} (p_n) \cap I^{-} (q_n) ) \setminus U$ for every $n$. Then, $r_n$ belongs to the compact set $J(p_1,q_1)$, and consequently, there must exist a subsequence $\{r_{n_k}\}$ converging to $r$. Since every $r_{n_k}$ is inside $U^{c}$, and $U$ is open, necessarily $r \in U^{c}$. Notice also that $r \in \cap_{n \in \mathbb{N}} J(p_n, q_n)$. Moreover, for every $p \in I^{-} (x)$ there exists $p_n$ with $p \ll p_n \leq r$, which by the push-up property implies $I^- (x) \subset I^- (r)$. Similarly, $I^+ (x) \subset I^+ (r)$. Thus, every $s \in I^+ (r)$ is in the future of all $p_n$ since $p_n \leq r \ll s$, which implies that $I^+ (x) \subsetneq I^+ (r) \subset \uparrow I^- ({p_n}) = \uparrow I^- (x)$, in contradiction with the maximality of $I^+ (x)$ into $\uparrow I^- (x)$, and so, violating the S-property hypothesis. In conclusion, $I^+ (p_n) \cap I^{-} (q_n) \subset U$ for $n$ large enough, and consequently, $(X,\sigma,\tau)$ is strongly causal.
\end{proof}


%

Finally, we are going to characterize the causal simplicity of the original Lorentzian metric space in terms of the c-completion. To this aim, we previously need the following technical result:

\begin{lemma} \label{lema:relcausalJ} Let $x, y$ be two points of a full, separable, chronologically dense Lorentzian metric space $(X,\sigma,\tau)$ satisfying the S-property. Then, $\bfi (x) \leq_{\overline{\tau}} \bfi (y)$ if and only if $y \in \overline{J^{+} (x)}$, $x \in \overline{J ^{-} (y)}$. 
\end{lemma}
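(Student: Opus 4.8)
The plan is to route the equivalence through the ``maximal'' compatible causal relation $\leq_{\tau}$ of Section~\ref{sec:lorentzianmetricspaces}, i.e. through the intermediate condition that $I^{-}(x)\subset I^{-}(y)$ and $I^{+}(y)\subset I^{+}(x)$. Writing $\overline{I}^{\pm}$ for the chronological past/future in $\overline{X}$ associated to $\overline{\ll}$, the relation $\bfi(x)\leq_{\overline{\tau}}\bfi(y)$ unwinds, by the very definition of $\leq_{\tau}$ applied in $\overline{X}$, to the pair of inclusions $\overline{I}^{-}(\bfi(x))\subset\overline{I}^{-}(\bfi(y))$ and $\overline{I}^{+}(\bfi(y))\subset\overline{I}^{+}(\bfi(x))$. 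So I would first establish
\[
\bfi(x)\leq_{\overline{\tau}}\bfi(y)\iff I^{-}(x)\subset I^{-}(y)\ \text{and}\ I^{+}(y)\subset I^{+}(x),
\]
and then show that the right-hand side is in turn equivalent to $y\in\overline{J^{+}(x)}$ and $x\in\overline{J^{-}(y)}$, chaining the two.

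For the first equivalence the key tool is Lemma~\ref{def:extendedrelations}, which identifies $\overline{\ll}$ with the condition $F\cap P'\neq\emptyset$ on the two components. Using it, $\overline{I}^{-}(\bfi(x))=\{(P',F')\in\overline{X}:F'\cap I^{-}(x)\neq\emptyset\}$, and likewise for $\bfi(y)$; hence $I^{-}(x)\subset I^{-}(y)$ trivially forces $\overline{I}^{-}(\bfi(x))\subset\overline{I}^{-}(\bfi(y))$. For the converse I would test the inclusion on the points $\bfi(p)$ with $p\ll x$: separability provides an interpolating $s\in S$ with $p\ll s\ll x$, so $I^{+}(p)\cap I^{-}(x)\neq\emptyset$ and thus $\bfi(p)\in\overline{I}^{-}(\bfi(x))$; feeding this into the assumed inclusion gives $I^{+}(p)\cap I^{-}(y)\neq\emptyset$, i.e. $p\ll y$, whence $I^{-}(x)\subset I^{-}(y)$. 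The future inclusion is dual. This step is where separability is genuinely used.

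For the second equivalence I would treat the two conditions separately, proving $I^{+}(y)\subset I^{+}(x)\iff y\in\overline{J^{+}(x)}$ and, dually, $I^{-}(x)\subset I^{-}(y)\iff x\in\overline{J^{-}(y)}$. In the direction $y\in\overline{J^{+}(x)}\Rightarrow I^{+}(y)\subset I^{+}(x)$, take $q$ with $y\ll q$; Proposition~\ref{thm:main:1} produces an open $U\ni y$ all of whose points are chronologically below $q$, and since $U$ meets $J^{+}(x)$ there is $z$ with $x\leq z\ll q$, whence push-up (available by Proposition~\ref{thm:main:5}, as $\leq$ satisfies the reverse triangle inequality by Definition~\ref{xuxu}) gives $x\ll q$. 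For the reverse direction, fullness yields $I^{+}(y)\neq\emptyset$, so chronological density (Definition~\ref{def:main:6}) together with Lemma~\ref{thm:main:6} furnishes a past chain $\{y_{n}\}\subset I^{+}(y)$ with topological limit $y$; then $y_{n}\in I^{+}(y)\subset I^{+}(x)$ with $y_{n}\to y$, so $y\in\overline{I^{+}(x)}\subset\overline{J^{+}(x)}$. The arguments for $x\in\overline{J^{-}(y)}$ are symmetric, using a future chain in $I^{-}(x)$ converging to $x$.

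I expect the main obstacle to be the bookkeeping in the first equivalence: $\leq_{\overline{\tau}}$ is defined through inclusions of chronological sets computed inside $\overline{X}$, whose elements include boundary pairs $(P',F')$, so one must be careful that comparing $\overline{I}^{\pm}(\bfi(x))$ and $\overline{I}^{\pm}(\bfi(y))$ over all of $\overline{X}$ (not merely over $\bfi(X)$) really does reduce to the inclusion of the $I^{\pm}$ in $X$; Lemma~\ref{def:extendedrelations} is exactly what collapses this to the elementary condition $F'\cap I^{\pm}(\cdot)\neq\emptyset$. A secondary point to watch is the reverse direction of the second equivalence, where chronological density must be invoked in the precise form that the approximating chain lies in $I^{+}(y)$ (respectively $I^{-}(x)$); this is what guarantees $y\ll y_{n}$ and makes the limit-of-a-chain closure argument valid in the possibly non-first-countable topology $\sigma$.
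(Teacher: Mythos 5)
Your proposal is correct and follows essentially the same route as the paper's proof: both reduce $\bfi(x)\leq_{\overline{\tau}}\bfi(y)$ to the inclusions $I^{-}(x)\subset I^{-}(y)$ and $I^{+}(y)\subset I^{+}(x)$, then obtain $y\in\overline{J^{+}(x)}$, $x\in\overline{J^{-}(y)}$ from a converging chain supplied by chronological density, and recover the inclusions in the other direction from the openness of chronological pasts/futures together with push-up. The only difference is that you prove the first reduction explicitly (via Lemma~\ref{def:extendedrelations} and separability), a step the paper passes over with ``and thus''.
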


\begin{proof}
For the implication to the right, assume that $\bfi (x) \leq_{\overline{\tau}} \bfi (y)$, and thus, $I^{-} (x) \subset I^{-} (y)$ and $I^{+} (y) \subset I^{+} (x)$. Take a past-directed chain $\{y_n\}$ defining $I^+(y)$. Then, $y \in \overline{I^{+} (y)} \subset \overline{I^{+} (x)} \subset \overline{J^{+} (x)}$. Analogously, we deduce that $x \in \overline{J^{-} (y)}$.
	
For the implication to the left, consider $y \in \overline{J^{+} (x)}$ and take $y' \in I^{+} (y)$. Since $I^-(y')$ is open and contains $y$, it intersects $J^+(x)$. Then, the push-up property ensures that $y'\in I^+(x)$, and thus, $I^{+} (y) \subset I^{+}(x)$. By a similar argument we deduce that $I^{-} (x) \subset I^{-} (y)$, and so, $\bfi (x) \leq_{\overline{\tau}} \bfi (y)$. 
\end{proof}

\begin{proposition}\label{prop:causallysimple}
	A full, separable and chronologically dense Lorentzian metric space $(X,\sigma,\tau)$ that satisfies the $S$-property is causally simple if and only if the causal relation $\leq_{\overline{\tau}}$ on $\barX$ (recall the paragraph below Definition \ref{xuxu}) restricted to $\bfi (X)$ coincides with the causal relation $\leq_{\tau}$ on $X$.
\end{proposition}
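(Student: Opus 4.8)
The plan is to read both sides of the equivalence through Lemma~\ref{lema:relcausalJ}, which is the decisive ingredient: it rewrites the completion relation $\bfi(x)\leq_{\overline{\tau}}\bfi(y)$ entirely in terms of the topological closures of the causal cones of $X$, namely $y\in\overline{J^{+}(x)}$ and $x\in\overline{J^{-}(y)}$, whereas $\leq_{\tau}$ is by definition the pair of chronological inclusions $I^{-}(x)\subseteq I^{-}(y)$, $I^{+}(y)\subseteq I^{+}(x)$. Before splitting into the two implications I would record the inclusion $\leq\,\subseteq\,\leq_{\tau}$, which always holds: if $x\leq y$, then the push-up property (Proposition~\ref{thm:main:5}) gives $I^{+}(y)\subseteq I^{+}(x)$ and $I^{-}(x)\subseteq I^{-}(y)$. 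Thus the real content of the statement is to decide when the reverse inclusion $\leq_{\tau}\subseteq\,\leq$ holds, and this is exactly what closedness of the cones will control.

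For the implication causally simple $\Rightarrow$ coincidence, I would assume $J^{\pm}(x)$ closed for every $x$. Taking $x\leq_{\tau}y$, one direction of Lemma~\ref{lema:relcausalJ} already forces $y\in\overline{J^{+}(x)}$ and $x\in\overline{J^{-}(y)}$; closedness then upgrades these to $y\in J^{+}(x)$, i.e. $x\leq y$. Hence $\leq_{\tau}\subseteq\,\leq$, so $\leq\,=\,\leq_{\tau}$, and substituting $J^{\pm}(x)=\overline{J^{\pm}(x)}$ back into Lemma~\ref{lema:relcausalJ} yields $\bfi(x)\leq_{\overline{\tau}}\bfi(y)\Leftrightarrow x\leq_{\tau}y$, which is precisely the asserted coincidence of $\leq_{\overline{\tau}}|_{\bfi(X)}$ with $\leq_{\tau}$.

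For the converse I would first dispose of causality (antisymmetry): the coincidence identifies the restricted relation with $\leq_{\tau}$, and since the S-property renders $\bfi$ injective, $x\leq_{\tau}y\leq_{\tau}x$ forces $I^{\pm}(x)=I^{\pm}(y)$, i.e. $\bfi(x)=\bfi(y)$, whence $x=y$. The substance is then to prove that each $J^{+}(x)$ (and dually $J^{-}(x)$) is closed. Fixing $y\in\overline{J^{+}(x)}$, closedness reduces, via the coincidence and Lemma~\ref{lema:relcausalJ}, to establishing $x\leq_{\tau}y$, i.e. to the two inclusions $I^{+}(y)\subseteq I^{+}(x)$ and $I^{-}(x)\subseteq I^{-}(y)$. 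The future inclusion is the routine one: for $z\in I^{+}(y)$, Proposition~\ref{thm:main:1} supplies chronologically related neighbourhoods of $y$ and $z$, and the neighbourhood of $y$ meets $J^{+}(x)$ because $y\in\overline{J^{+}(x)}$, forcing $z\in I^{+}(x)$.

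The hard part, and the step I expect to be the genuine obstacle, is the past inclusion $I^{-}(x)\subseteq I^{-}(y)$, because chronological relations do not survive topological limits through lower semicontinuity alone (it estimates the wrong way). My plan is to argue by contradiction: a witness $w\ll x$ with $w\not\ll y$ yields, combining $I^{+}(y)\subseteq I^{+}(x)$ with $w\ll x$, that $I^{+}(y)\subseteq I^{+}(w)$ while $y\in\overline{I^{+}(w)}$. I would then run this configuration against the maximality encoded in the S-property: $\bfi(y)=(I^{-}(y),I^{+}(y))$ is S-related, with $I^{-}(y)$ a maximal IP into $\downarrow I^{+}(y)$, and chronological density provides chains converging to $y$; the goal is to enlarge $I^{-}(y)$, or to produce a second PIF S-related to it, contradicting the S-property, in the same spirit as the maximality arguments of Propositions~\ref{prop:main:2} and~\ref{krut}. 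This is exactly where fullness, separability, chronological density and, above all, the S-property must all be used, and closing this gap is the crux of the proof.
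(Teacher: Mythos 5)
Your reduction of both directions to Lemma~\ref{lema:relcausalJ} is exactly the paper's strategy, and your proof of ``causally simple $\Rightarrow$ coincidence'' is essentially the paper's: $x\leq_\tau y$ gives $\bfi(x)\leq_{\overline{\tau}}\bfi(y)$ via push-up, and conversely Lemma~\ref{lema:relcausalJ} together with the closedness of $J^{\pm}$ lets you drop the closure bars. (Your preliminary remark on antisymmetry in the converse direction addresses a point the paper leaves implicit.)

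The gap is where you say it is, and it is genuine in the form in which you set the problem up: you try to establish that $J^{+}(x)$ is closed starting from the one-sided hypothesis $y\in\overline{J^{+}(x)}$, and then you cannot produce the past inclusion $I^{-}(x)\subseteq I^{-}(y)$; the contradiction you sketch via the S-property is not carried out. The paper never confronts this difficulty, because in its proof of the converse it takes as hypothesis the \emph{two-sided} condition ``$y\in\overline{J^{+}(x)}$ \emph{and} $x\in\overline{J^{-}(y)}$'', which is precisely the condition appearing in Lemma~\ref{lema:relcausalJ}. Under that hypothesis the future inclusion $I^{+}(y)\subseteq I^{+}(x)$ comes from $y\in\overline{J^{+}(x)}$ by your ``routine'' open-neighbourhood/push-up argument, and the past inclusion $I^{-}(x)\subseteq I^{-}(y)$ comes, by the symmetric argument, from the second hypothesis $x\in\overline{J^{-}(y)}$ --- no maximality or S-property argument is required. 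Lemma~\ref{lema:relcausalJ} then yields $\bfi(x)\leq_{\overline{\tau}}\bfi(y)$, the assumed coincidence yields $x\leq_\tau y$, and the paper concludes that $J^{+}(x)$ and $J^{-}(y)$ are closed. So to match the paper you should read the closedness of the causal cones through the symmetric condition of the lemma rather than through $y\in\overline{J^{+}(x)}$ alone; the one-sided statement you are after is strictly stronger than what the paper's two-line argument delivers, and your proposal as written does not close it.
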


\begin{proof}
	For the impliction to the right, assume that $X$ is causally simple. Since $\leq_{\tau}$ satisfies the push-up property, then $x \leq_{\tau} y$ implies $\bfi (x) \leq_{\overline{\tau}} \bfi (y)$. For the converse, assume that $\bfi(x)\leq_{\overline{\tau}}\bfi(y)$. From Lemma \ref{lema:relcausalJ}, $x \in \overline{J^{-} (y)} = J^{-} (y)$ and $y \in \overline{J^{-} (x)} = J^{-} (x)$. Hence, $x\leq_{\tau} y$.  
	
	For the implication to the left, assume that $y \in \overline{J^{+} (x)}$ and $x \in \overline{J^{-} (y)}$. By Lemma \ref{lema:relcausalJ}, $\bfi (x) \leq_{\overline{\tau}} \bfi (y)$, and by hypothesis, $x \leq_{\tau} y$, that is, $y \in J^{+} (x)$ and $x \in J^{-} (y)$. Therefore, $J^{+} (x)$ and $J^{-} (y)$ are closed, and consequently, $X$ is causally simple.
\end{proof}

\section*{Acknowledgements}
The first author is partially supported by the project PID2020-116126GB-I00 (funded by MCIN/AEI/10.13039/501100011033)
and the IMAG-María de Maeztu grant CEX2020-001105-M (funded by MCIN/AEI/10.13039/50110001103). The second and third authors are partially supported by the Grants PID2020-118452GBI00 and PID2021-126217NBI00 (Spanish MICINN), resp. These last two authors are also partially supported by the project PY20-01391 (PAIDI 2020, Junta de Andalucía-FEDER), and they would like to acknowledge the kind hospitality of the Erwin Schrödinger International Institute for Mathematics and Physics (ESI), where this research were concluded.

\bibliographystyle{plain}

\end{document}